\documentclass[11pt]{article}
\usepackage{amsmath,amsfonts,amsthm,amssymb,verbatim}
\usepackage{epsfig}

\numberwithin{equation}{section}

\newcommand{\beq}{\begin{equation}}
\newcommand{\rd}{{\rm d}}
\newcommand{\ann}{a^{\hphantom{+}}}
\newcommand{\cre}{a^{\dagger}}
\newcommand{\wL}{{\widetilde L}}

\newcommand{\wh}{\widehat}

\newcommand{\eeq}{\end{equation}}
\newcommand{\beqa}{\begin{eqnarray}}
\newcommand{\eeqa}{\end{eqnarray}}
\newcommand\Z{{\mathbb Z}}
\newcommand\N{{\mathbb N}}

\newcommand\R{{\mathbb R}}
\newcommand\eps{\varepsilon}

\newcommand\rhov\varrho

\newcommand\rmor{ \,\,\,{\rm or}\,\,\, }
\newcommand\rmfor{ \,\,\,{\rm for}\,\,\, }
\newcommand\rmand{ \,\,\,{\rm and}\,\,\, }

\newcommand\const{{\rm const.\, }}
\newcommand\Tr{{\rm Tr}}

\newcommand\al{{\alpha}}

\newcommand\be{\beta}

\newcommand\x{{\rm x}}

\newcommand\rhoc{\rho_{\,c}}

\renewcommand\kappa\varkappa
\renewcommand\rho\varrho

\newtheorem{thm}{THEOREM}
\newtheorem{prop}{Proposition }
\newtheorem{lem}{Lemma}
\newtheorem{cor}{COROLLARY}
\newtheorem{mydef}{DEFINITION}
\theoremstyle{definition}

\begin{document}

\title{\bf{  Free Energies of Dilute Bose gases: Upper bound}}
\author{Jun Yin }
\maketitle

\begin{abstract}
We derive an upper bound on the free energy of a Bose gas at density $\rho$ and temperature $T$. In combination with the lower bound derived previously by
Seiringer \cite{RS1}, our result proves that in the low density limit, i.e., when $a^3\rho\ll 1$, where $a$ denotes the scattering length of the
pair-interaction potential,  the leading term of  $\Delta f$, the free energy difference per volume between interacting and ideal Bose gases, is equal to $4\pi a (2\rho^2-[\rho-\rhoc]^2_+)$. Here, $\rhoc(T)$ denotes the critical density for Bose-Einstein condensation
(for the ideal Bose gas), and $[\cdot ]_+$ $=$ $\max\{ \cdot , 0\}$ denotes the positive
part. 
\end{abstract}

\newpage 
 
\section{Introduction}
The ground state energy and the free energy are the fundamental properties of a quantum system and 
they have been intensively studied since the invention of the quantum mechanics. 
The recent progresses  in experiments on 
 Bose-Einstein condensation, especially the achievement of Bose-Einstein condensation in dilute gases of alkali atoms in 1995 \cite{exBCE}, have inspired re-examination of the theoretic foundation  
 concerning the Bose system, e.g., \cite{LY}, \cite{LS1}, \cite{LS2}, \cite{JOL}, \cite{JOLY} \cite{ESY}, \cite{Y}, \cite{GS} and \cite{YY} on ground state energy  and \cite{RS1} on free energy.    
 \par In the low density limit, the leading term of the ground state energy per volume 
 was identified  rigorously by   Dyson (upper
bound) \cite{D} and Lieb-Yngvason (lower bound) \cite{LY} to be $4\pi a \varrho^2$, 
where $a$ is the scattering length of the two-body
potential and $\varrho$ is the density. We note that $4\pi a\rho^2$ is also the first leading term of $\Delta E$, the ground state energy difference per volume between interacting and ideal Bose gases.(The ground state energy per volume of the ideal Bose gas is zero).  
\par On the other hand, the first leading term of $\Delta f$,  the free energy difference  between interacting and ideal Bose gases, is the second leading order term of the free energy per volume $f$. More specifically, if $a^3\rho\ll 1$, where $a$ denotes the scattering length of the
pair-interaction potential,  then  
\beq\label{int1}
f(\rho, T)=f_0(\rho,T)+4\pi a (2\rho^2-[\rho-\rhoc]^2_+)+o(a\rho^2)
\eeq 
Here, $f$ is the free energy per volume of the interacting Bose gas, $f_0$ is the one of the ideal Bose gas, $\rhoc(T)$ denotes the critical density for Bose-Einstein condensation (for the ideal gas), and $[\cdot ]_+$ $=$ $\max\{ \cdot , 0\}$ denotes the positive
part.  The lower bound on $f$ has been proved in Seiringer's work \cite{RS1}
. In this paper, we prove the upper bound on $f$ and obtain the main result \eqref{int1} 
\par The trial state  we use in this proof is of a new type, which was  first used in   \cite{YY}
. Let $\phi_0$ be the ground state of the ideal Bose gas.  In \cite{YY}, we constructed a trial state (pure state) for interacting Bose gases which is obtained by slightly modifying a state of the following form,   
\beq\label{trialstate00}
 \exp \Big  [   \sum_{k\sim 1} \sum_{v \sim \sqrt \rho }  2 \sqrt { \lambda_{k+v/2}  \lambda_{-k+v/2}} \cre_{k+v/2}\cre_{-k+v/2}\ann_v\ann_0    +   \sum_k  {c_k}  \cre_k\cre_{-k}\ann_0\ann_0   \Big  ] |\phi_0\rangle,
\eeq
(with suitably chosen $c$ and $\lambda$). Here the notation $ A\sim B$ means that $A$ and $B$ have the same order. The expression of \eqref{trialstate00} is simple but it is hard to use itself for our calculation in   \cite{YY}. If one tried to write \eqref{trialstate00} with  the occupation-number representation as (for  calculting interaction energies)
\beq
\sum _{\al }f_\al|\al\rangle,
\eeq
he will see that it is very hard to  calculate $f_\al$'s. Therefore in  \cite{YY}, we constructed a trial state $\sum _{\al }\tilde f_\al|\al\rangle$ by defining $\tilde f_\al$ directly. The $\tilde f_\al$'s have many properties, which have no physical meaning but can simplify our proof. E.g. if the state $|\al\rangle $ contains a particle with extremely high momentum, then $\tilde f_\al=0$.   Furthermore, the trial state $\sum _{\al }\tilde f_\al|\al\rangle$ is very close to \eqref{trialstate00} i.e., for some $c>0$,
\beq
\sum _{\al }|f_\al-\tilde f_\al|^2 \langle \al|\al\rangle\ll \rho^{c}.
\eeq 
This basic idea will be used again in this paper. 

\par This trial state (pure state) in \cite{YY} is used to rigorously prove the upper bound of the second order
correction to the ground state energy, which  was first computed by Lee-Yang \cite{LYang} (see also Lee-Huang-Yang \cite{LHY} and  the recent paper 
by Yang \cite{RY} for results in other dimensions.  Another derivation 
was later given  by Lieb \cite{L} using 
a self-consistent closure assumption  for the hierarchy of correlation functions.)
\par We can rewrite the pure state \eqref{trialstate00} as follows 
\beq
\eqref{trialstate00}=P_{(0,0)} P_{(0,\sqrt\rho)}|\phi_0\rangle
\eeq 
where
\beqa
P_{(0,0)}&&=\exp\left[\sum_{k\sim 1}  {c_k}  \cre_k\cre_{-k}\ann_0\ann_0\right]\\\nonumber
P_{(0,\sqrt\rho)}&& = \exp\left[\sum_{k\sim 1} \sum_{v \sim \sqrt \rho }2 \sqrt { \lambda_{k+v/2}  \lambda_{-k+v/2}} \cre_{k+v/2}\cre_{-k+v/2}\ann_v\ann_0\right]
\eeqa
We note: $P_{(0,0)}$ represents the interactions between condensate and condensate, since in the operator $\cre_k\cre_{-k}\ann_0\ann_0$ two particles with momenta zero are annihilated $(\ann_0\ann_0)$ and two particles with high momentum are  created $(  \cre_k\cre_{-k})$.  Similarly $P_{(0,\sqrt\rho)}$ represents the interaction between condensate and the particles with momentum  of order $\rho^{1/2}$, since in this operator one particle with momentum zero and one with momentum of order $\rho^{1/2}$ are annihilated $(\ann_v\ann_0)$ and two particles with high momenta are created.
\par In this paper, we construct a trial state of a similar form. More specifically, let $\Gamma_I$ be Gibbs state of the ideal Bose gas at temperature $T$. The trial state we are going to use is very close to
\beq\label{fakets}
\Gamma\sim\left(P_{(\rho^{1/3},\rho^{1/3})}P_{(0,\rho^{1/3})}P_{(0,0)}\right)\Gamma_I \left(P_{(\rho^{1/3},\rho^{1/3})}P_{(0,\rho^{1/3})}P_{(0,0)}\right)^\dagger
\eeq
where 
\beqa\label{defP0rho}
P_{(0,0)}&&=\exp\left[\sum_{k\sim 1}  {c_k}  \cre_k\cre_{-k}\ann_0\ann_0\right]\\\nonumber
P_{(0,\rho^{1/3})}&& = \exp\left[\sum_{k\sim 1} \sum_{v \sim \rho^{1/3} }2 \sqrt { \lambda_{k+v/2}  \lambda_{-k+v/2}} \cre_{k+v/2}\cre_{-k+v/2}\ann_v\ann_0\right]\\\nonumber
P_{(\rho^{1/3},\rho^{1/3})}&& = \exp\left[\sum_{k\sim 1} \sum_{u\neq v \sim \rho^{1/3} } \sqrt { \lambda_{k+\frac{v+u}2}  \lambda_{-k+\frac{v+u}2}} \cre_{k+\frac{v+u}2}\cre_{-k+\frac{v+u}2}\ann_v\ann_u\right]
\eeqa
where the constant $2$ comes from the ordering of 
$\ann_v \ann_0$. As one can see, $P_{(0,0)}$ represents the interactions between condensate and condensate, $P_{(0,\rho^{1/3})}$ represents the interaction between condensate and the particles with momentum of order $\rho^{1/3}$, and $P_{(\rho^{1/3},\rho^{1/3})}$ represents the interaction  between the particles with momentum of order $\rho^{1/3}$.  

\section {Model and Main results}
\subsection{Hamiltonian and Notations}
We consider a Bose gas which is composed of $N$ identical bosons confined to a cubic box $\Lambda$ of side length $L$. 
The Hilbert space $\mathcal H_{N,\,\Lambda}$ for the system is the set of symmetric functions in $L^2(\Lambda^N)$. The Hamiltonian is given as
\beq\label{defH}
H_{N,\Lambda}=-\sum_{i=1}^N\Delta_i+\sum_{1\leq i< j\leq N}V(x_i-x_j)
\eeq
Here $x_i\in \Lambda$ ($1\leq i\leq N$) is the position of $i$th particle.   The two body interaction is given by  a spherically symmetric  non-negative function $V$, such that $\|V\|_\infty<\infty$, as in \cite{YY} and \cite{ESY}.    In the proof on the lower bound of the free energy, \cite{RS1}, the $V$ is assumed to have a finite range $R_0$, i.e., $V(r) = 0$ for $r > R_0$. Therefore we will also use  this assumption in this paper.  In particular, it has a finite
scattering length, which we denote by $a$. 
\par  We note that the interaction only depends on the distance between the particles.  
As usually, we denote by  $H_{N,\,\Lambda}^P$ ($H_{N,\,\Lambda}^D$) the Hamiltonians with periodic 
(Dirichlet) boundary conditions (Here $x_i-x_j$ in \eqref{defH} is really the distance on the torus in the periodic case). 
\par In periodic case, we can also write Hamiltonian with creation and annihilation operators as follows. The dual space of $\Lambda$ is 
$\Lambda^*:= (\frac{2\pi}{L} \Z)^3$. 
For a continuous function $F$ on $\R^3$, we have
\beq
    \frac{1}{L^3} \sum_{p\in\Lambda^*} F(p) =
   \frac{1}{|\Lambda|} \sum_{p\in\Lambda^*} F(p)
  \stackrel{|\Lambda|\to \infty}{\longrightarrow} \int_{\R^3} 
\frac{{\rm d}^3 p}{(2\pi)^3} F(p)
\eeq

  The Fourier
transform is defined as
$$
     \wh V_p = \int_\Lambda e^{-ipx} V(x) \rd x, \qquad
     V(x) = \frac{1}{|\Lambda|} \sum_{p\in \Lambda^*} e^{ipx} \wh V_p $$ and then $$
     \frac{1}{|\Lambda|} \sum_{p\in \Lambda^*} e^{ipx} = \delta_{\R^3}(x), \qquad \int_\Lambda e^{ipx}\rd x = \delta_{\Lambda^*}(p) $$ where $\delta_{\R^3}(x)$ is the usual continuum delta function and the function 
$\delta_{\Lambda^*}(p)=
|\Lambda|= L^3$ if $p=0$ (otherwise it is zero) is the lattice delta-function.
We will neglect the subscript; the argument indicates whether it is the momentum or position space delta function. In general 
we will also neglect the hat in the Fourier
transform. To avoid confusion, we follow the convention that the variables $x,y,z$ etc denote position space, the variables $p,q,k,u,v$ etc.\ denote  momentum space. 
We also simplify the notation
$$
\sum_p:=   \sum_{p\in \Lambda^*}
$$
i.e.\ momentum summation is always over  $\Lambda^*$.
We will use the bosonic operators with the commutator relations
$$
      [\ann_p, \cre_q] = \ann_p \cre_q -\cre_q\ann_p
     = \left\{
\begin{array}{ll}
  1 & \mbox{ if } p=q \\ 
0 & \mbox{ otherwise.}
\end{array} 
\right.
$$
Thus our   Hamiltonian in the Fock space $\mathcal F_\Lambda=\oplus_{N}\mathcal H_{N,\Lambda}$ is given by
\beq
      H_\Lambda ^P= 
\sum_p p^2 \cre_p\ann_p + \frac{1}{|\Lambda|} \sum_{p,q,u}
     \frac{\wh V_u}2 \cre_p\cre_q\ann_{p-u}\ann_{q+u},
\label{ham}
\eeq
\subsection{Free energy}
The free energy per unit volume of the system at  temperature
$ T=\beta^{-1} > 0$ and density $\rho=N/|\Lambda|> 0$ in the cubic box $\Lambda$ is defined as
\beq\label{deff}
f(\rho\,, \Lambda, \beta) \equiv -\frac1{|\Lambda|\beta}\ln\left(\Tr_ {\mathcal H_{N,\Lambda}} Exp (-\beta H_{N,\,\Lambda})\right),
\eeq
Let $f^P(\rho\,, \Lambda, \beta)$ and $f^D(\rho\,, \Lambda, \beta)$ denote the free energy per unit volume of the system with periodic or Dirichlet boundary conditions.
Furthermore, we denote by $f(\rho,\beta)$ the free energy (per unit volume) in the thermodynamic limit, i.e., $|\Lambda|$, $N\to \infty$ with
$\rho= N/|\Lambda|$ fixed, i.e., 
\beq\label{deffth}
f^{P(D)}(\rho\,, \beta)\equiv \lim_{|\Lambda|\to\infty}f^{P(D)}(\rho\,, \Lambda, \beta)
\eeq
As mentioned in the introduction, in this paper we give an upper bound on the leading order correction of $f(\rho\,,\beta)$, compared with an
ideal gas, in the case that $a^3\rho$ is small and  $\beta\rho^{2/3}$ is order one. We note that $a^3\rho$ and $\beta\rho^{2/3}$ are dimensionless quantities.
\subsection {Ideal Bose gas in the Thermodynamic Limit}
In this section, we review some well known results on ideal Bose gases. In the case of vanishing interaction potential ($V=0$), the free energy per unit volume in the thermodynamic limit can be
evaluated explicitly. Let $\zeta$ denote the Riemann zeta function.  It is well known that when $\rho^{2/3}\beta\geq (4\pi)^{-1}\zeta(3/2)^{2/3}$, i.e., $\rho$ is greater than critical density $\rho_c$,
\beq\label{rhoc}
\rho\geq\rho_{\,c}\equiv  (4\pi\beta)^{-3/2}\zeta(3/2)
\eeq
the free energy in the thermodynamic limit is given as 
\beq\label{deff01}
f^{D(P)}_0(\rho, \beta) =\frac{1}{(2\pi)^3\beta}\int_{\R^3}\ln(1-e^{-\beta p^2})d^3p
\eeq
\par On the other hand, when $\rho\leq \rhoc$,
\beq\label{deff02}
f^{D(P)}_0(\rho, \beta) =\rho\, \mu +\frac{1}{(2\pi)^3\beta}\int_{\R^3}\ln(1-e^{-\beta (p^2-\mu)})d^3p
\eeq 
Here $\mu(\rho,\beta)<0$ is determined by
\beq\label{defmu}
\rho=\frac{1}{(2\pi)^3}\int_{\R^3}\frac{1}{e^{\beta (p^2-\mu)}-1}d^{\,3}p
\eeq
Note: when $\rho\geq\rho_c$, $\mu(\rho,\be)$ is defined as zero. 
\par It is easy to see  the scaling relation: $$f^{D(P)}_0(\rho, \beta) = \rho^{5/3}f^{D(P)}_0( 1, \rho^{2/3}\beta)$$ and the ration $\rhoc/\rho$ only depends on dimensionless quantity $\rho^{2/3}\beta$, i.e., 
\beq
\rhoc/\rho=(4\pi)^{-3/2}\zeta(3/2)(\rho^{2/3}\beta)^{-3/2}
\eeq 
Let $\beta(\rho)$ be a function of  $\rho$, we define $R[\beta]$ as the ratio $\rhoc/\rho$ in the limit $\rho\to0$, i.e., 
\beq\label{defRbeta}
R[\beta]\equiv\lim_{\rho\to 0}\rho_c(\beta)/\rho=\lim_{\rho\to 0}(4\pi)^{-3/2}\zeta(3/2)\left(\rho^{2/3}\beta(\rho)\right)^{-3/2}
\eeq
\subsection{Scattering length}
In this paper, we use the standard definition of scattering length, as in   \cite{LY}, \cite{JOL}, \cite{ESY}, \cite{Y}, \cite{GS}, \cite{YY}, \cite{RS1}. Let $1-w$ be the zero energy scattering solution, i.e., 
\beq\label{zess}
    -\Delta(1-w) +  \frac12 V(1-w)=0
\eeq
with $0\leq w<1$ and $w(x)\to 0$ as $|x|\to \infty$.
Then the scattering length is given by the formula 
\beq\label{defa}
   a: =\frac{1}{4\pi}\int_{\R^3} \frac12 V(x)(1-w(x))\rd x
\eeq
With \eqref{zess}, we have, for $p\neq 0$, 
\beq\label{defw}
w_p=\left[\frac12 V(1-w)\right]_p|p|^{-2},
\eeq
Because $V(1-w)\geq 0$, so for $\forall p$, 
$$\left|\left[V(1-w)\right]_p\right|\leq \int V(1-w).$$
 Then with \eqref{defa}, i.e., 
$\int \frac12 V(1-w)$ is equal to $4\pi a$,  we obtain the following bound on $w_p$
\beq\label{boundwp}
\left|w_p\right|\leq 4\pi a |p|^{-2}
\eeq
Furthermore, when $V$ is $C^\infty$ function with compact support, one can easily prove that  
\beq\label{derwp}
\left|\frac{dw_p}{dp}\right|\leq \const \left(|p|^{-3} +|p|^{-2}\right)
\eeq
Here the constant only depends on $a$ and $R_0$. 
\subsection{Main results}
\begin{thm}\label{mainthm}  Let $V(x)\geq 0$ be a bounded, piecewise countinous function with compact support.   In the  temperature region where  $\lim_{\rho\to 0}\rho^{2/3}\beta(\rho)\in(0,\infty)$ and in the thermodynamic limit,  we have the following upper bound on  the free energy difference per volume between the interacting Bose gas $f^{D}(\rho, \beta)$ and the ideal Bose gas $f_0^{D}(\rho, \beta)$:
\beq\label{main1}
\overline\lim_{\rho\to 0}\left(f^{D}(\rho, \beta)-f^{D}_{\,0}(\rho,\beta)\right)\rho^{-2}\leq 4\pi a(2-[1-R[\beta]\,]^2_+),
\eeq
where $R[\beta]$ is defined in \eqref{defRbeta} as the ratio $\rhoc/\rho$ in the limit $\rho\to0$, and $a$ is the scattering length of $V$. 
\end{thm}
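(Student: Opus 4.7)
The plan is to apply the Gibbs variational principle: for any trial density matrix $\Gamma$ on $\mathcal H_{N,\Lambda}$ with $N=\rho|\Lambda|$,
\[
|\Lambda|\,f^D(\rho,\Lambda,\beta)\le \Tr\bigl[H_{N,\Lambda}^D\,\Gamma\bigr]+\beta^{-1}\Tr\bigl[\Gamma\ln\Gamma\bigr],
\]
so it suffices to produce a $\Gamma$ whose right hand side per volume matches $f_0^D(\rho,\beta)+4\pi a(2\rho^2-[\rho-\rhoc]_+^2)+o(a\rho^2)$ as $\rho\to 0$. I would first reduce to a periodic-box problem by the standard corridor construction: tile a larger torus with Dirichlet boxes of side $\ell\sim\rho^{-1/3+\delta}$ separated by thin empty corridors, which upgrades a periodic trial state to a Dirichlet one with boundary loss $o(a\rho^2)$ per unit volume. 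This lets me work entirely with the Fock-space Hamiltonian \eqref{ham}.

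Next, I would take a trial state close to the one in \eqref{fakets}--\eqref{defP0rho}, $\Gamma\sim P\,\Gamma_I\,P^\dagger$ with $P=P_{(\rho^{1/3},\rho^{1/3})}P_{(0,\rho^{1/3})}P_{(0,0)}$, where $\Gamma_I$ is the ideal Gibbs state at chemical potential $\mu(\rho,\beta)$ (projected to the canonical sector). I would take $c_k=\lambda_k=w_k$ from the scattering equation \eqref{defw}, truncated to a momentum window $\rho^{1/2+\delta}\lesssim |k|\lesssim \rho^{-1/3-\delta}$, so as to sit safely above the thermal scale $\rho^{1/3}$ and below the ultraviolet scale where \eqref{derwp} blows up. Since honest conjugation $P\Gamma_I P^\dagger$ produces Fock matrix elements that are intractable, I would follow the device introduced in \cite{YY} and specify the matrix elements $\tilde g_{\alpha\beta}$ of $\Gamma$ directly, inserting cutoffs that vanish on configurations with any single-mode occupation above a threshold or a particle of extreme momentum. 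One then has to verify positivity, $\Tr\Gamma=1+o(1)$, and closeness to the honest conjugated state in a norm appropriate for the energy and entropy estimates.

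The energy computation mirrors \cite{YY}, but with the Bose occupation $n_k=(e^{\beta(k^2-\mu)}-1)^{-1}$ replacing $\delta_{k=0}$ and with the condensate density $\rho_0=[\rho-\rhoc]_+$ and thermal density $\rho_+=\rho-\rho_0$ playing complementary roles. The scattering equation \eqref{zess} completes squares so that, at leading order, $P_{(0,0)}$ contributes $4\pi a\rho_0^2$ (condensate--condensate), $P_{(0,\rho^{1/3})}$ contributes $16\pi a\rho_0\rho_+$ (condensate--thermal, direct plus exchange), and $P_{(\rho^{1/3},\rho^{1/3})}$ contributes $8\pi a\rho_+^2$ (thermal--thermal); summing yields $4\pi a(\rho_0^2+4\rho_0\rho_+ + 2\rho_+^2)=4\pi a(2\rho^2-\rho_0^2)$, which is the claimed leading coefficient. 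Using \eqref{boundwp}--\eqref{derwp}, I would bound the cross terms between the three exponentials, the higher-order terms in the expansion of $\exp$, and the momentum-cutoff errors all by $o(a\rho^2)$ per volume.

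The main obstacle is the entropy. Because the $P$'s are not unitary, $-\Tr[\Gamma\ln\Gamma]$ need not equal $-\Tr[\Gamma_I\ln\Gamma_I]$, and one must control the gap to order $o(a\rho^2)|\Lambda|$. I would bound it via Klein's inequality combined with a cumulant expansion around $\Gamma_I$ in the small operators $c_k,\lambda_k$, using $\sum_k w_k^2\lesssim a^2\rho^{-\alpha}$ over the selected window for a suitable exponent $\alpha<1$ to make the entropy correction subdominant to $a\rho^2|\Lambda|$. Coordinating this expansion with the Fock-coefficient cutoff that made the energy estimate work --- so that the cutoff does not inflate the entropy or destroy positivity of $\tilde g$ --- is the new difficulty in upgrading the ground-state argument of \cite{YY} to finite temperature, and is where I expect the proof to be most delicate.
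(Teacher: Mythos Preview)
Your overall strategy is right, but two things would fail as written. First, your box size $\ell\sim\rho^{-1/3+\delta}$ is far too small. The cost of converting periodic to Dirichlet by corridor localization is of order $\rho/(b\,\ell)$ per unit volume (with $b\ll\ell$ the corridor width), and for this to be $o(\rho^2)$ you need $b\,\ell\gg\rho^{-1}$, hence $\ell\gg\rho^{-2/3}$ at the very least. The paper makes this explicit (see the remark after Proposition~\ref{mainprop}): the box volume must be taken as $\rho^{-2-\eps}$, i.e.\ side $L=\rho^{-41/60}$, precisely because at volume $O(\rho^{-2})$ the Dirichlet boundary effect cannot be shown to be smaller than the $4\pi a\rho^2$ interaction term you are trying to isolate. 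At your scale the boundary error swamps the signal.

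Second, and more structurally, your plan to take $\Gamma\sim P\,\Gamma_I\,P^\dagger$ and control the entropy defect by Klein's inequality plus a cumulant expansion is not what the paper does, and I do not see how to make it work to the required precision. The paper's device is different: write the (cut-off) ideal Gibbs state as a \emph{diagonal} mixture $\Gamma_0=\sum_{\alpha\in M} g_\alpha|\alpha\rangle\langle\alpha|$ in occupation-number basis (Lemma~\ref{resultPsial}), then replace each product state $|\alpha\rangle$ by a dressed pure state $|\Psi_\alpha\rangle$ while keeping the \emph{same} weights $g_\alpha$, so that $\Gamma=\sum_\alpha g_\alpha|\Psi_\alpha\rangle\langle\Psi_\alpha|$. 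Because the weights are unchanged, the entropy comparison collapses to the single bound $S(\Gamma)\ge S(\Gamma_0)-\log\bigl\|\sum_\alpha|\Psi_\alpha\rangle\langle\Psi_\alpha|\bigr\|_\infty$ (Lemma~\ref{entropy}), and the operator norm is estimated purely combinatorially by counting how many $\widetilde M_\alpha$ a given Fock configuration can belong to. No expansion of $\Tr\Gamma\ln\Gamma$ is needed. The energy gain is then established $\alpha$-by-$\alpha$ (Theorem~\ref{mainlemma}): each $|\Psi_\alpha\rangle$ lowers the energy of $|\alpha\rangle$ by $(\tfrac12 V_0-4\pi a)N_\alpha/|\Lambda|$, with $N_\alpha$ in \eqref{defNal} encoding the condensate/thermal split for that particular occupation pattern; averaging over $g_\alpha$ via \eqref{resultPsialequ2} then produces the coefficient $2-[1-R[\beta]]_+^2$. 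Your conjugation scheme mixes the eigenvalues and eigenvectors of $\Gamma_I$ simultaneously, and a cumulant control of the relative entropy to $o(\rho^2)|\Lambda|$ would require bounds on high moments of the pair-creation operators against the thermal state that you have not supplied. (A minor point: your momentum window $\rho^{1/2+\delta}\lesssim|k|$ is also off; the created pairs live at $|k|$ of order one, the region $P_H$ in Definition~\ref{def1}.)
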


\bigskip 
\par It is well known that the effect of boundary conditions for free particles in the thermodynamic limit is negligible, i.e., 
\beq
f_0(\rho, \beta)\equiv f_0^{D}(\rho, \beta)=f_0^{P}(\rho, \beta)=f_0^{N}(\rho, \beta)=f_0^{R}(\rho, \beta)
\eeq
where $N$ denotes Neumann conditon and $R$ denotes Robin boundary condition: $\partial u/\partial \nu= -\alpha u$ (for some given constant $\alpha> 0$, with $\nu$
denoting the outward normal). 

On the other hand,  the proposition 2.3.5 and 2.3.7 of \cite{Rob} show that   
\beq
f^{D}(\rho, \beta)=f^{P}(\rho, \beta)=f^{N}(\rho, \beta)=f^{R}(\rho, \beta).
\eeq
 Therefore, with the results on lower bound  in Seiringer's work \cite{RS1}, we can obtain the following result.
\begin{cor} Under the assumption of Theorem \ref{mainthm}, in Dirichlet, perodic, Neumann and Robin boundary condition,  we have:
\beq\label{main2}
\lim_{\rho\to 0}\left(f^{P(N, D,R)}(\rho, \beta)-f_{\,0}(\rho,\beta)\right)\rho^{-2}= 4\pi a(2-[1-R[\beta]\,]^2_+),
\eeq
\end{cor}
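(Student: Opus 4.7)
The plan is to prove the corollary by sandwiching the matching upper and lower bounds on the leading correction to the free energy, and then invoking boundary-condition independence of the thermodynamic-limit free energy to promote the equality from one boundary condition to all four. Since the corollary is asserted as a consequence of Theorem~\ref{mainthm}, Seiringer's lower bound in \cite{RS1}, and the boundary-independence Propositions~2.3.5 and~2.3.7 of \cite{Rob}, the proof is essentially a bookkeeping argument with no new analytic input.

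First I would record the upper bound. Theorem~\ref{mainthm} supplies directly
$$
\overline\lim_{\rho\to 0}\bigl(f^{D}(\rho,\beta)-f_0^{D}(\rho,\beta)\bigr)\rho^{-2}\;\le\; 4\pi a\bigl(2-[1-R[\beta]]_+^2\bigr).
$$
Then I would invoke Seiringer's main result from \cite{RS1}, which under exactly the standing hypotheses on $V$ (non-negative, bounded, compactly supported, spherically symmetric, of finite range) and in the temperature regime $\beta\rho^{2/3}\in(0,\infty)$, $a^3\rho\to 0$, provides the matching lower bound
$$
\underline\lim_{\rho\to 0}\bigl(f(\rho,\beta)-f_0(\rho,\beta)\bigr)\rho^{-2}\;\ge\; 4\pi a\bigl(2-[1-R[\beta]]_+^2\bigr).
$$

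Next I would use boundary-condition independence on both the ideal and the interacting sides. For the ideal gas, $f_0^{D}=f_0^{P}=f_0^{N}=f_0^{R}=f_0$ is a standard Dirichlet--Neumann bracketing computation, already noted in the text preceding the corollary. For the interacting gas, Propositions~2.3.5 and~2.3.7 of \cite{Rob} give
$$
f^{D}(\rho,\beta)=f^{P}(\rho,\beta)=f^{N}(\rho,\beta)=f^{R}(\rho,\beta).
$$
Substituting these identifications, the Dirichlet upper bound becomes an upper bound for every boundary condition, and Seiringer's lower bound (which applies to the boundary-independent thermodynamic-limit free energy) becomes a lower bound for every boundary condition. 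The $\limsup$ and $\liminf$ then coincide, which is precisely \eqref{main2}.

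The only non-trivial step is Theorem~\ref{mainthm} itself, which is the content of this paper; the main obstacle in proving the corollary is therefore not in the corollary but in checking that the hypotheses of \cite{RS1} and \cite{Rob} really are compatible with those of Theorem~\ref{mainthm}. For \cite{RS1} one needs $V$ non-negative, bounded, and of finite range, all of which are assumed here; for \cite{Rob} one needs the interaction to be stable and regular in the sense that the thermodynamic limit exists and is independent of the boundary condition imposed, which again is standard for compactly supported $V\in L^\infty$. Once those compatibilities are verified, no further estimates are needed.
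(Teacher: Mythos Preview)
Your proposal is correct and follows essentially the same approach as the paper: the paper derives the corollary by combining the Dirichlet upper bound of Theorem~\ref{mainthm} with Seiringer's lower bound from \cite{RS1}, together with the boundary-condition independence of both $f_0$ and $f$ (the latter via Propositions~2.3.5 and~2.3.7 of \cite{Rob}). Your write-up is slightly more explicit about the $\limsup$/$\liminf$ sandwich and the hypothesis compatibility, but there is no substantive difference.
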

\section{Basic strategy }
\subsection{Reduction to Small Torus with Periodic Boundary Conditions}
\par To obtain the upper bound to the free energy, we can use the variational principle, which states that, for any state  $\Gamma^{D(P)}$ ($\mathcal H_N\to \mathcal H_N$) in the domain of $H_{N,\Lambda}^{D(P)}$ (we will omit these superscripts of $H$ since it will be clear from the context
what they are), the following inequality holds.
\begin{equation}\label{varp}
f^{D(P)}(\rho,\,\Lambda,\, \beta)\leq \frac1{|\Lambda|} \Tr_{{\cal H}_{N,\Lambda}}\, H_{N,\Lambda} \Gamma^{D(P)} - \frac
1{|\Lambda|\beta} S(\Gamma^{D(P)})
\end{equation}
Here, $S(\Gamma)=-\Tr\,
\Gamma\ln\Gamma$ denotes the von Neumann entropy. Hence, to prove Theorem \ref{mainthm}, one only needs to construct a trial states $\Gamma^{D}(\rho,\, \Lambda,\, \beta)$  satisfying   Dirichlet boundary condition and the following inequality:
\beqa\nonumber
&&\overline\lim_{\rho\to 0}\overline\lim_{|\Lambda|\to \infty}\left(\frac1{|\Lambda|}\Tr\, H_{N,\Lambda} \Gamma^{D} - \frac
1{|\Lambda|\beta} S(\Gamma^{D})-f^{D}_{\,0}(\rho,\beta)\right)\rho^{-2}\\
\leq&& 4\pi a(2-[1-R[\beta]\,]^2_+)
\eeqa 
Furthermore, the proper trial states in the thermodynamic limit ($\Lambda\to\infty$) can be constructed by duplicating the proper trial states in the \textit{small} boxes ($|\Lambda|=\rho^{-c}, c>2$) with Dirichlet boundary condition. (Let the distance between the adjacent small boxes be $R_0$. Therefore there is no interaction between different boxes.)   Hence, the following Proposition  \ref{mainprop} implies our main result, Theorem \ref{mainthm}.

Note: Late we will choose the volume of the small box as $\rho^{-2-\eps}$, where $\eps$ is a small positive number. As one can see that, when size of  the box is too small, the Dirichilet Boundary condition will affect (increase)  the (total) free energy. When the volume of the small box is $O(\rho^{-2})$, we noticed that we can not prove that the effect of   Dirichilet Boundary condition is much less than the effect of the interaction. Therefore, to study the effect of the interaction, we have to choose the volume of the small box as $\rho^{-2-\eps}$.

\begin{prop}\label{mainprop}
In the  temperature region where  $\lim_{\rho\to 0}\rho^{2/3}\beta(\rho)\in (0,\infty)$, for fixed scattering length $a$,  there exist $  \Lambda$ with   $|  \Lambda|\geq \rho^{\,-41/20}$ and  trial states $\Gamma^{D}( \rho,\,   \Lambda,\,  \beta)$ satisfying the Dirichlet boundary condition and the inequality (set $ N=  |\Lambda| \rho$)
\beqa\nonumber
&&\overline\lim_{ \rho\to 0}\left(\frac1{| \Lambda|}\Tr\, H_{N,\Lambda} \Gamma^{D} - \frac
1{| \Lambda|\beta} S(\Gamma^{D})-f^{D}_{\,0}( \rho, \beta)\right)\rho^{-2}\\\label{mainpropresult}
\leq&& 4\pi a(2-[1-R( \beta)\,]^2_+),
\eeqa
where $R( \beta)$ is defined  in \eqref{defRbeta}.
\end{prop}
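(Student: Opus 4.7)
The plan is to apply the variational inequality \eqref{varp} to an explicit trial state on the small box $\Lambda$ of volume $|\Lambda|=\rho^{-41/20}$. Following the outline in Section 1, I would take $\Gamma = \mathcal{N}^{-1} P\,\widetilde\Gamma_I\, P^\dagger$ with $P = P_{(\rho^{1/3},\rho^{1/3})}\, P_{(0,\rho^{1/3})}\, P_{(0,0)}$ as in \eqref{defP0rho}, $\widetilde\Gamma_I$ a truncated ideal Gibbs state (cutting off configurations with anomalously large individual occupation numbers or with at least one particle of momentum above some threshold $\rho^{-\al}$), and $\mathcal N$ the normalizing trace. The parameters $c_k$ (for $k\sim 1$) and $\lambda_k$ are kept free for later optimization. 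To pass from periodic to Dirichlet boundary conditions I multiply the one-particle modes by a smooth cutoff vanishing within distance $R_0$ of $\partial\Lambda$; since $|\Lambda|^{1/3}\gg R_0$, the resulting change in free energy is subleading in $|\Lambda|\rho^2$.

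The computation of $|\Lambda|^{-1}\Tr(H_{N,\Lambda}\Gamma) - (|\Lambda|\be)^{-1}S(\Gamma)$ naturally splits into three pieces. First, the entropy: because $P^\dagger P$ is close to the identity on the range of $\widetilde\Gamma_I$, conjugation by $P$ preserves eigenvalues up to a small error, so $S(\Gamma)$ equals $S(\widetilde\Gamma_I)$ up to $o(\rho^2)|\Lambda|$, which in turn equals the ideal-gas entropy $S(\Gamma_I)$ up to the tail mass removed by truncation. Second, the kinetic energy $\Tr(\sum_p p^2 \cre_p\ann_p\,\Gamma)$: I commute the kinetic operator through each $P$-factor, producing the ideal kinetic energy together with explicit quadratic forms in $c$ and $\lambda$ weighted by $p^2$ and by the ideal Bose occupation factors. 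Third, the interaction energy $\Tr(\frac{1}{2|\Lambda|}\sum \wh V_u \cre_p\cre_q\ann_{p-u}\ann_{q+u}\,\Gamma)$: expanding each annihilation operator past the $P$-factors via commutator identities produces linear terms in $c$ and $\lambda$ weighted by $\wh V$, plus a bare $\wh V_0$ piece proportional to $\rho^2$, plus higher-order pieces.

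The main technical obstacle will be controlling the proliferation of error terms that arise from expanding the exponentials in the $P$-factors: since $P$ is not a Bogoliubov transformation, Wick's theorem does not apply, and one must Taylor-expand $e^{\cre\cre\ann\ann}$ and track the resulting operator products. The truncations built into $\widetilde\Gamma_I$ are designed precisely to make each such commutation cheap by bounding the occupation numbers that appear; following the strategy of \cite{YY}, each extra factor in the expansion should cost at least a power $\rho^{c}$ for some $c>0$, so only finitely many terms contribute at leading order. The bound \eqref{derwp} on $dw_p/dp$ will be used to control the replacement of momentum sums over $\Lambda^*$ by momentum integrals over $\R^3$, with an error small compared with $\rho^2$ exactly in the regime $|\Lambda|\ge \rho^{-41/20}$.

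The last step is parameter optimization. Choosing $c_k$ so that it approximates $-\wh w_k$ with $w$ the zero-energy scattering solution of \eqref{zess}, the combination of kinetic and interaction contributions in the condensate-condensate channel collapses via \eqref{defa} to $4\pi a\,\rho_0^2$, where $\rho_0 = [\rho-\rhoc]_+$. The analogous Bogoliubov-style choice for $\lambda_k$ in the condensate-excitation and excitation-excitation channels generates the remaining thermal contributions that, together with the condensate piece, sum to $4\pi a(2\rho^2-[\rho-\rhoc]_+^2)$ modulo $o(\rho^2)$. I expect the hardest single bookkeeping task to be verifying that the cross-terms between the three non-commuting $P$-factors do not contribute at order $\rho^2$; this is where the cutoffs in $\widetilde\Gamma_I$ and the disjoint momentum scales $1,\rho^{1/3}$ should be exploited most carefully.
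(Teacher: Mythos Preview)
Your proposal conflates two very different layers of the argument. In the paper, Proposition~\ref{mainprop} itself is proved by a short reduction: one invokes Theorem~\ref{resultP} to get the periodic upper bound on the box $|\Lambda|=\rho^{-41/20}$, then Lemma~\ref{relationDP} to pass from periodic to Dirichlet on the slightly larger box $\Lambda^*$ at the slightly smaller density $\rho^*$, and finally observes that the ideal free energies and the quantity $R[\beta]$ change only by $o(\rho^{1/3})$ under the rescaling $\rho\mapsto\rho^*$. All of the real work you describe belongs to Theorem~\ref{resultP}, not to Proposition~\ref{mainprop}.

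More importantly, even as a sketch of Theorem~\ref{resultP} your plan has two genuine gaps. First, the entropy step: you assert that since $P^\dagger P$ is close to the identity, ``conjugation by $P$ preserves eigenvalues up to a small error.'' But $P$ is not unitary, and closeness of $P^\dagger P$ to $\mathrm{Id}$ does not by itself control the spectrum of $P\widetilde\Gamma_I P^\dagger$; one needs $P$ to be close to an isometry in a quantitative sense that is never established. The paper avoids this entirely: it writes $\Gamma=\sum_\alpha g_\alpha|\Psi_\alpha\rangle\langle\Psi_\alpha|$ with the \emph{same} weights $g_\alpha$ as in $\Gamma_0=\sum_\alpha g_\alpha|\alpha\rangle\langle\alpha|$, and bounds $S(\Gamma)\ge S(\Gamma_0)-\log\|\sum_\alpha|\Psi_\alpha\rangle\langle\Psi_\alpha|\|_\infty$ via operator concavity (Lemma~\ref{entropy}, Lemma~\ref{lastlemma}). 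The combinatorial control of that operator norm is what the box decomposition $B_L,B_H$ of Definition~\ref{definition3} is for.

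Second, you plan to ``Taylor-expand $e^{\cre\cre\ann\ann}$ and track the resulting operator products.'' The paper explicitly warns (around \eqref{trialstate00}) that the exponential form, while heuristically correct, is intractable for computing $f_\alpha$, and instead \emph{defines} the coefficients $f_\alpha(\beta)$ by the closed formula \eqref{deffbe} on a carefully restricted index set $M_\alpha$ (Definition~\ref{defmal}). The nontriviality condition on $P_L(\gamma,\alpha)$ and the $\pm k$ exclusions in $M_\alpha$ are precisely the bookkeeping devices that make Lemmas~\ref{f}--\ref{boundspsiBox} work; without them the cross-terms you worry about in your last paragraph cannot be shown to be $o(\rho^2)$.
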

\par Here the number $41/20$ in the assumption can be replaced with any number larger than 2. 
\par On the other hand, the next lemma shows that a Dirichlet boundary condition trial state with correct free energy can be obtained from a   
periodic trial state in a slightly smaller box. 

\begin{lem}\label{relationDP}
Let the volume $|\Lambda|$  be equal to $\rho^{-41/20}$. In the temperature region of theorem \ref{mainthm}, if  
\beq\label{roughfprlb}
f^{P}(\rho,\, \Lambda,\,\beta)\leq \const \rho^{5/3},
\eeq
then for the revised box $\Lambda^ *$ and density $\rho^*$, defined by
\beq\label{relationtilde}
| \Lambda^ *|\equiv |\Lambda|(1+2\rho^{41/120})^3,\,\,\, \rho^*\equiv\rho(1+2\rho^{41/120})^{-3},\,\,\,
\eeq we have $f^{D}(\rho^*,\, \Lambda^ *,\,\beta)$ bounded from above as follows
\beqa\label{temp3.6}
&&\overline\lim_{\rho\to 0}\left(f^{D}(\rho^*,\, \Lambda^ *,\,\beta)-f^{P}(\rho,\,\Lambda, \beta)\right)\rho^{-2}\leq  0
\eeqa
\end{lem}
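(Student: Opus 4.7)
I would use the variational principle \eqref{varp} and build a Dirichlet trial state $\Gamma^D$ on $\Lambda^*$ by spatially cutting off the periodic Gibbs state $\Gamma^P$ on $\Lambda$. Center $\Lambda$ inside $\Lambda^*$ so that the buffer $\Lambda^*\setminus\Lambda$ has thickness $\delta:=L\rho^{41/120}$ on each face, where $L:=|\Lambda|^{1/3}=\rho^{-41/60}$. Pick a smooth $\chi:\R^3\to[0,1]$ with $\chi\equiv 1$ on $\Lambda$, $\chi\equiv 0$ outside $\Lambda^*$, $|\nabla\chi|\leq C\delta^{-1}$, $|\Delta\chi|\leq C\delta^{-2}$. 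Diagonalize $\Gamma^P=\sum_j p_j|\psi_j\rangle\langle\psi_j|$, extend each $\psi_j$ periodically to $\R^{3N}$, and define $\tilde\psi_j:=\prod_{i=1}^N\chi(x_i)\,\psi_j^{per}\big|_{\Lambda^{*N}}$ together with the trial state $\Gamma^D:=\sum_j p_j|\tilde\psi_j\rangle\langle\tilde\psi_j|/\|\tilde\psi_j\|^2$. Because $\chi$ vanishes on $\partial\Lambda^*$, $\Gamma^D$ satisfies the Dirichlet boundary condition and carries $N$ particles, so its density is $\rho^*=N/|\Lambda^*|$ as required.

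\textbf{Energy estimates.} Integration by parts yields $\int|\nabla_i(\chi(x_i)\psi)|^2=\int\chi^2|\nabla_i\psi|^2-\int\chi\Delta\chi\,|\psi|^2$, so the cutoff overhead to the kinetic energy is bounded by $\|\Delta\chi\|_\infty$ times the expected particle mass in the buffer. Since the one-body density of the periodically extended state is $\rho$ almost everywhere and the buffer has volume $\sim L^2\delta$, this overhead totals $O(\rho L^2/\delta)$, which per unit volume of $\Lambda^*$ equals $\rho/(L\delta)=\rho^{243/120}=o(\rho^2)$. For the potential, $V\geq 0$, $|\chi|\leq 1$, and the finite range $R_0\ll\delta$ ensures each interacting pair sits in a single periodic cell where the torus and Euclidean distances coincide, so $\Tr(V\Gamma^D)\leq \Tr(V\Gamma^P)$; converting to densities and using the hypothesis $f^P\leq\const\,\rho^{5/3}$ to bound $\Tr(V\Gamma^P)/|\Lambda|=O(\rho^{5/3})$, the per-volume discrepancy is $O(\rho^{41/120}\cdot\rho^{5/3})=O(\rho^{241/120})=o(\rho^2)$.

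\textbf{Entropy --- the main obstacle.} I need $S(\Gamma^D)\geq S(\Gamma^P)-o(\beta\rho^2)|\Lambda^*|$, yet the naive convex-mixture bound $S(\sum p_j|\tilde\psi_j\rangle\langle\tilde\psi_j|/\|\tilde\psi_j\|^2)\leq-\sum p_j\ln p_j=S(\Gamma^P)$ points in the \emph{wrong} direction and saturates only when the normalized $\tilde\psi_j$'s are orthonormal. The crux is therefore an approximate-orthogonality estimate: the Gram matrix $G_{jk}:=\langle\tilde\psi_j,\tilde\psi_k\rangle/(\|\tilde\psi_j\|\|\tilde\psi_k\|)$ must be close enough to the identity that the log-determinant correction to the entropy is negligible. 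Heuristically, decomposing $\int_{\Lambda^{*N}}\prod\chi^2(\psi_j^{per})^*\psi_k^{per}$ via the $\Lambda$-periodic tiling of the bulk reduces the dominant contribution to $\langle\psi_j,\psi_k\rangle_\Lambda=0$, leaving buffer residuals of relative size $\delta/L=\rho^{41/120}$ per coordinate. Aggregating these corrections across the $N=\rho^{-21/20}$ coordinates is the delicate step, since the naive per-coordinate bound is not enough; it may require choosing an occupation-number basis adapted to a common momentum lattice on $\Lambda^*$ (so that the $\tilde\psi_j$'s become manifestly orthogonal via Fourier orthogonality), or a grand-canonical reformulation in which entropy is expressed through one-particle density matrices and compared through Berezin-Lieb-type inequalities.

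\textbf{Combining.} Inserting the above estimates into the variational principle,
\[
f^D(\rho^*,\Lambda^*,\beta)\leq \frac{|\Lambda|}{|\Lambda^*|}\,f^P(\rho,\Lambda,\beta)+o(\rho^2).
\]
Since $|\Lambda|/|\Lambda^*|=(1+2\rho^{41/120})^{-3}=1-O(\rho^{41/120})$ and $f^P\leq\const\,\rho^{5/3}$ by hypothesis, the volume-rescaling term contributes at most $O(\rho^{5/3+41/120})=O(\rho^{241/120})=o(\rho^2)$, yielding \eqref{temp3.6}.
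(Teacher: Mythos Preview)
Your overall strategy---multiply by a cutoff, use the variational principle, and control the energy overhead---matches the paper's. But you have correctly flagged, and not resolved, the real difficulty: the entropy. With a generic smooth $\chi$, the map $\psi\mapsto \prod_i\chi(x_i)\psi$ is \emph{not} an isometry; the norms $\|\tilde\psi_j\|$ depend on $j$, the normalized $\tilde\psi_j$'s are no longer orthogonal, and the Gram-matrix argument you sketch does not close (the per-coordinate defect $\delta/L$ compounded over $N\sim\rho^{-21/20}$ coordinates is far too large to control the log-determinant). Your suggested fixes---a special momentum basis or a Berezin--Lieb detour---would require substantial new work and are not what the paper does.

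The paper's key idea, which you are missing, is to choose the cutoff so that the map \emph{is} an isometry. With $q(x)=\cos[(x-\ell)\pi/(4\ell)]$ on $|x|\le\ell$, $q\equiv1$ on $[\ell,L-\ell]$, and the symmetric cosine on $[L-\ell,L+\ell]$, the identity $\cos^2(\theta-\pi/4)+\cos^2(\theta+\pi/4)=1$ gives, for any $L$-periodic $\phi$,
\[
\int_{-\ell}^{L+\ell} q^2|\phi|^2 \,=\, \int_0^L |\phi|^2.
\]
Tensoring over the three coordinates and the $N$ particles, the resulting map $\mathcal F^u$ is an $L^2$-isometry, so $\mathcal G^u(\Gamma^P):=\mathcal F^u\Gamma^P(\mathcal F^u)^\dagger$ has \emph{exactly} the same spectrum as $\Gamma^P$, hence $S(\mathcal G^u(\Gamma^P))=S(\Gamma^P)$ with no error at all. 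This eliminates your ``main obstacle'' in one line. For the kinetic overhead the paper then averages over the translation parameter $u\in[0,L]^3$ (rather than invoking the one-body density directly) to get $\Tr H_N\,\mathcal G^u(\Gamma^P)\le \Tr H_N\,\Gamma^P + \const\, N/(\ell L)$ for some $u$, and $N/(\ell L|\Lambda|)=\rho\cdot\rho^{41/120}\cdot\rho^{41/60}=\rho^{243/120}=o(\rho^2)$, as you computed. The potential estimate is as you said. The hypothesis \eqref{roughfprlb} is not actually needed to absorb a volume-rescaling term, since the entropy is preserved exactly and the energy overhead is already $o(\rho^2|\Lambda|)$ before dividing by $|\Lambda^*|$.
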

Lemma \ref{relationDP} can be proved with standard methods as in \cite{YY} and we postpone the proof to section  \ref{proofrelationDP}. 

We note: $| \Lambda^ *|\geq(\rho^*)^{-41/20}$, and satisfies the assumption in Proposition \ref{mainprop}. The construction of  a periodic trial state yielding the correct free energy upper bound is the core of this paper. 
We state it as the following theorem, which  gives the upper bound on $f^{P}(\rho,\, \Lambda,\,\beta)$ in \eqref{roughfprlb} and \eqref{temp3.6}.
\begin{thm}\label{resultP}
Assume $\lim_{\rho\to 0}\rho^{2/3}\beta\in (0,\infty)$ . For $|\Lambda|=\rho^{-41/20}$ and $N=|\Lambda|\rho$, there exists a periodic trial state $\Gamma(
\rho, \Lambda,\be)$ satisfying
\beq\label{resultPresult1}
\overline\lim_{\rho\to 0}\left(\frac1{|\Lambda|}\Tr\, H_N \Gamma - \frac
1{|\Lambda|\beta} S(\Gamma)-f^{P}_{\,0}(\rho,\beta)\right)\rho^{-2}
\leq 4\pi a(2-[1-R[\beta]\,]^2_+)
\eeq 
It implies
\beq\label{resultPresult2}
\overline\lim_{\rho\to 0}\left(f^{P}(\rho,\,\Lambda, \beta)-f^{P}_{\,0}(\rho,\,\beta)\right)\rho^{-2}\leq  4\pi a(2-[1-R[\beta]\,]^2_+)
\eeq
\end{thm}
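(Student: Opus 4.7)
The plan is to construct the periodic trial state $\Gamma$ as a correlated dressing of the ideal-gas Gibbs state $\Gamma_I$ by the three operators $P_{(0,0)}$, $P_{(0,\rho^{1/3})}$, $P_{(\rho^{1/3},\rho^{1/3})}$ defined in \eqref{defP0rho}, schematically
\[
\Gamma \;\sim\; \frac{1}{\mathcal{N}}\bigl(P_{(\rho^{1/3},\rho^{1/3})}P_{(0,\rho^{1/3})}P_{(0,0)}\bigr)\Gamma_I\bigl(P_{(\rho^{1/3},\rho^{1/3})}P_{(0,\rho^{1/3})}P_{(0,0)}\bigr)^{\dagger}.
\]
The coefficients will be chosen so that $\sqrt{\lambda_{k+v/2}\lambda_{-k+v/2}}\approx w_k$, the Fourier transform of the zero-energy scattering solution \eqref{zess}--\eqref{defw}, and $c_k$ an analogous zero-momentum amplitude. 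The purpose of $P_{(0,0)}$ is to build the condensate-condensate correlations that generate the $\rho_0^2$ contribution to the scattering energy, $P_{(0,\rho^{1/3})}$ does the same for condensate-thermal pairs, and $P_{(\rho^{1/3},\rho^{1/3})}$ for thermal-thermal pairs; these three are precisely what is needed so that, after summing, the interaction yields $4\pi a(2\rho^2 - [\rho - \rho_c]_+^2)$. Rather than working with the exponential object directly I would, following \cite{YY}, define the trial state through its matrix elements in the occupation-number basis, imposing hard momentum and occupation cutoffs so that the combinatorics remain tractable, while keeping an $L^2$-closeness estimate of the type $\sum_\alpha|f_\alpha-\tilde f_\alpha|^2\langle\alpha|\alpha\rangle\ll\rho^c$.

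With $\Gamma$ in hand, the three quantities on the left of \eqref{resultPresult1} are estimated separately. The kinetic correction $|\Lambda|^{-1}\Tr(T(\Gamma-\Gamma_I))$ is expanded by normal ordering and reduces, modulo controllable commutator errors, to sums of $k^2\lambda_k^2$ and $k^2 c_k^2$ weighted by the Bose--Einstein occupations $(e^{\beta(p^2-\mu)}-1)^{-1}$. The interaction energy $|\Lambda|^{-1}\Tr(V_{\mathrm{int}}\Gamma)$ contributes a ``direct'' term involving $\wh V_0$ and a ``scattering'' term that cancels the singular $k^2\lambda_k^2$ kinetic piece via \eqref{zess}, leaving precisely $4\pi a$ times the correct combination of densities. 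The entropy $S(\Gamma)$ must be bounded below by $S(\Gamma_I)-o(\rho^2|\Lambda|\beta)$; since the $P$'s are not unitary this is not automatic, but the $\tilde f_\alpha$ are arranged so that the scattering processes $\ann_0\ann_0\to\cre_k\cre_{-k}$, $\ann_0\ann_v\to\cre_{k+v/2}\cre_{-k+v/2}$ and $\ann_u\ann_v\to\cre_{k+(u+v)/2}\cre_{-k+(u+v)/2}$ act essentially as a measure-preserving rearrangement of the support of $\Gamma_I$, so concavity of $-x\ln x$ together with a careful grouping of configurations related by such scatterings gives the required bound. Optimizing finally in $\lambda$ and $c$ produces the constant $4\pi a(2-[1-R[\beta]]_+^2)$, and \eqref{resultPresult2} then follows from the variational principle \eqref{varp}.

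The main obstacle is the three-scale bookkeeping. The pure-state analog in \cite{YY} involved only two scales, namely $0$ and $\sim\sqrt\rho$; here thermal excitations populate a third scale $\sim\rho^{1/3}$, which forces the introduction of $P_{(\rho^{1/3},\rho^{1/3})}$ and generates many new cross terms between the three $P$'s which must cancel cleanly against the interaction. Controlling the entropy loss in the presence of all three scatterings simultaneously---so that the implicit map $\alpha\mapsto\tilde\alpha$ stays essentially injective on the bulk of the support of $\Gamma_I$---is the principal technical difficulty, and it is exactly what dictates the choice $|\Lambda|=\rho^{-41/20}$: the momentum-lattice spacing $|\Lambda|^{-1/3}\sim\rho^{41/60}$ must be much smaller than $\rho^{1/3}=\rho^{20/60}$ so that the $\rho^{1/3}$-shell contains many lattice momenta to support the summations in $P_{(\rho^{1/3},\rho^{1/3})}$, while $|\Lambda|$ must remain small enough that finite-volume and cutoff errors are absorbed in the $o(\rho^2)$ remainder.
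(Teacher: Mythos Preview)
Your high-level picture is right and matches the paper's: dress an ideal-gas reference state by scattering corrections built from the zero-energy solution $w$, with three channels for condensate--condensate, condensate--thermal, and thermal--thermal pairs. But three concrete mechanisms that the paper relies on are either missing or mischaracterized in your sketch.

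\textbf{Structure of $\Gamma$.} The paper does \emph{not} conjugate $\Gamma_I$ by a single operator. It first replaces $\Gamma_I$ by a cut-off diagonal state $\Gamma_0=\sum_{\alpha\in M} g_\alpha|\alpha\rangle\langle\alpha|$ (Lemma~\ref{resultPsial}), where $M$ is a set of occupation configurations supported on $P_0\cup P_I\cup P_L$ with an explicit cap $\alpha(k)\le m_c$ on $P_L$. Then each pure state $|\alpha\rangle$ is dressed separately into $|\Psi_\alpha\rangle=\sum_{\beta\in M_\alpha} f_\alpha(\beta)|\beta\rangle$ with explicit coefficients \eqref{deffbe}, and the trial state is $\Gamma=\sum_\alpha g_\alpha|\Psi_\alpha\rangle\langle\Psi_\alpha|$. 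Energy and entropy are then compared to $\Gamma_0$, not to $\Gamma_I$; Lemma~\ref{resultPsial} already yields the bound \eqref{prowgamma1} with $\tfrac12 V_0$ in place of $4\pi a$, and Theorem~\ref{mainlemma} supplies exactly the pure-state energy drop $(\tfrac12 V_0-4\pi a)N_\alpha|\Lambda|^{-1}$ that converts $\tfrac12 V_0$ into $4\pi a$ after averaging in $g_\alpha$ via \eqref{resultPsialequ2}.

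\textbf{Entropy.} Your ``measure-preserving rearrangement / essentially injective'' heuristic is not how the entropy is handled and would not survive: the $|\Psi_\alpha\rangle$ overlap heavily. The actual bound (Lemma~\ref{entropy}) is $S(\Gamma)\ge S(\Gamma_0)-\log A_\infty$ with $A_\infty=\bigl\|\sum_\alpha|\Psi_\alpha\rangle\langle\Psi_\alpha|\bigr\|_\infty$, obtained from concavity after renormalizing by $A_\infty$. To make $\log A_\infty\ll N\rho^{1/3}$ one must control, for each $\beta$, how many $\alpha$ can have $\beta\in M_\alpha$. This is what forces the box partition of $P_L$ and $P_H$ (Definition~\ref{definition3}) with side lengths $\rho^{\kappa_L},\rho^{\kappa_H}$ and the rule ``at most one modification per box'' (Definition~\ref{defwMal}); the counting in Lemma~\ref{lastlemma} then gives $\log A_\infty\le \rho^{1-4\eta-3\kappa_L}+\rho^{-4\eta-3\kappa_H}$, and the conditions $\kappa_L\le 5/9$, $\kappa_H\le 2/9$ make this $o(N\rho^{1/3})$. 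None of this apparatus appears in your plan.

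\textbf{No optimization.} There is no final optimization in $\lambda,c$. The coefficients are fixed as $\sqrt{-w_k}$ in \eqref{deffbe} from the outset; the constant $4\pi a$ emerges from the scattering-equation identities $\|\nabla w\|_2^2-\tfrac12\|Vw\|_1+\tfrac12\|Vw^2\|_1=0$ and $\tfrac12 V_0-\tfrac12\|Vw\|_1=4\pi a$, combined with the decomposition $\langle H_N\rangle_{\Psi_\alpha}=\langle-\Delta\rangle+\langle H_{abab}\rangle+\langle H_{\widetilde L\widetilde L}\rangle+\langle H_{\widetilde LH}\rangle+\langle H_{HH}\rangle$ (Lemmas~\ref{lemkinetic}--\ref{lemHHH}). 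Your reading of the exponent $41/20$ is also off: it is chosen so that $|\Lambda|=\rho^{-2-\varepsilon}$ makes the Dirichlet-to-periodic reduction (Lemma~\ref{relationDP}) lose only $o(\rho^2)$, not primarily for lattice resolution in $P_L$.
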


\subsection{Proof of  Proposition \ref{mainprop}}
\par To prove  Proposition \ref{mainprop}, we can directly apply Lemma \ref{relationDP} and Theorem \ref{resultP}.  Lemma \ref{relationDP} shows that the upper bound of the free energy (with Dirichlet boundary conditions) is sightly larger than  the one (with Periodic boundary conditions) in a slightly smaller box. In the smaller box the density is sightly increased. But the temperature is unchanged. Therefore the relation between temperature and density is different from the one in the initial small box. In this subsection, we will show that this difference will not affect our result(up to the order $\rho^2$).
\par {\it Proof of Proposition \ref{mainprop}} 
\par Using the temperature function $ \beta$ in the assumption of  Proposition \ref{mainprop},  we define a new temperature function $\widetilde\beta$ as follows 
\beq\label{defwbeta}
\widetilde \beta: \,\widetilde \beta(\rho)= \beta(\rho^*),
\eeq
where $\rho^*=\rho(1+2\rho^{41/120})^{-3}$, as  in \eqref{relationtilde}. 
\par  Insert the result in Theorem \ref{resultP} into Lemma \ref{relationDP}.  
With the definition of $\Lambda^*$, $\rho^*$ in Lemma \ref{relationDP}\eqref{relationtilde}, we obtain at the inverse temperature $\widetilde \beta(\rho)$,
\beq\label{temp3.10}
\overline\lim_{\rho\to 0}\left(f^{D}( \rho^*,\,  \Lambda^*,\, \widetilde \beta)-f^{P}_{\,0}(\rho,\,\widetilde\beta)\right)\rho^{-2}\leq  4\pi a(2-[1-R[\widetilde\beta]\,]^2_+).
\eeq
Since $\rho^*=\rho(1+o(\rho^{1/3}))$, we have the following equalities on  the free  energies of  ideal Bose gases in the thermodynamic limit: 
\beq
f^{P}_{\,0}( \rho,\, \widetilde\beta)=f^{D}_{\,0}( \rho,\, \widetilde\beta)=f^{D}_{\,0}( \rho^*,\, \widetilde\beta)(1+o(\rho^{1/3})).
\eeq
Therefore, we can replace $f^{P}_{\,0}(\rho,\,\widetilde\beta)$ in \eqref{temp3.10} with  $f^{D}_{\,0}(\rho^*,\,\widetilde\beta)$, i.e.,
\beq
\overline\lim_{\rho\to 0}\left(f^{D}( \rho^*,\,  \Lambda^*,\, \widetilde\beta)-f^{D}_{\,0}(\rho^*,\,\widetilde\beta)\right)\rho^{-2}\leq  4\pi a(2-[1-R[\widetilde\beta]\,]^2_+).
\eeq
Then by the definition of $\widetilde\beta$ in \eqref{defwbeta}, we obtain $R[\beta]=R[\widetilde\beta]$, so
\beqa\nonumber
\overline\lim_{\rho\to 0}\left(f^{D}( \rho^*,\,  \Lambda^*,\,  \beta(\rho^*))-f^{D}_{\,0}( \rho^*,\,\beta(\rho^*))\right)\rho^{-2}\leq&&  4\pi a(2-[1-R[\beta]\,]^2_+)\\\nonumber
=&&4\pi a(2-[1-R[\widetilde\beta]\,]^2_+)
\eeqa
Finally, using that  $\Lambda^*\geq (\rho^*)^{-\frac{41}{20}}$ and 
the fact that the limit $\rho\to 0$ is equivalent to the limit $
\rho^*\to 0$, we arrive at the desired result \eqref{mainpropresult}. 
\qed
\subsection{Outline of the Proof of Theorem \ref{resultP}: Reduction to Pure States}
 \par As we showed in appendix, for any non-negative, bound, piecewise continous, spherically symmetric function $f$ supported in unit ball, 
 there exist $C^\infty$ non-negative, spherically symmetric function $f_{1}$, $f_2, \ldots$ supported in the ball of radiu 2, such that for any $i\geq 1$, 
 \beq
 f_i-f\geq 0\,\,\,and\,\,\, \lim_{i\to \infty}\|f_i-f\|_1\to 0
 \eeq

 Therefore, for any $\eps>0$, there exists a $C^\infty$ function $V^\eps$ with compact support such that $V^\eps\geq V$ and the scattering length of $V^\eps$ is less than $a+\eps$. By the definition of free energy and the variational principle,  
 \beq
 f(\rho,\beta, \Lambda)\leq f^\eps(\rho,\beta, \Lambda)
 \eeq
 where $f^\eps$ corresponds to the Bose gas with interaction $V^\eps$. Therefore  to prove Theorem \ref{resultP} and \eqref{resultPresult1}, we only need to  focus on the $V$'s that are $C^\infty $- functions and have compact support. 
 Hence in the remainder of this paper we assume that $V$ is $C^\infty$.
 \par  
 In this subsection, we introduce the basic strategy of  proving  Theorem  \ref{resultP}. With the assumption of Theorem \ref{resultP}, we have \beq\Lambda =[0, L]^3,\;L=\rho^{-\frac{41}{60}},\;N=\rho^{-\frac{21}{20}}\rmand\lim_{\rho\to0}\rho^{2/3}\beta\in(0,\infty).
\eeq 
\par We first identify  four regions in the momentum space  $\Lambda^*$  which are  relevant to the construction of the trial state: $P_0$ for the condensate;  $P_L$ for the low momenta, which are of the order  $\rho^{1/3}$;  $P_H$ for momenta of order one; and $P_I$ the region between 
$P_0$ and $P_L$. 
\begin{mydef} \label{def1} {Definitions of $P_0$, $P_I$, $P_L$ and $P_H$}
\par Define four subsets of momentum space $\Lambda^*=(2\pi L^{-1}\Z)^3$: $P_0$, $P_I$, $P_L$ and $P_H$ as follows.
\beqa\nonumber
P_0&&\equiv\left\{p=0\right\}\\\nonumber
P_I&&\equiv\left\{p\in \Lambda^*:0 <   |p|< \eps_L\rho^{1/3}\right\}\\\label{defPS}
P_L&&\equiv\left\{p\in \Lambda^*: \eps_L\rho^{1/3}\leq|p|\leq \eta^{-1}_L\rho^{1/3}\right\}\\\nonumber
P_H&&\equiv\left\{p\in \Lambda^*:\eps_H \leq   |p|\leq \eta_H^{-1}\right\}\, , 
\eeqa
where the parameters are chosen as follows 
\beq\label{defepseta}
\eps_L, \eta_L, \eps_H,\eta_H\equiv\rho^{\eta}\rmand \eta\equiv1/200
\eeq
\end{mydef}
We remark that the momenta between $P_L$ and $P_H$ are irrelevant to our construction and $\eta$ can be any positive number less than $1/200$. When $V=0$, most particles have momentum in $P_0\cup P_I\cup P_L$. When we turn on the interaction, pairs of these particles are annihilated and usually pairs of particles with momenta of order one will be created. 
\bigskip
\par Next, as in \cite{YY}, we define some notations for the states and subsets of the Fock space.  Using the  occupation number representation, we describe a state in Fock space with a function mapping from momentum space to integers. 
\begin{mydef}{Definition of $\widetilde M$,  $M$ and $N_\al$}\label{defwMMNal}
\par Let $P$ denote  $P_0\cup P_L\cup P_I\cup P_H$. We define  $\widetilde M$ as  the set of all functions $\al: P \rightarrow \N\cup 0$ such that
\beq
\sum_{k\in P}\al(k)=N
\eeq
For any $\al \in \widetilde M$,  denote by $|\al\rangle\in \mathcal H_{N,\Lambda}$  the unique state (in this case, an  $N$-particle  wave function) 
defined by the map $\alpha$ 
\[
|\al\rangle =  C \prod_{k\in P} (a^\dagger_k)^{\alpha(k)} | 0 \rangle\, , 
\]
where the positive constant $C$ is chosen so that $|\al\rangle $ is $L_2$-normalized. 
\par Moreover, we define $M$ as the  following subset of $\widetilde M$
\beq\label{defM}
M\equiv\{\al\in \widetilde M|{\rm supp}\,(\al)\subset P_0\cup P_I\cup P_L \rmand \al(k)\leq m_c\rmfor \forall k\in P_L\},
\eeq
where $m_c$ is defined as 
\beq\label{defmc}
m_c\equiv \rho^{-3\eta}=\rho^{-3/200}
\eeq
Clearly, we have 
\beq\label{number}
a^\dagger_k a_k|\al\rangle=\al(k)|\al\rangle,\,\,\, \forall k\in P
\eeq
\end{mydef}
The states corresponding to the functions in $M$, \eqref{defM}, have no particle with momentum of order one, and there is a restriction on the particle number.  But  when $V=0$, the total probability of finding the states corresponding to $M$ is almost equal to one. 
\par  Furthermore, as follows,  we can construct a trial state $\Gamma_0$, with $\al$'s in $M$, satisfying \eqref{resultPresult1} with $4\pi a$ replaced with $\int_{\R^3} \frac12 V dx$  in the r.h.s of \eqref{resultPresult1}.  We postpone the proof of the next lemma to the subsection \ref{proofresultPsial}. 
\begin{lem}\label{resultPsial}
For $\Lambda =[0, L]^3$, $L=\rho^{-\frac{41}{60}}$, $N=\rho^{-\frac{21}{20}}$ and $\lim_{\rho\to0}\rho^{2/3}\beta\in(0,\infty)$. There exists a state $\Gamma_0(\rho,\,\,\beta)$ having the form: ($g_\al(\rho,\beta)\in\R$)
\beq\label{resultPsialequ1}
\Gamma_0=\sum_{\al\in M} g_\al(\rho,\beta)|\al\rangle\langle\al|,\;\sum_{\al\in M} g_\al(\rho,\beta)=1,\;
\eeq
and satisfying 
\beq
\overline\lim_{\rho\to 0}\left(\frac1{|\Lambda|}\Tr\, H_N \Gamma_0 - \frac
1{|\Lambda|\beta} S(\Gamma_0)-f_{\,0}(\rho,\beta)\right)\rho^{-2}\label{prowgamma1}
\leq \frac12 V_0(2-[1-R[\beta]\,]^2_+)
\eeq
Furthermore, the coefficient function $g_\al$ satisfies
\beq\label{resultPsialequ2}
\lim_{\rho\to0}\sum_{\al\in M}N^{-2}N_\al g_\al=2-[1-R(\beta)\,]^2_+
\eeq
where we defined $N_\al\in \R$ ($\al\in M$) as 
\beq\label{defNal}
N_\al\equiv \al(0)\al(0)+\sum_{u,v\in P_L\cup P_0, u\neq \pm v}2\al(u)\al(v),\,\,\,\al\in M
\eeq
\end{lem}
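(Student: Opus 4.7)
\par\textbf{Proof plan.} The plan is to take $\Gamma_0$ to be the canonical Gibbs state of the ideal Bose gas, projected onto the configuration set $M$ and renormalized:
\[
g_\al\;=\;Z^{-1}\exp\Bigl[-\be\sum_{k\in P}k^2\al(k)\Bigr]\mathbf 1_{\al\in M},\qquad Z=\sum_{\al\in M}\exp\Bigl[-\be\sum_{k}k^2\al(k)\Bigr].
\]
The first task is to verify that projecting onto $M$ is benign. Any $k\notin P_0\cup P_I\cup P_L$ satisfies $|k|\ge \eta_L^{-1}\rho^{1/3}$, so $\be k^2\gtrsim\rho^{-1/100}$ and its Boltzmann weight is super-polynomially small; summing the expected occupation over the $O(|\Lambda|)$ such modes keeps the total probability of any occupation outside $P_0\cup P_I\cup P_L$ super-small. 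Similarly, for $k\in P_L$ the mean occupation is $O(1)$ while $m_c=\rho^{-3/200}$ grows polynomially, so the geometric tail bound gives $\Pr(\al(k)>m_c)\le e^{-cm_c}$ and a union bound over the $O(|\Lambda|)$ modes of $P_L$ keeps the failure probability negligible. Hence $Z$ equals the full canonical partition function up to a factor $1+o(1)$, and both $S(\Gamma_0)$ and the kinetic expectation agree with their unrestricted ideal-gas counterparts up to $o(|\Lambda|\rho^2)$; in particular the kinetic-minus-entropy part reproduces $f_0(\rho,\be)$ in the thermodynamic limit.

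\par Because $\Gamma_0$ is diagonal in the occupation-number basis, only the direct ($u=0$) and exchange ($u=p-q$) pieces of the interaction in \eqref{ham} survive:
\[
\tfrac{1}{|\Lambda|}\Tr(H_N\Gamma_0)-\tfrac{1}{|\Lambda|\be}S(\Gamma_0)= f_0(\rho,\be)+\tfrac{V_0}{2|\Lambda|^2}(N^2-N)+\tfrac{1}{2|\Lambda|^2}\sum_{\al}g_\al\!\!\sum_{p\ne q\in\mathrm{supp}(\al)}\!\!V_{p-q}\,\al(p)\al(q)+o(\rho^2).
\]
Since $\mathrm{supp}(\al)\subset\{|p|\lesssim\rho^{1/3}\}$, smoothness of $\widehat V$ at the origin allows $V_{p-q}=V_0+O(\rho^{2/3})$, contributing an additional $o(\rho^2)$ error. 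Using $\sum_{p\ne q}\al(p)\al(q)=N^2-\sum_p\al(p)^2$ reduces the correction to
\[
\tfrac{V_0}{2}\Bigl(2\rho^2-\tfrac{1}{|\Lambda|^2}\sum_\al g_\al\sum_p\al(p)^2\Bigr)+o(\rho^2).
\]
For $p\ne0$ the near-product structure of the canonical measure yields $\langle\al(p)^2\rangle\approx n_p+2n_p^2$ with $n_p=(e^{\be p^2}-1)^{-1}$, whose sum is $o(|\Lambda|^2)$; for the condensate mode, canonical concentration around its mean $n_0=[\rho-\rhoc]_+|\Lambda|$ gives $\langle\al(0)^2\rangle=n_0^2(1+o(1))$. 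Hence $|\Lambda|^{-2}\sum_p\langle\al(p)^2\rangle\to[\rho-\rhoc]_+^2$ and the interaction correction per volume tends to $\tfrac{V_0}{2}(2\rho^2-[\rho-\rhoc]_+^2)=\tfrac{V_0}{2}\rho^2(2-[1-R[\be]]_+^2)$, which is precisely \eqref{prowgamma1}.

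\par The identity \eqref{resultPsialequ2} follows by rewriting
\[
N_\al = 2\Bigl(\sum_{u\in P_0\cup P_L}\al(u)\Bigr)^2 - \al(0)^2 - 2\sum_{u\in P_L}\al(u)^2 - 2\sum_{u\in P_L}\al(u)\al(-u).
\]
The particle fraction in $P_I$ is $o(1)$, so $S=\sum_{u\in P_0\cup P_L}\al(u)=N-o(N)$ concentrates and $\langle S^2\rangle=N^2(1-o(1))$; the $P_L$-sums contribute at most $O(|P_L|m_c^2)=o(N^2)$, and $\langle\al(0)^2\rangle=n_0^2(1+o(1))=[\rho-\rhoc]_+^2|\Lambda|^2(1+o(1))$ by the same canonical concentration. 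Dividing by $N^2=\rho^2|\Lambda|^2$ yields $N^{-2}\sum g_\al N_\al\to 2-[1-R[\be]]_+^2$. The main technical obstacle is the passage from the grand-canonical picture (independent geometric occupations, where all expectations are standard) to the canonical ensemble with its rigid constraint $\sum_k\al(k)=N$; in the condensed regime $R<1$ this matters, since it is canonical concentration, not grand-canonical geometric fluctuations of $\al(0)$, that produces $\langle\al(0)^2\rangle\approx n_0^2$ rather than $2n_0^2$. I would handle this by resolving $\al(0)=N-\sum_{k\ne 0}\al(k)$ and freezing the non-condensate modes to their grand-canonical geometric distributions at the appropriate chemical potential, using a central-limit estimate to control the Gaussian fluctuations of $\sum_{k\ne 0}\al(k)$ and thereby of $\al(0)$.
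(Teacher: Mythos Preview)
Your proposal is correct in outline and the interaction-energy and $N_\al$ computations match the paper's, but the \emph{construction} of $\Gamma_0$ is genuinely different from the paper's, and the difference is precisely at the point you flag as the main obstacle.

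The paper does not start from the canonical Gibbs state. Instead it builds a grand-canonical state $\Gamma_{\cal F}$ on the non-condensate modes $P_I\cup P_L$ only, at a slightly reduced density $\tilde\rho=\rho(1-L^{-1/2})$ and with explicit occupation cutoffs $C_k$. Because the $n(k)$ are then independent and bounded, Hoeffding's inequality gives concentration of $\sum_k n(k)$ near $\min\{\tilde\rho,\rho_c\}|\Lambda|$ and in particular $\le N$ with overwhelming probability. One then picks a single good $m_0\le N$ and \emph{deterministically} adds $N-m_0$ particles to the zero mode. Thus $\al(0)=N-m_0$ is a fixed number in $\Gamma_0$, and the identities $\langle\al(0)^2\rangle=\al(0)^2\approx([\rho-\rho_c]_+|\Lambda|)^2$ and \eqref{resultPsialequ2} follow immediately, with no canonical-concentration argument needed.

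Your route---canonical Gibbs state projected onto $M$---is viable, and your proposed resolution (write $\al(0)=N-\sum_{k\ne0}\al(k)$, treat the non-zero modes as near-independent geometrics at the appropriate $\mu$, and control the fluctuations of the sum) is essentially the same mechanism the paper exploits, just left implicit rather than baked into the construction. What the paper's approach buys is that it sidesteps any analysis of the canonical measure: independence of the $n(k)$ is exact by construction, the cutoffs $C_k$ make Hoeffding directly applicable, and the condensate occupation is non-random. Your approach is conceptually more natural (``just take the Gibbs state'') but leaves you with the burden of making the canonical-to-grand-canonical comparison rigorous in both the condensed and non-condensed regimes; in particular you would need a quantitative replacement for your CLT sketch, and to check that the conditioning $\sum_{k\ne0}\al(k)\le N$ does not distort the marginals at the required precision.
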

We remark: actually $\Gamma_0$ is very close to $\Gamma_I$, the canonical Gibbs state of ideal Bose gases. The state $\Gamma_0(\rho,\,\,\beta)$ satisfies (\ref{prowgamma1}), but for all potentials $V\neq 0$, $V_0=\int V(x)dx^3$ is strictly larger than $8\pi a$. So we need to improve $\Gamma_0$. To do that, we need to  replace the $|\al\rangle$'s ($\al \in M$) with some non-product state $\Psi_\al$'s. The energy of $|\al\rangle$ is higher than what we really want, since in $|\al\rangle$ when two particles are close to each other their behavior does not look like $(1-w)$, which is the zero energy scattering solution of $V$. For this reason, we should construct $\Psi_\al$ as follows
\beqa\label{roughpsial}
&&\Psi_\al\sim C\prod_{i<j}(1-w)(x_i-x_j)|\al\rangle \\\nonumber
  &&\sim C\left(1-\sum_{i<j}w(x_i-x_j)+\sum w(x_i-x_j)w(x_k-x_l)\cdots\right)|\al\rangle\\\nonumber
  &&\sim C\left(\!\!\!1-\!\!\!\sum_{k}
  \frac{w_k}{|\Lambda|} \sum_{u,v}\cre_{u+k}\cre_{v-k}\ann_u\ann_v
  \!\!\!+\!\!\!\left(\sum_{k}\frac{w_k}{|\Lambda|}  \sum_{u,v}\cre_{u+k}\cre_{v-k}\ann_u\ann_v\right)^2\!\!\!\cdots\!\!\!\right)\!\!\!|\al\rangle
\eeqa
\par We give the rigorous definition in the next section. First, we noticed that the operator $\sum_{i<j}w(x_i-x_j)$ annihilates  two particles and creates  two new particles. In our temperature regime, usually the momenta of the annihilated particles are of order $\rho^{1/3}$ or zero, belong to $P_L\cup P_0$ and  momenta of the two created  particles are of order one, i.e., belong to $P_H$. With this fact,  we will construct $\Psi_\al$ as the linear combination of $\al$ and the states which can be obtained by keeping annihilating 2 particles with momenta in $P_L\cup P_0$ and creating 2 new particles with momentum of order one, i.e.,
\beqa\label{roughpsial2}
\Psi_\al\sim C\bigg(1-&&\sum_{k}
  \frac{w_k}{|\Lambda|} \sum_{u,v\in P_0\cup P_L}^{u+k, v-k\in P_H}\cre_{u+k}\cre_{v-k}\ann_u\ann_v\\\nonumber
 && +\left(\sum_{k}\frac{w_k}{|\Lambda|}  \sum_{u,v\in P_0\cup P_L}^{u+k, v-k\in P_H}\cre_{u+k}\cre_{v-k}\ann_u\ann_v\right)^2\cdots\bigg)|\al\rangle
\eeqa
\par  For simplicity, we divide the $P_H$ and $P_L$, which are subsets of momentum space, into small boxes. When the size of the boxes is small enough, the probability of finding two particles annihilated (created)  in same  box is extremely  low.  Therefore to construct $\Psi_\al$, we only use the states in which there is  at most one  particle  annihilated ($P_L$) or  created ($P_H$) in each small box. Now we define these boxes. 

\begin{mydef}\label{definition3}{Definitions of $B_H(u)$, $B_L(u)$ } 
\par Let $\kappa_L, \kappa_H>0$. Divide $P_L$ and $P_H$ \eqref{defPS} into small \textit{boxes} (could be non-rectangular box) s.t.\ the sides of the boxes are about $\rho^{\kappa_L}$ and $\rho^{\kappa_H}$. We denote the box containing $u$ by $B_H(u)$ when $u\in P_H$ ( $B_L(u)$ when $u\in P_L$).

\end{mydef}
\bigskip
Then we define the states which we will use to construct  $\Psi_\al$. 
\bigskip
\begin{mydef}{Definition of $\widetilde M_\al$
}\label{defwMal}
\par For any $\al\in M$, we define $\widetilde {M_\al}$ as the set of the $\be$'s  in $\widetilde M$ (Def. \ref{defwMMNal}) such that 
\begin{enumerate}
	\item If $k\in P_0$, then $\be(k)\leq \al(k)$. If $k\in P_I$, then $\be(k)=\al(k)$.
	\item There is \textbf{at most} one $k$ in each $B_L$ or $B_H$ satisfying $\be(k)\neq \al(k)$. 
	\item If $\be(k)\neq \al(k)$, then 
	\beqa\label{bealkpm1}
	&& \be(k)=\al(k)-1,\; \;\,\,\,\,\,\,\rmfor k\in P_L\\\nonumber
	&& \be(k)=\al(k)+1=1,\;\rmfor k\in P_H
	\eeqa
\end{enumerate}
\end{mydef}
As we explained, for each $\al\in M$, we  construct a normalized pure state $\Psi_\al$, which is a linear combination of $\be\in \widetilde M_\al$, i.e., 
\beq\label{psialform}
|\Psi_\al\rangle =\sum_{\be\in \widetilde{M_\al}}f_\al(\be)|\be\rangle,\;\,\,\,\,\,\, \sum_{\be\in \widetilde{M_\al}}|f_\al(\be)|^2=1
\eeq
\par 
To prove Theorem \ref{resultP}, i.e., to improve the $\Gamma_0$ in Lemma \ref{resultPsial}, we  choose the correct trial state $\Gamma$ of following form:
\beq\label{defGamma}
\Gamma=\sum_{\al\in M}g_{\al}|\Psi_\al\rangle\langle\Psi_\al|,
\eeq
where we choose $g_{\al}$ in \eqref{resultPsialequ1} and $\Psi_\al$ in \eqref{psialform}. 
\par With proper $\kappa_L$ and $\kappa_H$, $\Delta S$ the entropy difference between $\Gamma_0$ in \eqref{resultPsialequ1} and $\Gamma$ in \eqref{defGamma}  can be proved to be much less than $|\Lambda|\rho^2$. 

\begin{lem}\label{entropy} Let $\Lambda=\rho^{-41/20}$, $\kappa_L\leq 5/9$ and $\kappa_H\leq 2/9$.
Then    for any $\{\Psi_\al, \al\in M\}$ having the form \eqref{psialform} and any $g_\al>0$ such that $\sum_{\al\in M}g_{\al}=1$, we have 
   \beq\label{resultentropy}
 \overline \lim_{\rho\to0} \big[-S(\Gamma)-(-S(\Gamma_0))\big](\Lambda\rho^2)^{-1}= 0
   \eeq
  with $\Gamma$ defined in  \eqref{defGamma} and $\Gamma_0=\sum_{\al\in M} g_\al|\al\rangle\langle\al|$. 
\end{lem}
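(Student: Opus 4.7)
The plan is to show $\Delta S := S(\Gamma_0) - S(\Gamma) = o(|\Lambda|\rho^2)$. The sign is clear: since $\Gamma$ is a convex combination of the rank-one projections $|\Psi_\al\rangle\langle\Psi_\al|$ with weights $g_\al$, concavity of the von Neumann entropy for mixtures of pure states gives $S(\Gamma) \le -\sum g_\al \ln g_\al = S(\Gamma_0)$, so $\Delta S \ge 0$. The strategy is to rewrite $\Delta S$ as a quantum relative entropy, bound it by a non-commutative $\chi^2$-type quantity, and then exploit the combinatorial restrictions built into Definitions~\ref{defwMal} and~\ref{definition3}.

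First, use a purification. Set $D=\mathrm{diag}(g_\al)$ and let $G_{\al\al'}=\langle\Psi_\al|\Psi_{\al'}\rangle$ be the Gram matrix. The vector $|\Omega\rangle = \sum_\al\sqrt{g_\al}\,|\al\rangle_A\otimes|\Psi_\al\rangle_B$ has partial traces $\Gamma$ on $B$ and $\widetilde\Gamma:=D^{1/2}GD^{1/2}$ on $A$; equal entropy of the two reductions of a pure state gives $S(\Gamma)=S(\widetilde\Gamma)$. Since $G_{\al\al}=1$, the $\al$-diagonal of $\widetilde\Gamma$ equals $D$, and a direct computation yields $\Delta S = D(\widetilde\Gamma\,\|\,D)$. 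A standard non-commutative relative-entropy bound (e.g.\ via the integral representation $\ln x = \int_0^\infty[(1+t)^{-1} - (x+t)^{-1}]\,dt$) then gives an upper bound of the form
\[\Delta S \;\le\; \Tr(G^2 D) - 1 \;=\; \sum_{\al\neq\al'} g_\al\,|G_{\al\al'}|^2,\]
so the problem reduces to estimating this combinatorial off-diagonal sum.

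Next, exploit the block structure of $G$. Clause~(1) of Definition~\ref{defwMal} forces $\be|_{P_I}=\al|_{P_I}$ for every $\be\in\widetilde{M_\al}$, so $\widetilde{M_\al}\cap\widetilde{M_{\al'}}=\emptyset$ whenever $\al|_{P_I}\ne\al'|_{P_I}$ and $G$ is block-diagonal in $\gamma=\al|_{P_I}$. Within a single $\gamma$-block, distinct $\al,\al'\in M$ must disagree on $P_0\cup P_L$, and total-particle-number conservation then forces any common $\be\in\widetilde{M_\al}\cap\widetilde{M_{\al'}}$ to contain at least one particle in $P_H$. Expanding $G_{\al\al'}=\sum_\be\overline{f_\al(\be)}f_{\al'}(\be)$, applying Cauchy--Schwarz, and exchanging summations rewrites the right-hand side above as a sum over $\be$ of the quantity $q(\be)=\langle\be|\Gamma|\be\rangle$ weighted by the multiplicity $\#\{\al\in M:\be\in\widetilde{M_\al}\}$.

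That multiplicity is controlled by the number of ways to assign each of the $n_H(\be)$ particles in $P_H$ back to either the condensate or to a $P_L$ box, with one momentum selected per box; the box decomposition of Definition~\ref{definition3} governs these choices. For $|\Lambda|=\rho^{-41/20}$, the thresholds $\kappa_L\le 5/9$ and $\kappa_H\le 2/9$ produce $\rho^{-2/3}$ boxes in each of $P_L$ and $P_H$, and calibrate the per-box lattice-point count so that, combined with the typical smallness of $n_H(\be)$ (which vanishes on the support $M$ of $\Gamma_0$), the total $\sum g_\al |G_{\al\al'}|^2$ is $o(\rho^{-1/20}) = o(|\Lambda|\rho^2)$. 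The main obstacle is precisely this last combinatorial balance: one must simultaneously track $q(\be)$, the $\al$-multiplicity per $\be$, and the normalizations $\sum_\be|f_\al(\be)|^2=1$; the exponents $5/9$ and $2/9$ are dictated exactly by the numerology $|\Lambda|=\rho^{-41/20}$, since raising either exponent inflates the per-box factor while lowering it inflates the number of target boxes.
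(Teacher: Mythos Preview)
Your purification step and the identification $\Delta S = D(\widetilde\Gamma\,\|\,D)$ are correct, and the $\chi^2$-type inequality $D(\widetilde\Gamma\,\|\,D)\le \Tr(\widetilde\Gamma D^{-1}\widetilde\Gamma)-1=\sum_{\al\neq\al'}g_\al|G_{\al\al'}|^2$ is valid. The problem is that this upper bound is far too weak for the lemma as stated, and the combinatorial sketch in your last paragraph cannot repair it.

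The hypothesis is that the $\Psi_\al$ are \emph{arbitrary} normalized vectors supported in $\widetilde M_\al$; nothing forces $|f_\al(\be)|^2$ to be small for $\be$ with many particles in $P_H$. Pick a single $\be_0$ with $n_H(\be_0)$ close to its maximal value (of order the number of $B_H$-boxes, roughly $\rho^{-2/3}$), and let $M_0=\{\al\in M:\be_0\in\widetilde M_\al\}$. The counting argument in Definition~\ref{defwMal} shows $|M_0|$ can be of size $(\const\,\rho^{3\kappa_L}|\Lambda|)^{\,c\,\rho^{1-3\eta-3\kappa_L}}$, i.e.\ exponentially large in $\rho^{-2/3}$. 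Now set $\Psi_\al=|\be_0\rangle$ for every $\al\in M_0$ (and $\Psi_\al=|\al\rangle$ otherwise), and choose the $g_\al$ to put essentially all the mass on $M_0$. Then $G_{\al\al'}=1$ for all $\al,\al'\in M_0$, so
\[
\sum_{\al\neq\al'}g_\al|G_{\al\al'}|^2\;\approx\;|M_0|-1,
\]
which is astronomically larger than $|\Lambda|\rho^2=\rho^{-1/20}$. Your appeal to ``typical smallness of $n_H(\be)$'' is exactly what is not available here. More generally, one always has $\sum_{\al'}|G_{\al\al'}|^2=\langle\Psi_\al|(\sum_{\al'}|\Psi_{\al'}\rangle\langle\Psi_{\al'}|)|\Psi_\al\rangle\le A_\infty$, so your bound is at best $\Delta S\le A_\infty-1$, whereas what survives the combinatorics is only $\log A_\infty$.

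This is precisely what the paper's proof exploits: it shows directly (via an operator-concavity/majorization argument) that $S(\Gamma)\ge S(\Gamma_0)-\log A_\infty$ with $A_\infty=\|\sum_\al|\Psi_\al\rangle\langle\Psi_\al|\|_\infty$, and then bounds $\log A_\infty$ by the crude row-sum estimate $A_\infty\le(\max_\be\sum_\al|\langle\be|\Psi_\al\rangle|)(\max_\al\sum_\gamma|\langle\gamma|\Psi_\al\rangle|)$. Each factor is at most the number of admissible $\al$'s (resp.\ $\gamma$'s), and taking the logarithm converts the exponential count into the polynomial quantity $\rho^{1-4\eta-3\kappa_L}+\rho^{-4\eta-3\kappa_H}$, which is what the thresholds $\kappa_L\le 5/9$, $\kappa_H\le 2/9$ are calibrated to control. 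The logarithm is the essential missing ingredient in your approach.
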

\par We postpone the proof of this lemma to subsection \ref{proofentropy}. The assumptions $\kappa_L\leq 5/9$ and $\kappa_H\leq 2/9$  imply
  \beq
 \rho^{1-4\eta-3\kappa_L}+\rho^{-4\eta-3\kappa_H} \ll N\rho^{1/3}.
  \eeq
\par In the next theorem, we  show that,  for each $\al\in M$, there exists a pure state $\Psi_\al$ of the form \eqref{psialform} such that,  comparing with $|\al\rangle$, the new pure state $|\Psi_\al\rangle$ lowers the total energy by about $(\frac12 V_0-4\pi a)N_\al\Lambda^{-1}$, where $N_\al$ is defined in \eqref{defNal}.  The construction of  the pure state  yielding the correct total energy  is the core of the proof of Theorem \ref{resultP}. 
\begin{thm}\label{mainlemma}
Let $1/2\geq \kappa_L\geq 4/9$ and $\kappa_H\geq 1/9$. For any $\al\in M$, there exists $\Psi_\al$ having the form \eqref{psialform} and  satisfying
 \beqa\nonumber
\langle \Psi_\al|H_N|\Psi_\al\rangle-\langle \al |H_N|\al\rangle+(\frac12 V_0-4\pi a)N_\al\Lambda^{-1}
\leq  \eps_\rho\rho^2\Lambda
\eeqa
where the $\eps_\rho$ is independent of $\al$ and $\lim_{\rho\to0}\eps_\rho=0$.
\end{thm}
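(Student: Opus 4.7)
The plan is to construct $\Psi_\alpha$ as a truncated Jastrow-type correlated state adapted to the discretized excitations in $\wt M_\alpha$, mimicking $\prod_{i<j}(1-w)(x_i-x_j)|\alpha\rangle$ but retaining only excitations into $P_H$ from $P_0\cup P_L$. For each $\beta\in\wt M_\alpha$, the difference $\beta-\alpha$ consists of $t$ newly occupied sites $u_1,\dots,u_t\in P_H$ (in distinct $B_H$-boxes by Definition~\ref{defwMal}) and $t$ depleted sites $v_1,\dots,v_t\in P_0\cup P_L$ (in distinct $B_L$-boxes or at $0$); I interpret this as $t/2$ ``scattering events'', each annihilating two particles at some $v_i,v_j$ and creating two at $u_i,u_j$ with $u_i+u_j=v_i+v_j$, and set
\[
f_\alpha(\beta)\;=\;C_\alpha^{-1/2}\sum_\pi\prod_{\ell=1}^{t/2}\Big(\!-\frac{w_{k_\ell(\pi)}}{|\Lambda|}\Big)\cdot(\text{occupation-number factors}),
\]
with $\pi$ ranging over pairings of the $2t$ momenta into $t/2$ momentum-conserving events, $k_\ell(\pi)$ the transferred momentum of the $\ell$-th event, and $C_\alpha$ chosen so that $\sum_\beta|f_\alpha(\beta)|^2=1$. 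The ``one change per box'' constraint forbids two events from sharing a momentum, so each $f_\alpha(\beta)$ is unambiguous and $\Psi_\alpha$ is of the form \eqref{psialform}.

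To evaluate $\langle\Psi_\alpha|H_N|\Psi_\alpha\rangle-\langle\alpha|H_N|\alpha\rangle$ with $H_N=T+V_{\rm int}$, I expand by the number of scattering events on bra and ket sides. The dominant contributions come from ``one-event'' configurations: a cross term $2\,\mathrm{Re}\,\overline{f_\alpha(\alpha)}f_\alpha(\beta)\langle\alpha|V_{\rm int}|\beta\rangle$ with $\beta$ differing from $\alpha$ by a single event, plus the diagonal kinetic piece $|f_\alpha(\beta)|^2(\langle\beta|T|\beta\rangle-\langle\alpha|T|\alpha\rangle)$, plus sub-leading diagonal $V_{\rm int}$ terms. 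Using $|v_i|\leq\eta_L^{-1}\rho^{1/3}\ll\eps_H\leq|k_\ell|$, the kinetic increment per event is $u_1^2+u_2^2-v_1^2-v_2^2\approx 2|k|^2$; combined with $f_\alpha\propto -w_k/|\Lambda|$ and $\langle\alpha|V_{\rm int}|\beta\rangle\propto\wh V_k/|\Lambda|$ and summed over $v_1,v_2$ (yielding $N_\alpha$ by \eqref{defNal}) and $k\in P_H$, these pieces collapse to an expression proportional to $\tfrac{N_\alpha}{|\Lambda|^2}\sum_{k\in P_H}\bigl(2 k^2 w_k^2-2\wh V_k w_k\bigr)$. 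Applying the scattering equation \eqref{zess} in momentum form $k^2 w_k=\tfrac12[V(1-w)]_k$ together with the identity \eqref{defa} for $a$, this evaluates to $(4\pi a-\tfrac12 V_0)N_\alpha/|\Lambda|$ in the thermodynamic limit, producing exactly the advertised shift. Equivalently, this is the second-quantized version of the Dyson-type identity $\int|\nabla w|^2+\tfrac12\int V(1-w)^2\,dx=4\pi a$.

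The main obstacle is uniform control of every other contribution at the level $\eps_\rho\rho^2|\Lambda|$ with $\eps_\rho\to 0$, as $\alpha$ varies over $M$. The relevant error sources are: (i) diagonal $V_{\rm int}$-contributions on multi-event states $\beta$, which factorise thanks to the box-independence of events and are bounded using \eqref{boundwp}; (ii) cross terms between $\beta,\beta'$ differing by two or more events, which reduce to products of one-event amplitudes; (iii) replacement of $|\Lambda|^{-1}\sum_{k\in P_H}$ by the continuum integral, controlled via \eqref{boundwp}--\eqref{derwp}; and (iv) the approximation $u_i^2+u_j^2-v_i^2-v_j^2\approx 2|k|^2$, with relative error $O(\rho^{1/3}/\eps_H)=O(\rho^{1/3-\eta})$. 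The hypotheses $4/9\leq\kappa_L\leq 1/2$ and $\kappa_H\geq 1/9$ calibrate the box sizes precisely: small enough for $P_L,P_H$ to be approximated continuously, yet large enough that the two-box cross-terms of the form $|\Lambda|^{-2}\sum_{k,k'}|w_kw_{k'}|$ remain sub-leading to $\rho^2|\Lambda|=\rho^{-1/20}$. Verifying that the global factor $C_\alpha$ contributes only a sub-leading correction uniformly in $\alpha\in M$ is the most technically demanding step, requiring a Wick-type resummation of contractions between numerator and denominator; this is the finite-temperature analogue of the analysis in \cite{YY}.
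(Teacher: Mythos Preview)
Your proposal has a genuine gap in the energy bookkeeping. The expression you arrive at,
\[
\frac{N_\al}{|\Lambda|^2}\sum_{k\in P_H}\bigl(2k^2 w_k^2 - 2\wh V_k w_k\bigr),
\]
does \emph{not} evaluate to $(4\pi a-\tfrac12 V_0)N_\al/|\Lambda|$. In the continuum limit it equals $2\|\nabla w\|_2^2-2\|Vw\|_1$, whereas $4\pi a-\tfrac12 V_0=-\tfrac12\|Vw\|_1$; the discrepancy is leading order. The missing piece is the interaction of \emph{four} high-momentum particles: terms $\overline{f_\al(\be)}f_\al(\gamma)\langle\be|a^\dagger_{k_1}a^\dagger_{k_2}a_{k_3}a_{k_4}|\gamma\rangle$ where $\be,\gamma$ are \emph{distinct} one-event states sharing the same depleted pair $(u_1,u_2)$ but different created pairs. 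These are neither diagonal nor sub-leading; they contribute $+\tfrac12\|Vw^2\|_1\,N_\al/|\Lambda|$, and only with this term does the Dyson identity $\|\nabla w\|_2^2-\|Vw\|_1+\tfrac12\|Vw^2\|_1=4\pi a-\tfrac12 V_0$ close. In the paper this is Lemma~\ref{lemHHH}, and controlling its \emph{error} part (where $\be,\gamma$ do not come from a common $\tilde\al$ via single events) requires a combinatorial graph argument (Lemma~\ref{HHHerror}, Lemma~\ref{lem11.1}) that your outline does not anticipate.

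There is a second structural issue. Your $f_\al(\be)$ is a \emph{sum over pairings} $\pi$, which is the natural Jastrow coefficient; but the paper explicitly avoids this (see the discussion around \eqref{trialstate00}) because such coefficients do not satisfy a clean one-step recursion. Instead the paper defines $f_\al(\be)$ as an explicit \emph{product} \eqref{deffbe} over the occupied $P_H$-sites and depleted $P_L$-sites, and restricts the support from $\widetilde M_\al$ to the smaller set $M_\al$ of Definition~\ref{defmal} (the ``non-trivial subset'' condition on $P_L(\be,\al)$). These two devices together yield the exact relations of Lemma~\ref{f}, which are what make the calculations of $\langle a^\dagger a^\dagger a\,a\rangle_{\Psi_\al}$ tractable. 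With a sum-over-pairings amplitude on all of $\widetilde M_\al$, the Wick-type resummation you allude to is not obviously manageable at the required precision; the paper's point is precisely that one should \emph{not} attempt it, but rather engineer $f_\al$ so that the recursion \eqref{temp4.7} holds on the nose for most $\be$.
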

\bigskip
\par Finally,  by choosing the proper size of the small boxes in $P_L$ and $P_H$, we can prove  Theorem \ref{resultP} with Theorem \ref{mainlemma},  Lemma \ref{mainlemma} and Lemma \ref{resultP}.
\bigskip
\par {\it{Proof of Theorem \ref{resultP}}
} 
\par Let $1/2\geq \kappa_L\geq 4/9$ and $2/9\geq\kappa_H\geq 1/9$. We choose   trial state $\Gamma$  \eqref{defGamma} with  $g_{\al}$ in Lemma \ref{resultPsial} \eqref{resultPsialequ1} and $\Psi_\al$'s in Theorem \ref{mainlemma}. Then
combine Theorem \ref{mainlemma},  Lemma \ref{mainlemma} and Lemma \ref{resultP}.
\qed
\bigskip
\par 
This paper is organized as follows: In Section \ref{deftps}, we  rigorously define $\Psi_\al$'s and the trial state $\Gamma$. 
In Section 5, we outline the lemmas needed to prove Theorem \ref{mainlemma}. 
In Section 6, we estimate the number of particles in the condensate and various momentum regimes. 
These estimates are the building blocks for all other estimates later on. The kinetic energy is estimated in Section 7 and the potential energy is estimated in Section 8-11. Finally in Section 12, we prove  Lemma \ref{relationDP}, \ref{resultPsial}, 
\ref{entropy}.

\section{Definition of the trial pure states $\Psi_\al$'s}\label{deftps}

In this section, we give a formal definition of the trial pure state  $\Psi_\al$'s for Theorem \ref{mainlemma}. 
For simplicity, we define  a special 'state' $ |{\bf 0}\rangle=0\in \mathcal H_{N,\Lambda}$. As in \cite{YY}, to construct $\Psi_\al$, we use the following   operators $A^{u,v}_{p,q}$:
 \beq
 A^{u,v}_{p,q}: \widetilde M\to\widetilde M\cup {\bf 0},\,\,\, u,v\in P_0\cup P_L,\;p,q\in P_H \rmand  u+v=p+q
 \eeq
With the notation $|{\bf 0}\rangle$, we have the following simple fomula for $A^{u,v}_{p,q}$, 
\beq
|A^{u,v}_{p,q}\be\rangle=Ca^\dagger_pa^\dagger_qa_ua_v|\be\rangle,\,\,\,\,\,\,\,\,\be \in \widetilde M
\eeq  
where $C$ is a positive normalization constant. We can see that, with the notation $\bf 0$, $A^{u,v}_{p,q}\be$ makes sense when the r.h.s is $0$. We note that here $\bf 0$ is introduced just for simplifying  the expression. 
\par The operator $A^{u,v}_{p,q}$ annihilates two particles with momenta in $P_L$ or $P_0$ and creates two particles with momenta in $P_H$.  We note: the total momentum is conserved. 
\par For simplicity, the pure trial state $\Psi_\al$ will be of the form  $\sum_{\be\in  M_\al}f_\al(\be)|\be\rangle$ where $f_\al$ is  supported in $M_\al\subset\widetilde M_\al$(def. \ref{defwMal}) which we now define. 
\par Note that there is no physical mean to construct $\Psi_\al$ on $M_\al$ and not $\widetilde M_\al$, but the properties of $M_\al$  simplify our proof. We can  define the coefficient function $f_\al$ on $M_\al$ with a very clear relation between $f_\al( \mathcal A^{u_1,u_2}_{k_1,k_2}\be )$ and $f_\al(\be )$, as in Lemma \ref{f}. But we can not do this on $\widetilde M_\al$.
\begin{mydef}\label{defmal}{Definition of nontrivial subset in $P_L$}
\par Let $A$ be a subset of $P_L$, it is called non-trivial when
\begin{enumerate}
	\item If $u_i\in A$ and  $u_i\neq u_j$($1\leq i\neq j\leq 2$), then $u_1+u_2\neq 0$
	\item If $u_i\in A$ and  $u_i\neq u_j$($1\leq i\neq j\leq 3$), then $u_1+u_2\neq u_3$
	\item If $u_i\in A$ and  $u_i\neq u_j$($1\leq i\neq j\leq 4$),  then $u_1+u_2\neq u_3+u_4.$
\end{enumerate}

\par Definition of $M_\al$: 
\par Recall $\widetilde M_\al$ in Def. \ref{defwMal}. For  $\al\in M$, we define the  subset $M_\al\subset \widetilde M_\al$ as the smallest set with the following properties.
\begin{enumerate}
		\item   For any  $\al$ and $\gamma\in \widetilde M_\al$, let $P_L(\gamma,\al)$ denote the following subset of $P_L$,
\beq\label{defPLgaal}
P_L(\gamma,\al)\equiv\{u\in P_L: \gamma(u)<\al(u)\}.
\eeq
Then  for any $\gamma\in M_\al$, $P_L(\gamma,\al)$ is non-trivial subset of $P_L$.
	\item $\al\in M_\al$
	\item If $\be\in M_\al $ and $\gamma=A^{0,0}_{p,-p}\be\in \widetilde M_\al$, then $\gamma\in M_\al$.
	\item If $\be\in M_\al $, $\gamma=A^{u,v}_{p,q}\be\in\widetilde M_\al$ and
	
\begin{enumerate}
	\item $P_L(\gamma,\al)$ is non-trivial
	\item $\be(-p)=\be(-q)=0$
\end{enumerate}
 then $\gamma\in M_\al$.
\end{enumerate}
\end{mydef}
Note: The set $M_\al$ is unique since the intersection of two such sets $M_{a\,1}$ and $M_{\al,2}$ satisfies all four conditions. 

\bigskip\par We collect a few obvious properties of the elements in $M_\al$ into the next lemma.
\begin{lem}\label{BpMal}
 By the definition of $M_\al$, any $\be\in M_\al$   has the following form:
\beq\label{Malform1}
\be=\prod_{i=1}^m\mathcal A^{u_{2i-1},u_{2i}}_{k_{2i-1},k_{2i}}\prod_{j=1}^n\mathcal A^{0,0}_{p_j,-p_j}\al, 
\eeq
where $u_i\in P_L\cup P_0$, $k_i\in P_H$ for $i=1,\cdots,2m$ and $p_j\in P_H$ for $j=1,\cdots,n$. And
\beq\label{Malform2}
p_i\neq \pm p_j, \;k_i\neq \pm k_j \rmfor i\neq j \rmand k_i\neq \pm p_j \rmfor \forall i,j
\eeq
\par On the other hand, if $\{u_i,(i=1,\cdots,2m) \}\cap P_L$ is a non-trivial subset of $P_L$, then any $\be\in \widetilde M_\al$ with form \eqref{Malform1} and \eqref{Malform2} belongs to $M_\al$. 
\par Furthermore, one can change the order of the $\cal A$'s in \eqref{Malform1}. With the fact that the subset of non-trivial subset of $P_L$ is still non-trivial, we can see, if $\be$ belongs to $ M_\al$ and has  the  form \eqref{Malform1} and \eqref{Malform2}, then we have
\beq\label{Malform3}
\prod_{i\in A}\mathcal A^{u_{2i-1},u_{2i}}_{k_{2i-1},k_{2i}}\prod_{j\in B}\mathcal A^{0,0}_{p_j,-p_j}\al\in M_\al
\eeq
Here $A$, $B$ are any subsets of $\{1,\cdots, m\}$ and $\{1,\cdots, n\}$
\end{lem}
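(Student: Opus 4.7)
My plan is to unpack the inductive definition of $M_\al$ in three stages, proving the forward characterization, the reverse characterization, and then using them to get reordering/subset-closure.

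\emph{Forward direction.} Since $M_\al$ is defined as the smallest set closed under the operations in clauses 3 and 4 of Definition \ref{defmal} and containing $\al$, every $\be\in M_\al$ is obtained from $\al$ by some finite sequence of applications of $\mathcal{A}^{0,0}_{p,-p}$ and $\mathcal{A}^{u,v}_{p,q}$; induction on the length of that sequence delivers the product form \eqref{Malform1}. The constraints \eqref{Malform2} then follow from the requirement $\be\in\widetilde M_\al$ together with clause 4(b). Indeed, every occupation in $P_H$ is at most one, so the $P_H$-momenta newly created at each step must be unoccupied beforehand; this forces $p_i\neq\pm p_j$ and $k_i\neq\pm k_j$ for distinct indices, and $k_i\neq p_j,-p_j$. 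The stronger avoidance $k_i\neq -k_j$ and $k_i\neq -p_j$ (the "minus" cases) comes precisely from clause 4(b), which requires $\be(-p)=\be(-q)=0$ before applying any mixed-sector $\mathcal{A}^{u,v}_{p,q}$.

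\emph{Reverse direction.} Suppose $\be\in\widetilde M_\al$ has the form \eqref{Malform1} with \eqref{Malform2}, and assume $\{u_i:i=1,\ldots,2m\}\cap P_L$ is non-trivial. I would construct $\be$ from $\al$ in a specific order: first apply the $n$ condensate operators $\mathcal{A}^{0,0}_{p_j,-p_j}$, then the $m$ mixed-sector operators $\mathcal{A}^{u_{2i-1},u_{2i}}_{k_{2i-1},k_{2i}}$ in any order. At each intermediate stage I verify the closure rule. For the condensate operators, clause 3 only demands that the result lie in $\widetilde M_\al$, which is guaranteed by \eqref{Malform2} (the $\pm p_j$ are mutually disjoint). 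For the mixed-sector operators, clause 4 requires (a) non-triviality of $P_L(\gamma,\al)$ after applying the operator, and (b) $\be(-k_{2i-1})=\be(-k_{2i})=0$ before. Part (a) holds because the resulting $P_L(\gamma,\al)$ is a subset of $\{u_i\}\cap P_L$, and subsets of non-trivial sets are non-trivial. Part (b) follows from \eqref{Malform2}: at the moment of applying the $i$-th mixed-sector operator, the occupied $P_H$-momenta are exactly $\{\pm p_j\}_{j=1}^n\cup\{k_{2i'-1},k_{2i'}\}_{i'<i}$, and \eqref{Malform2} ensures $\pm k_{2i-1},\pm k_{2i}$ lie outside this set.

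\emph{Reordering and subsets.} The state $\be$ in \eqref{Malform1} depends on the operator list only as an unordered multiset (the $\mathcal{A}$'s only ever affect bosonic occupation numbers), so any permutation of the factors yields the same element of $\widetilde M_\al$; by the reverse direction it remains in $M_\al$. For \eqref{Malform3}, the subproduct indexed by $A\subset\{1,\ldots,m\}$ and $B\subset\{1,\ldots,n\}$ still lies in $\widetilde M_\al$ (each of the three conditions defining $\widetilde M_\al$ is preserved under deleting operators — condensate depletion only decreases, $P_I$ occupations are untouched, the at-most-one-per-box rule is inherited, and the $\pm1$ shifts on $P_L$/$P_H$ are preserved), still satisfies \eqref{Malform2} (closed under deletion of indices), and has a $P_L$ index set that is a subset of the original non-trivial $\{u_i\}\cap P_L$, hence is itself non-trivial; the reverse direction therefore places it in $M_\al$.

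The main obstacle is the careful bookkeeping in the reverse direction: one must verify simultaneously that every intermediate $P_L(\gamma,\al)$ is non-trivial and that every $\pm k$-hole remains empty at the moment it is needed, across all orderings of the applied operators. Both are ultimately consequences of the "subsets of non-trivial sets are non-trivial" observation combined with \eqref{Malform2}, but they must be tracked in tandem and the condensate-only operators must be applied before the mixed-sector ones to avoid prematurely depleting $\al(0)$ below what the mixed-sector operators require.
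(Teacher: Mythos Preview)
Your proof is correct and essentially complete; the paper itself provides no proof of this lemma, treating it as an ``obvious'' consequence of the inductive definition of $M_\al$. Your unpacking of the forward direction (induction on derivation length, with \eqref{Malform2} forced by the $\widetilde M_\al$ occupation constraints together with clause 4(b)), the reverse direction (rebuilding $\be$ step by step while checking closure rules), and the subset closure (via ``subsets of non-trivial sets are non-trivial'') matches exactly what the definitions demand.

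One small correction to your closing remark: the order in which you apply the condensate versus mixed-sector operators in the reverse direction is actually immaterial. The concern about ``prematurely depleting $\al(0)$'' is unfounded, since every intermediate state $\gamma$ has $\gamma(0)\geq \be(0)\geq 0$ regardless of ordering. Likewise clause 4(b) is satisfied at every mixed-sector step in any order, because \eqref{Malform2} already guarantees that $-k_{2i-1}$ and $-k_{2i}$ lie outside $\{\pm p_j\}\cup\{k_{i'}:i'\neq 2i-1,2i\}$, which contains all possibly-occupied $P_H$ momenta at that moment. So your chosen order (condensate first) works, but so does any other; this slightly strengthens the reordering statement and removes the caveat from your ``main obstacle'' paragraph.
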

\bigskip
Now, to define $\Psi_\al=\sum_{\be\in M_\al}f_\al(\be)|\be\rangle$, it only remains to define $f_\al$, which is supported on $M_\al$. As suggested in \eqref{roughpsial2}, for $u,v\in P_0\cup P_L$,  $p,q\in P_H$, and $u+v=p+q$, we have the following relation between $f_\al(\al)$ and $f_\al(\mathcal A^{u,v}_{p,q}\al)$
\beq\label{temp4.7.01}
f_\al(\mathcal A^{u,v}_{p,q}\al)\approx -(1-\delta_{u,v}/2)\left[w_{(p-u)}+w_{(v-p)}\right]|\Lambda|^{-1}\sqrt{\al(u)\al(v)}f_\al(\al)
\eeq
\par Furthermore, if $\be\in M_\al$ and $\sum_{k\in P_H}\be(k)$ is small (like $<5$), the approximation \eqref{roughpsial2} implies  that for most $u,v,p,q$,  
\beq\label{temp4.7}
f_\al(\mathcal A^{u,v}_{p,q}\be)\approx -(1-\delta_{u,v}/2)\left[w_{(p-u)}+w_{(v-p)}\right]|\Lambda|^{-1}\sqrt{\be(u)\be(v)}f_\al(\be)
\eeq
when $A^{u,v}_{p,q}\be\in M_\al$. Here we have used the fact that  when $\be\in M_\al$  and  $A^{u,v}_{p,q}\be\in M_\al$, $\be(p)=\be(q)=0$. 
\par We hope that for most $u,v\in P_0\cup P_L$,  $p,q\in P_H$, 
the approximation \eqref{temp4.7} would hold for most $\be\in M_\al$ such that $A^{u,v}_{p,q}\be\in M_\al$. Here ''most $\be$ have some property $A$'' means that the probability of finding $\be$ with this property in $\Psi_\al$ is almost one, i.e., 
\beq
\sum_{\be {\rm\,\,\, has\,\,\, property\,\,\,A}}|\langle \be|\Psi_\al\rangle|^2=\sum_{\be {\rm\,\,\, has\,\,\, property\,\,\,A}}|f_\al(\be)|^2\approx 1
\eeq
\par If the approximation \eqref{temp4.7} holds for some  $u,v\in P_0\cup P_L$,  $p,q\in P_H$, then we can easily obtain  
\beqa\label{temp4.9}
&&\langle \Psi_\al| \ann_u\ann_v\cre_p\cre_q|\Psi_\al\rangle\\\nonumber
&&\approx 
-(1-\delta_{u,v}/2)\left[w_{(p-u)}+w_{(v-p)}\right]|\Lambda|^{-1}\sum_{\be\in M_\al}\sqrt{\be(u)\be(v)}|f^2(\be)|
\eeqa
Using the definition of $M_\al$, we may guess that  that for most $\be\in M_\al$,
\beqa
&&\be(u)=\al(u),\,\,\,u,\in P_L\\\nonumber
&&\be(0)\sim\al(0)\,\,\,\,
\eeqa
Therefore 
\beq\label{temp4.10}
\langle \Psi_\al| \ann_u\ann_v\cre_p\cre_q|\Psi_\al\rangle\approx 
-(1-\delta_{u,v}/2)\left[w_{(p-u)}+w_{(v-p)}\right]|\Lambda|^{-1}\sqrt{\al(u)\al(v)}
\eeq
 This approximation \eqref{temp4.10} is very useful for calculating $\langle \Psi_\al| V|\Psi_\al\rangle$. 
\par Now we give the definition of $f_\al$ as follows.  In Lemma \ref{f} we  check that it has this property \eqref{temp4.7}.  
\begin{mydef}{\bf The Pure Trial State $\Psi_\al$}
\par Recall that the function $(1-w)$ is the zero energy scattering solution of the potential $V$, as in \eqref{zess}. Define the pure trial state  $\Psi_{\al}$ as
\beq
|\Psi_{\al}\rangle\equiv\sum_{\be\in  M_\al}f_\al(\be)|\be\rangle
\eeq
where the coefficient $f_\al(\be)$'s are given by 
 \beq\label{deffbe}
f_\al(\be)=C_\al 
\sqrt { \frac  {|\Lambda|^{\be(0)}} {\be(0)!} }\,\,\left(\prod_{k\in P_H}^{ \be(k)>0}\sqrt{-w_k}\right)\,\,\left(\prod_{ k\in P_H}^{\be(k)> \be(-k)}\sqrt{2}\right)\,\,\,\left(\prod_{u\in P_L(\be,\al)}\sqrt{\frac{\al(u)}{|\Lambda|}}\right)
\eeq
Here we follow the convention $\sqrt{x}=\sqrt{|x|}i$ for $x<0$.  For convenience, we define $f(\be)=0$ for $\be\notin M_\al$. 
The constant  $C_\al$ is  chosen so that $\Psi_\al$ is $L_2$ normalized, i.e., 
\[\left\langle \Psi_\al|\Psi_\al\right\rangle=1,\; i.e.,\;\sum_{\be\in M_\al}|f_\al(\be)|^2=1 \]  
\end{mydef}
In next Lemma, with the  $f_\al$ chosen  above, we show that \eqref{temp4.7} holds for most $u$, $v$, $p$, $q$, $\be$ such that $\be\in M_\al$  and  $A^{u,v}_{p,q}\be\in M_\al$.
\begin{lem} \label{f}
\begin{enumerate}
\item  If $k\in P_H$ and $\be\in M_\al, \mathcal A^{0,0}_{k,-k} \be  \in M_\al$, then 
	\beq \label{properf1}f_\al( \mathcal A^{0,0}_{k,-k}\be )=(-w_{k})\sqrt{\frac{\be(0)}{|\Lambda|}}\sqrt{\frac{\be(0)-1}{|\Lambda|}} f_\al(\be)
	\eeq
	
	\item  If $u_1, u_2\in P_L$, $u_2=\pm u_1$ or $u_2\in B_L(u_1)$, $k_1,k_2\in P_H$ and  $\be\in M_\al$, then $\gamma=\mathcal A^{u_1,u_2}_{k_1,k_2} \be  \notin M_\al$, i.e., $f_\al(\gamma)=0$.
	
	\item  If $u_1,u_2\in P_L\cup P_0$ and $u_2\neq \pm u_1$, $k_1,k_2\in P_H$, $\be\in M_\al$ and $\mathcal A^{u_1,u_2}_{k_1,k_2} \be  \in M_\al$,
	 then  when $\be(-p)=\be(-q)=0$, we have
	\beq \label{properf3}f_\al( \mathcal A^{u_1,u_2}_{k_1,k_2}\be )=2\sqrt{-w_{k_1}}\sqrt{-w_{k_2}}\sqrt{\frac{\be(u_1)}{|\Lambda|}}\sqrt{\frac{\be(u_2)}{|\Lambda|}} f_\al(\be)
	\eeq
	when $\be(-p)\neq 0$ or $\be(-q)\neq 0$, we have
	\beq \label{properf32}\left|f_\al( \mathcal A^{u_1,u_2}_{k_1,k_2}\be )\right|\leq \left|\sqrt{w_{k_1}}\sqrt{w_{k_2}}\sqrt{\frac{\be(u_1)}{|\Lambda|}}\sqrt{\frac{\be(u_2)}{|\Lambda|}} f_\al(\be)\right|
	\eeq
	\end{enumerate}
\end{lem}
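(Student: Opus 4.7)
The proof plan is a direct computation: for each of the three assertions we compare $f_\al(\gamma)$ with $f_\al(\be)$ using the product formula \eqref{deffbe}, tracking how the operator $\mathcal{A}^{u_1,u_2}_{k_1,k_2}$ (or $\mathcal{A}^{0,0}_{k,-k}$) alters each of the four factors: the $P_0$-factor $\sqrt{|\Lambda|^{\be(0)}/\be(0)!}$, the $P_H$-product $\prod\sqrt{-w_k}$, the $\sqrt{2}$ subproduct over $\{k\in P_H:\be(k)>\be(-k)\}$, and the $P_L$-product $\prod\sqrt{\al(u)/|\Lambda|}$. The radial symmetry of $V$ gives $w_{-k}=w_k$, which together with the convention $\sqrt{x}=i\sqrt{|x|}$ for $x<0$ ensures that $(\sqrt{-w_k})(\sqrt{-w_{-k}})=-w_k$ regardless of the sign of $w_k$; this identity is the source of the $-w_k$ in \eqref{properf1}.

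For item 1, set $\gamma=\mathcal{A}^{0,0}_{k,-k}\be$. The hypothesis $\gamma\in M_\al$ forces $\gamma(k)=\gamma(-k)=1$ by condition~3 of Def.~\ref{defwMal}, hence $\be(k)=\be(-k)=0$. The $P_0$-factor then changes by the multiplicative ratio $\sqrt{\be(0)(\be(0)-1)/|\Lambda|^2}$; the $P_H$-product gains the two new entries $\sqrt{-w_k}\sqrt{-w_{-k}}=-w_k$; the $\sqrt{2}$ subproduct is unchanged, since $\be(k)=\be(-k)$ both before and after the action; and the $P_L$-product is unchanged because $\mathcal{A}^{0,0}_{k,-k}$ leaves $P_L(\cdot,\al)$ untouched. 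Multiplying the four contributions gives \eqref{properf1}.

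For item 2, each of the three excluded subcases is ruled out by a distinct clause of the definitions. If $u_2=u_1\in P_L$, the operator subtracts two units at $u_1$, giving $\gamma(u_1)\leq\al(u_1)-2$, which violates condition~3 of Def.~\ref{defwMal}. If $u_2=-u_1\neq 0$, then $\gamma\in\widetilde{M}_\al$ first forces $\be(u_1)=\al(u_1)$ and $\be(-u_1)=\al(-u_1)$, after which $\{u_1,-u_1\}\subset P_L(\gamma,\al)$ violates clause~1 of the non-trivial-subset condition in Def.~\ref{defmal}. Finally, if $u_1\neq u_2$ lie in the same small box $B_L(u_1)$, then two distinct changed momenta appear in that box, violating the at-most-one-per-box requirement of condition~2 of Def.~\ref{defwMal}. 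In each subcase $\gamma\notin M_\al$, so $f_\al(\gamma)=0$ by the convention extending $f_\al$ by zero.

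For item 3, the operator removes one particle at each of $u_1,u_2$ and adds one at each of $k_1,k_2$; the condition $u_2\neq\pm u_1$ forces $k_1\neq -k_2$, and $\gamma\in\widetilde{M}_\al$ forces $\be(u_i)=\al(u_i)$ for $u_i\in P_L$. When $\be(-k_1)=\be(-k_2)=0$, the $P_H$-product gains $\sqrt{-w_{k_1}}\sqrt{-w_{k_2}}$; the combined $P_0/P_L$-contribution is $\sqrt{\be(u_1)/|\Lambda|}\sqrt{\be(u_2)/|\Lambda|}$ (with $\sqrt{\be(0)/|\Lambda|}$ per $u_i=0$ coming from the $P_0$-factor); and the $\sqrt{2}$ subproduct gains a factor $\sqrt{2}\cdot\sqrt{2}=2$, one per pair $\{k_i,-k_i\}$, since $\gamma(k_i)=1>0=\gamma(-k_i)$. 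Multiplying yields \eqref{properf3}. In the remaining subcase, whenever $\be(-k_i)=1$ the pair $\{k_i,-k_i\}$ already contributes a $\sqrt{2}$ to $f_\al(\be)$ (via $-k_i$) but contributes nothing to $f_\al(\gamma)$ (because $\gamma(k_i)=\gamma(-k_i)=1$), so the ratio loses a factor $\sqrt{2}$ for each such $i$; this produces the bound \eqref{properf32}. The main obstacle is precisely this bookkeeping of the $\sqrt{2}$ subproduct, whose response to $\mathcal{A}^{u_1,u_2}_{k_1,k_2}$ depends on whether $-k_1$ or $-k_2$ was already occupied in $\be$, and is exactly what separates the equality from the upper bound; once these cases are sorted, the remaining substitutions are routine.
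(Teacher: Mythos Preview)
Your proof is correct and is exactly the direct verification from the definition \eqref{deffbe} that the paper leaves implicit; the paper states this lemma without proof, so your factor-by-factor bookkeeping of the $P_0$-, $P_H$-, $\sqrt{2}$-, and $P_L$-products is the intended (and only natural) argument.
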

 \par Again the result 2 in Lemma \ref{f} has no physical meaning, but it can simplify our proof. 
 \par 
 In next section, we can see that, for fixed $p\in P_H$ and  most $\be\in M_\al$,  $\be(-p)=0$. Hence the identity \eqref{properf1} or  \eqref{properf3} hold for most $\be\in M_\al$. Since $k_1, k_2$ are order one and $u_1,u_2\in P_0\cup P_L$, we have 
 \beq
 w_{k_1}\approx w_{k_2}\approx w_{k_1-u_1}\approx w_{k_1-u_2}=w_{u_2-k_1}
 \eeq
 which implies that $f_\al$ satisfies the property \eqref{temp4.7} in most case.

\section{Proof of Theorem \ref{mainlemma}}
\begin{proof}
Our goal is to prove 
\beqa\label{Maindesired}
\langle \Psi_\al|H_N|\Psi_\al\rangle-\langle \al |H_N|\al\rangle+(\frac12 V_0-4\pi a)N_\al\Lambda^{-1}
\leq  \eps_\rho\rho^2\Lambda
\eeqa 
\par First we decompose the Hamiltonian $H_N$ as in \cite{YY}. 
By the rule 1 of the definition of $\widetilde M_\al$, if $\be\in M_\al\subset \widetilde M_\al$ then $\be(k)$ is equal to $\al(k)$ for any $k\in P_I$.  Hence if $k_1\in P_I$, $ \be, \gamma\in M_\al$ and $\langle \beta|a^\dagger_{k_1}a^\dagger_{k_2}a_{k_3}a_{k_4}|\gamma\rangle\neq 0$, then one of $k_3$ and $k_4$ must be equal to $k_1$. 
\par On the other hand, since the particles with momenta in $P_H$ are created in pairs, the total number of the particles with momenta in $P_H$  is always even.
With these two results and momentum conservation , we can decompose the expectation value $\langle \Psi_\al|H_N|\Psi_\al\rangle$ as follows:
\beq\label{decoHPsial}
\langle H_N\rangle_{\Psi_\al} =\langle \sum_{i=1}^N-\Delta_i\rangle_{\Psi_\al}+\langle H_{abab}\rangle_{\Psi_\al}+\langle H_{\tilde L\tilde L}\rangle_{\Psi_\al}+\langle H_{\tilde LH}\rangle_{\Psi_\al}+\langle H_{HH}\rangle_{\Psi_\al},
\eeq
where 
\begin{enumerate}
	\item $H_{abab}$ is the part of interaction that annihilates two particles and creates the same two particles, i.e.,
	\beq\label{defHabab}
	H_{abab}= |2\Lambda|^{-1} \sum_{u}V_{0}a^\dagger_u a^\dagger_u a_u a_u+ |2\Lambda|^{-1} \sum_{u\neq v}(V_{u-v}+V_{0})a^\dagger_u a^\dagger_v a_u a_v
	\eeq
	\item $H_{\widetilde L\widetilde L}$ is the interaction between four particles with momenta in $P_{\widetilde L}$:
	\beq
	P_{\widetilde L}\equiv P_0\cup P_L
	\eeq
and 
	\beq
 H_{\widetilde L\widetilde L}=	|2\Lambda|^{-1}\sum_{u_i\in 	P_{\widetilde L}}V_{u_3-u_1}a^\dagger_{u_1} a^\dagger_{u_2} a_{u_3} a_{u_4},
	\eeq
	where $u_1\neq u_3$ or $u_4$. 
	\item $H_{\widetilde L H}$ is the part of interaction  that involves two particles with momenta in $P_{\widetilde L}$ and two particles with momenta in $P_{ H}$ i.e., 
	\beqa
	H_{\widetilde L H}=&&|2\Lambda|^{-1} \sum_{u_1,u_2\in P_{\widetilde L}, k_1,k_2\in P_H} V_{u_1-k_1}a^\dagger_{u_1} a^\dagger_{u_2} a_{k_1} a_{k_2}+H.C.\\\nonumber
	+&&|2\Lambda|^{-1} \sum_{u_1,u_2\in P_{\widetilde L}, k_1,k_2\in P_H} 2(V_{u_1-u_2}+V_{u_1-k_2})a^\dagger_{u_1} a^\dagger_{k_1} a_{u_2} a_{k_2},
	\eeqa
	where $u_1\neq u_2$ and $H.C.$  denotes the hermitian conjugate of the first term. 
	\item $H_{HH}$ is the part of   interaction between 4 particles with momenta in $P_{H}$, 
		\beq
 H_{HH}=	|2\Lambda|^{-1}\sum_{k_i\in 	P_{ H}}V_{k_3-k_1}a^\dagger_{k_1} a^\dagger_{k_2} a_{k_3} a_{k_4},
	\eeq
		where $k_1\neq k_3$ or $k_4$. 
\end{enumerate} 
With these definitions, since there is no high momentum particle in $|\al\rangle$ ($\al\in M$),  the total energy of $|\al\rangle$ is :
\beq\label{decomHal}
\langle\al| H_N|\al\rangle =\langle\al| \sum_{i=1}^N-\Delta_i|\al\rangle+\langle\al |H_{abab}|\al\rangle
\eeq
Recall the definition of $N_\al$ for $\al\in M$ in \eqref{defNal}. The estimates for the energies of these components   in \eqref{decoHPsial}  are stated as the following lemmas, which will be proved 
in later sections with different methods.

\begin{lem}\label{lemkinetic}The total kinetic energy is bounded  from above by
\beq\label{proofthem1-1}
\left\langle\sum_{i=1}^N-\Delta_i\right\rangle_{\Psi_\al}-\left\langle\sum_{i=1}^N-\Delta_i\right\rangle_{\al}-\|\nabla w\|^2_2 N_\al|\Lambda|^{-1}
\leq \eps_1\rho^2\Lambda,
\eeq
where $\eps_1$ is independent of $\al$ and $\lim_{\rho\to 0}\eps_1=0$.
\end{lem}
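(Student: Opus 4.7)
My plan is to decompose the kinetic-energy difference in momentum space. With $T:=\sum_{i=1}^N-\Delta_i=\sum_{p\in\Lambda^*}p^2\cre_p\ann_p$, one has
\[
\langle T\rangle_{\Psi_\al}-\langle T\rangle_\al=\sum_{p\in\Lambda^*}p^2\sum_{\be\in M_\al}|f_\al(\be)|^2\bigl(\be(p)-\al(p)\bigr),
\]
and I would examine each momentum regime separately. By the definition of $M_\al$, every $\be\in M_\al$ satisfies $\be(p)=\al(p)$ on $P_I$ and $\be(p)\leq\al(p)$ on $P_0\cup P_L$. Since $p^2=0$ on $P_0$ and the $P_L$ contribution is therefore non-positive (hence harmless for an \emph{upper} bound), the problem reduces to establishing
\[
\sum_{p\in P_H}p^2\,E(p)\leq \|\nabla w\|_2^2\,N_\al|\Lambda|^{-1}+o(\rho^2|\Lambda|),\quad E(p):=\sum_{\be\in M_\al}|f_\al(\be)|^2\be(p).
\]

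The central step is to evaluate $E(p)$ for $p\in P_H$ via Lemma~\ref{f} as a recursion. For each $\be\in M_\al$ with $\be(p)=1$ I would identify a ``parent'' $\be^\circ\in M_\al$ with $\be^\circ(p)=0$ such that $\be=\mathcal A^{u,v}_{p,q}\be^\circ$ for some pair $(u,v)\in(P_0\cup P_L)^2$ and $q=u+v-p\in P_H$. The identities \eqref{properf1} and \eqref{properf3} then give the weight ratio
\[
|f_\al(\be)|^2=\frac{c_{u,v}\,|w_p||w_q|\,\be^\circ(u)\be^\circ(v)}{|\Lambda|^2}\,|f_\al(\be^\circ)|^2,
\]
with $c_{0,0}=1$ (type $\mathcal A^{0,0}_{p,-p}$, $q=-p$) and $c_{u,v}=4$ for $u\neq\pm v$. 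Summing over $\be^\circ$ and over pairs $(u,v)$, using both that the typical $\be^\circ$ in the support of $\Psi_\al$ has low-momentum occupations close to those of $\al$ (so $\sum_{\be^\circ}|f_\al(\be^\circ)|^2\be^\circ(u)\be^\circ(v)\approx\al(u)\al(v)$) and that $|w_q|\approx|w_p|$ since $q$ differs from $-p$ only by $|u|,|v|=O(\rho^{1/3})$ (quantified via \eqref{derwp}), one obtains to leading order
\[
E(p)\approx\frac{|w_p|^2}{|\Lambda|^2}\,N_\al,
\]
the combinatorial factors $c_{u,v}$ reproducing \eqref{defNal} after careful accounting of ordered vs.\ unordered pair enumeration.

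Summing over $p\in P_H$ with weight $p^2$ yields $\sum_{p\in P_H}p^2 E(p)\approx N_\al|\Lambda|^{-2}\sum_{p\in\Lambda^*}p^2|w_p|^2=N_\al|\Lambda|^{-1}\|\nabla w\|_2^2$, the truncation tails $|p|<\eps_H$ and $|p|>\eta_H^{-1}$ being controlled respectively by \eqref{boundwp} and \eqref{derwp} (the latter invoking the $C^\infty$ assumption on $V$), and together contributing a correction of size $o(\|\nabla w\|_2^2)N_\al|\Lambda|^{-1}=o(\rho^2|\Lambda|)$. The main obstacle will be the rigorous error control of this parent recursion: (a) the normalization constant $|C_\al|^2$ implicit in \eqref{deffbe} must be shown to satisfy $|C_\al|^2=1+o(1)$ uniformly in $\al\in M$, so that the recursion closes without bias; (b) the identity \eqref{properf3} applies only when $\be^\circ(-p)=\be^\circ(-q)=0$, the remaining configurations being absorbed via the weaker bound \eqref{properf32}; and (c) states $\be\in M_\al$ carrying many $P_H$-excitations admit ambiguous parents, and their contribution must be shown negligible. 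All three I expect to handle by Cauchy--Schwarz / Markov-type estimates on the expected number of high-momentum excitations, themselves derived from the same formula for $E(p)$.
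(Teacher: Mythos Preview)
Your approach is correct and essentially identical to the paper's: reduce to $\sum_{k\in P_H}k^2\,Q_\al(k)$ via the non-positivity of the $P_0\cup P_I\cup P_L$ contribution, then estimate $Q_\al(k)$ through the parent recursion of Lemma~\ref{f}. The paper packages the second step as Lemma~\ref{boundPH}, yielding directly $Q_\al(k)\le N_\al|\Lambda|^{-2}|w_k|^2+\rho^{7/3-7\eta}$, after which the kinetic bound is immediate. Your worries (a)--(c) are largely unnecessary for an \emph{upper} bound: the normalization $\sum_\be|f_\al(\be)|^2=1$ is built into the definition of $\Psi_\al$ (so no statement about $C_\al$ is needed), parent ambiguity only causes overcounting (which goes in the right direction), and the exceptional configurations in (b) are already absorbed by the one-sided inequality \eqref{properf32}.
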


\begin{lem}\label{lemHabab}
The expectation value of $H_{abab}$ is bounded  above by,
\beq\label{proofthem1-2}
\left\langle H_{abab}\right\rangle_{\Psi_\al}-\left\langle H_{abab}\right\rangle_{\al}
\leq \rho^{11/4}\Lambda
\eeq
\end{lem}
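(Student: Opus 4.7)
\emph{Proof proposal.} The starting observation is that $H_{abab}$ is diagonal in the occupation-number basis: using $\cre_u\cre_u\ann_u\ann_u=N_u(N_u-1)$ and $\cre_u\cre_v\ann_u\ann_v=N_uN_v$ for $u\neq v$, a short rearrangement with $\sum_{u,v}N_uN_v=N^2$ on the $N$-particle sector yields
\begin{equation*}
H_{abab} \;=\; \frac{V_0(N^2-N)}{2|\Lambda|} \;+\; \frac{1}{2|\Lambda|}\sum_{u\neq v}V_{u-v}N_uN_v.
\end{equation*}
The first piece is a constant on $\mathcal{H}_{N,\Lambda}$, so it cancels in the difference; since $\Psi_\al$ expands in the occupation basis as $\sum_{\be\in M_\al}f_\al(\be)|\be\rangle$ with $\sum_\be|f_\al(\be)|^2=1$, we obtain
\begin{equation*}
\langle H_{abab}\rangle_{\Psi_\al} - \langle H_{abab}\rangle_\al \;=\; \frac{1}{2|\Lambda|}\sum_{u\neq v}V_{u-v}\bigl[\langle N_uN_v\rangle_{\Psi_\al} - \al(u)\al(v)\bigr].
\end{equation*}

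I then introduce the pointwise quantity $\delta(u,\be):=\be(u)-\al(u)$. Lemma~\ref{BpMal} together with Definition~\ref{defwMal} furnishes the deterministic constraints $\delta(u,\be)=0$ for $u\in P_I$; $\delta(u,\be)\in\{-1,0\}$ for $u\in P_L$; $\delta(u,\be)\in\{0,1\}$ for $u\in P_H$; and $-\delta(0,\be)=d(\be)\in[0,2K(\be)]$, where $K(\be)$ is the number of $\mathcal{A}$-operators used to reach $\be$ from $\al$. Hence $\sum_u|\delta(u,\be)|\leq 6K(\be)$, and only pairs $(u,v)$ with at least one coordinate outside $P_I$ contribute. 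Expanding $N_uN_v-\al(u)\al(v)=\al(u)\delta(v)+\al(v)\delta(u)+\delta(u)\delta(v)$, using $|V_{u-v}|\leq V_0$ and $\sum_u\al(u)=N$, the triangle inequality gives the crude estimate
\begin{equation*}
\bigl|\langle H_{abab}\rangle_{\Psi_\al}-\langle H_{abab}\rangle_\al\bigr| \;\leq\; \frac{6V_0 N}{|\Lambda|}\langle K\rangle_{\Psi_\al} \;+\; \frac{18V_0}{|\Lambda|}\langle K^2\rangle_{\Psi_\al},
\end{equation*}
where $K=\tfrac12\sum_{p\in P_H}N_p$ (a self-adjoint counting operator) takes the value $K(\be)$ on $|\be\rangle$.

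The lemma thus reduces to moment bounds for $K$ in the state $\Psi_\al$, uniformly in $\al\in M$, which are among the high-momentum particle-number estimates deferred to Section~6. Because the UV cutoff $\eps_H=\rho^{1/200}$ of $P_H$ sits far above the healing length, a direct computation from the explicit formula \eqref{deffbe} for $f_\al$, combined with $|w_p|\leq 4\pi a|p|^{-2}$ and $\al(0)\leq N=\rho|\Lambda|$, yields $\langle K\rangle_{\Psi_\al}\lesssim a^2\rho^2|\Lambda|/\eps_H\sim\rho^{-11/200}$ and an analogous bound for $\langle K^2\rangle_{\Psi_\al}$. Substituting $N=\rho|\Lambda|$ and $|\Lambda|=\rho^{-41/20}$, the two contributions in the displayed estimate become $O(\rho^{189/200})$ and $O(\rho^{388/200})$, both much smaller than $\rho^{11/4}|\Lambda|=\rho^{7/10}$, which proves the claim. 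The real difficulty is therefore not the $H_{abab}$ computation itself, which is essentially bookkeeping once one notes it is diagonal, but in the uniform moment bounds on $K$ under the non-product measure $|f_\al|^2$; these rely on the recursion for $f_\al$ in Lemma~\ref{f} and are supplied by the Section~6 estimates.
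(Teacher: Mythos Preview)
Your reduction is correct and the approach works, but it diverges from the paper's argument in a way worth noting. Both proofs begin the same: observe that $H_{abab}$ is diagonal in occupation number and that the $V_0(N^2-N)/2|\Lambda|$ piece cancels. After that you take absolute values, expand $\be(u)\be(v)-\al(u)\al(v)$ in $\delta=\be-\al$, and reduce everything to moments of the high-momentum counting variable $K=\tfrac12\sum_{p\in P_H}N_p$. The paper instead exploits a sign: for $u,v\in P_0\cup P_I\cup P_L$ one has $|u-v|\ll 1$ so $V_{u-v}>0$ by smoothness, while $\be(u)\le\al(u)$ there; hence all low-momentum pairs drop out with the correct sign and only terms with at least one index in $P_H$ survive, which are then controlled by $V_0\rho\sum_{k\in P_H}Q_\al(k)$ alone, with no second moment needed.

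One point to tighten: your claimed bound $\langle K^2\rangle_{\Psi_\al}=O(\rho^{-22/200})$ is not among the Section~6 estimates and would require either a correlation bound on $Q_\al(k_1,k_2)$ for $k_1,k_2\in P_H$ or a concentration argument for the near-product structure in \eqref{deffbe}. You do not actually need it: since $K\le N$ deterministically, $\langle K^2\rangle\le N\langle K\rangle$, so the quadratic term is bounded by $18V_0\rho\langle K\rangle$, the same order as the linear term, and your first-moment estimate already closes the argument. Your $\langle K\rangle$ bound is correct and does follow from summing Lemma~\ref{boundPH} over $P_H$. So the proof is valid once you replace the asserted second-moment bound by the trivial one; the paper's positivity trick is simply a shorter route to the same place.
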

\begin{lem}\label{lemHwLwL}
The expectation value of $H_{\wL\wL}$ is bounded  above by,
\beq\label{proofthem1-3}
\left\langle H_{\wL\wL}\right\rangle_{\Psi_\al}
\leq \rho^{11/4}\Lambda
\eeq
\end{lem}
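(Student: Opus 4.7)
The plan rests on two structural properties of $H_{\wL\wL}$: it is purely off-diagonal in the occupation-number basis (the exclusion $(u_1,u_2)\neq(u_3,u_4)$ in its definition kills all diagonal terms, so $\langle\be|H_{\wL\wL}|\be\rangle=0$ for every Fock basis state $|\be\rangle$), and it commutes with every number operator $a^\dagger_k a_k$ with $k\in P_H$. Combined with the fact that, by Lemma~\ref{BpMal}, $\al$ is the unique element of $M_\al$ with zero particles in $P_H$ (every other $\be\in M_\al$ carries $\geq 2$), these properties yield the reduction
\[
\langle\Psi_\al|H_{\wL\wL}|\Psi_\al\rangle = \langle\Psi_\al'|H_{\wL\wL}|\Psi_\al'\rangle,\qquad \Psi_\al':=\Psi_\al-f_\al(\al)|\al\rangle,
\]
because every cross term involving $|\al\rangle$ and a $\be\neq\al$ vanishes by $P_H$-orthogonality, and $\langle\al|H_{\wL\wL}|\al\rangle=0$ by off-diagonality.

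Next I would decompose $\Psi_\al'=\sum_{\kappa\neq 0}\Psi_\al^{(\kappa)}$ according to the $P_H$-occupation pattern $\kappa$; since $H_{\wL\wL}$ preserves each sector,
\[
\langle\Psi_\al|H_{\wL\wL}|\Psi_\al\rangle = \sum_{\kappa\neq 0}\langle\Psi_\al^{(\kappa)}|H_{\wL\wL}|\Psi_\al^{(\kappa)}\rangle.
\]
For each sector with $|\kappa|=2m+2n$, the explicit formula for $f_\al(\be)$ factors as $f_\al(\al)$ times a product of $\sqrt{-w_k}$ over the excited $P_H$-modes and $\sqrt{\al(u)/|\Lambda|}$ over the $P_{\widetilde L}$-holes. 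Substituting into the bilinear sum and using the explicit values of the matrix elements $\langle\gamma|a^\dagger_{u_1}a^\dagger_{u_2}a_{u_3}a_{u_4}|\be\rangle$ extracted from Lemma~\ref{f}, I would express each sector's contribution as a convolution-type sum in $w$, $\al$, and $\hat V$, and bound it using $|w_k|\leq 4\pi a/|k|^2$, $|\hat V_p|\leq\|V\|_1$, $\al(u)\leq m_c=\rho^{-3\eta}$ for $u\in P_L$, and $\al(0)\leq N$.

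The crucial smallness comes from the overall prefactor $|f_\al(\al)|^2$: by the normalization $\sum_{\be\in M_\al}|f_\al(\be)|^2=1$ and the Gaussian-like combinatorics of sectors (where each excited pair in $P_H$ contributes a multiplicative factor of order $|\Lambda|\rho^{2-\eta}$, which diverges as $\rho\to 0$), one obtains $|f_\al(\al)|^2\lesssim\exp(-c|\Lambda|\rho^{2-\eta})$. Since every sector's contribution carries $|f_\al(\al)|^2$ as an overall prefactor, this exponential smallness dominates any polynomial losses from the sector summations over $P_H$ and $P_{\widetilde L}$ momenta and comfortably yields the bound $\rho^{11/4}|\Lambda|$.

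The main obstacle is the combinatorial bookkeeping across all $\kappa$ sectors of unbounded size, while respecting the non-trivial-subset condition of Def.~\ref{defmal} (which excludes certain $(\be,\gamma)$ pairs from $M_\al$) and tracking the dependence of $\al$ on the finite-temperature distribution. The technical framework follows the sector-by-sector strategy of \cite{YY} for the zero-temperature Lee-Huang-Yang upper bound, extended to handle the full variation of $\al$ over the set $M$.
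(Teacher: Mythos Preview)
Your reduction in the first paragraph is correct: $H_{\wL\wL}$ is off-diagonal in the occupation basis, it commutes with the $P_H$ number operators, and $\al$ is the unique element of $M_\al$ with no $P_H$ particles, so $\langle\Psi_\al|H_{\wL\wL}|\Psi_\al\rangle=\langle\Psi_\al'|H_{\wL\wL}|\Psi_\al'\rangle$. But the mechanism you propose for smallness is a genuine gap. You write that ``every sector's contribution carries $|f_\al(\al)|^2$ as an overall prefactor'' and that this exponential smallness dominates ``polynomial losses'' from the sector summations. That is not how the normalization works. Writing $f_\al(\be)=f_\al(\al)\,r(\be)$, the constraint $\sum_{\be}|f_\al(\be)|^2=1$ forces $\sum_{\be}|r(\be)|^2=|f_\al(\al)|^{-2}$, which by your own computation is $\exp(+c|\Lambda|\rho^{2-\eta})$, not polynomial. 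So the sector sums are exponentially \emph{large}, and the prefactor $|f_\al(\al)|^2$ cancels exactly against them; nothing is left over to kill $\langle\Psi_\al'|H_{\wL\wL}|\Psi_\al'\rangle$.

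The smallness actually has a different, polynomial source which your outline never invokes. For $\langle\be|a^\dagger_{u_1}a^\dagger_{u_2}a_{u_3}a_{u_4}|\gamma\rangle\neq 0$ with $u_1\neq u_3,u_4$ and $\be,\gamma\in M_\al$, the state $\gamma$ must satisfy $\gamma(u_1)<\al(u_1)$ (and likewise $\gamma(u_2)<\al(u_2)$ when $u_2\neq u_3,u_4$), i.e.\ $\gamma\notin M_\al(u_1)$ or $\gamma\notin M_\al(u_1)\cup M_\al(u_2)$. The paper then bounds $\sum_{\gamma\notin M_\al(u)}|f_\al(\gamma)|^2\leq C\rho^{1-4\eta}$ and $\sum_{\gamma\notin M_\al(u_1)\cup M_\al(u_2)}|f_\al(\gamma)|^2\leq C\rho^{2-8\eta}$ via Lemma~\ref{roughbound}; combined with $\al(u)\leq m_c$ and $|V_p|\leq V_0$ this gives (separately for the cases $(0,0,u,-u)$, $(0,u_2,u_3,u_4)$, and all four in $P_L$) bounds of size $\Lambda^2\rho^{3-5\eta}$ and $\Lambda^3\rho^{5-9\eta}$, from which $\rho^{11/4}\Lambda$ follows. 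The non-trivial subset condition of Def.~\ref{defmal} is used directly to make the $(0,0,u,-u)$ contribution vanish. None of this appears in your outline, and without it the argument does not close.
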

\begin{lem}\label{lemHwLH}
The expectation value of $H_{\wL H}$ is bounded above by,
\beq\label{proofthem1-4}
\left\langle H_{\wL H}\right\rangle_{\Psi_\al}
+N_\al |\Lambda|^{-1}\|Vw\|_1
\leq \eps_2\rho^2\Lambda,
\eeq
where $\eps_2$ is independent of $\al$ and $\lim_{\rho\to 0}\eps_2=0$.
\end{lem}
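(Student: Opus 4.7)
The plan is to split $H_{\wL H}=H^{(1)}+H^{(2)}$, where
\[
H^{(1)}=\frac{1}{2|\Lambda|}\sum_{u_1,u_2\in P_{\wL},\,k_1,k_2\in P_H} V_{u_1-k_1}\cre_{u_1}\cre_{u_2}\ann_{k_1}\ann_{k_2}+{\rm H.C.}
\]
is the pair creation/annihilation piece, and $H^{(2)}$ collects the density--density scattering $|\Lambda|^{-1}(V_{u_1-u_2}+V_{u_1-k_2})\cre_{u_1}\cre_{k_1}\ann_{u_2}\ann_{k_2}$. I expect $\langle H^{(1)}\rangle_{\Psi_\al}$ to supply the entire leading term $-N_\al\|Vw\|_1/|\Lambda|$, while $\langle H^{(2)}\rangle_{\Psi_\al}=o(\rho^2|\Lambda|)$.

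For $H^{(1)}$, I would start from the elementary identity $\langle\be|\cre_{u_1}\cre_{u_2}\ann_{k_1}\ann_{k_2}|A^{u_1,u_2}_{k_1,k_2}\be\rangle=\sqrt{\be(u_1)\be(u_2)}$ (valid since $\be(k_1)=\be(k_2)=0$ whenever both $\be$ and $A^{u_1,u_2}_{k_1,k_2}\be$ belong to $M_\al$). Combining with Lemma \ref{f}(1)--(3), simplifying $2\sqrt{-w_{k_1}}\sqrt{-w_{k_2}}\approx -[w_{k_1-u_1}+w_{u_2-k_1}]$ via $k_2\approx -k_1$, the parity of $w$, and \eqref{derwp}, and using the occupation estimates of Section~6 to replace $\sqrt{\be(u_1)\be(u_2)}$ by $\sqrt{\al(u_1)\al(u_2)}$ on the typical $\be\in M_\al$, I recover the heuristic \eqref{temp4.10}. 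Summing against $V_{u_1-k_1}/(2|\Lambda|)$ and adding the H.C., the inner $k_1$-sum collapses to $\sum_{k_1\in P_H}V_{u_1-k_1}[w_{k_1-u_1}+w_{u_2-k_1}]=2|\Lambda|\|Vw\|_1+o(|\Lambda|)$ by Parseval, with errors from the replacements $V_{u_1-k_1}\to V_{k_1}$ and $w_{u_2-k_1}\to w_{-k_1}$ controlled by \eqref{derwp} and $|u_i|\lesssim\rho^{1/3}$, and from the tails $|k|<\eps_H$, $|k|>\eta_H^{-1}$ controlled by the smoothness of $V$ together with the decay \eqref{boundwp}. The remaining sum on $u_1,u_2$, restricted by Lemma \ref{f}(2) to exclude $u_1=\pm u_2\in P_L$, reduces by direct case check to
\[
2\sideset{}{'}\sum_{u_1,u_2\in P_{\wL}}\bigl(1-\tfrac12\delta_{u_1,u_2}\bigr)\al(u_1)\al(u_2)=\al(0)^2+2\!\!\sum_{\substack{u,v\in P_0\cup P_L\\ u\neq\pm v}}\!\!\al(u)\al(v)=N_\al,
\]
yielding $\langle H^{(1)}\rangle_{\Psi_\al}=-N_\al\|Vw\|_1/|\Lambda|+o(\rho^2|\Lambda|)$.

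For $H^{(2)}$, a nonzero matrix element requires both $\be,\gamma\in M_\al$ to contain a $P_H$-pair, so by Lemma \ref{f} both $|f_\al(\be)|$ and $|f_\al(\gamma)|$ carry factors of $|w_k|/|\Lambda|$ multiplied by low-momentum square roots. Using $\sum_{k\in P_H}|w_k|^2\leq C\sum_{k\in P_H}|k|^{-4}\leq C|\Lambda|$ (from \eqref{boundwp}) together with $\|V\|_\infty<\infty$, one quickly bounds $|\langle H^{(2)}\rangle_{\Psi_\al}|=O(a^2\rho^3|\Lambda|)=o(\rho^2|\Lambda|)$.

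The main obstacle will be the meticulous accounting for the ``bad'' $\be\in M_\al$ in the $H^{(1)}$ computation: those violating $\be(-k_i)=0$ (where only the upper bound \eqref{properf32} of Lemma \ref{f}(3) applies, instead of the equality \eqref{properf3}), those with $\be(u_i)\neq\al(u_i)$, and the boundary pairs $u_1,u_2$ excluded by the $B_L$-box structure of Def.~\ref{defwMal}. Each contributes an error whose cumulative smallness after integration over the $u$- and $k$-sums must be verified; making this precise relies on the occupation-number probability bounds of Section~6 combined with the tail estimates \eqref{boundwp} and \eqref{derwp}.
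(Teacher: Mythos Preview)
Your treatment of $H^{(1)}$ is correct and follows the paper's route closely: the paper splits the pair--creation piece into the $u=-u$ case (equation \eqref{lemHwLHuk}) and the $u_1\neq\pm u_2$ case (equation \eqref{lemHwLHuukk}), and in each case uses Lemma~\ref{f} together with the Section~6 occupation bounds exactly as you describe. Your identification of the ``bad'' $\be$ cases in the last paragraph matches precisely the error terms the paper controls via \eqref{boundspsiuroughPHB}, \eqref{boundspsiuroughPLB}, and \eqref{boundmaluv}.

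There is, however, a genuine gap in your $H^{(2)}$ argument. Knowing only that $\be$ and $\gamma$ each carry a $P_H$--pair (hence $|f_\al(\be)f_\al(\gamma)|$ contains factors of $|w_k|$) is \emph{not} enough: if you Cauchy--Schwarz and use only $Q_\al(k)\lesssim\rho^2|w_k|^2$, then for $u_1,u_2\in P_L$ the sum over $u_1,u_2,k_1$ produces a bound of order $|\Lambda|^2\rho^{4-O(\eta)}$, which exceeds $\rho^2|\Lambda|$ by a factor $|\Lambda|\rho^2=\rho^{-1/20}$. The bound $\sum_{k\in P_H}|w_k|^2\le C|\Lambda|$ does not rescue this, and your claimed $O(\rho^3|\Lambda|)$ is in fact stronger than what the paper obtains.

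The missing ingredient is the \emph{joint} smallness: a nonzero matrix element forces $\be\notin M_\al(u_2)\cup M_\al(k_1)$, i.e.\ $\be$ is simultaneously atypical in a $P_L$--mode and a $P_H$--mode. The paper exploits this in two steps: first, from the explicit definition \eqref{deffbe} of $f_\al$ one reads off the ratio
\[
|f_\al(\gamma)|\le \const\Bigl|\sqrt{\al(u_1)/\al(u_2)}\,\sqrt{w_{k_2}/w_{k_1}}\Bigr|\,|f_\al(\be)|,
\]
and second, one sums $|f_\al(\be)|^2$ over $\be\notin M_\al(u_2)\cup M_\al(k_1)$ using the joint probability estimate \eqref{boundspsiukrough}, namely $\sum_{\be\notin M_\al(u)\cup M_\al(k)}|f_\al(\be)|^2\le\const\rho^{3-7\eta}|w_k|$. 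It is this extra factor of $\rho$ (coming from the $P_L$--constraint on top of the $P_H$--constraint) that makes $\langle H^{(2)}\rangle_{\Psi_\al}=o(\rho^2|\Lambda|)$.
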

\begin{lem}\label{lemHHH}
The expectation value of $H_{HH}$ is bounded above by,
\beq\label{proofthem1-5}
\left\langle H_{H H}\right\rangle_{\Psi_\al}
-N_\al|\Lambda|^{-1} \|\frac12Vw^2\|_1
\leq \eps_3\rho^2\Lambda,
\eeq
where $\eps_3$ is independent of $\al$ and $\lim_{\rho\to 0}\eps_3=0$.
\end{lem}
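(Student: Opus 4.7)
The plan is to expand $\langle H_{HH}\rangle_{\Psi_\al}=\sum_{\be,\gamma\in M_\al}f_\al(\be)^*f_\al(\gamma)\langle\be|H_{HH}|\gamma\rangle$ and identify the pair-exchange processes giving the leading contribution. Because $H_{HH}$ annihilates two HH particles and creates two others (and excludes the diagonal already absorbed into $H_{abab}$), and by Lemma \ref{BpMal} every $\be\in M_\al$ has its HH particles grouped into pairs created by $\mathcal A^{0,0}_{p,-p}$ or $\mathcal A^{u_1,u_2}_{k',u_1+u_2-k'}$ with $u_1,u_2\in P_{\wL}$, the non-vanishing $(\be,\gamma)$ pairs split into two types: (a) an $\mathcal A^{0,0}$-pair in $\be$ is replaced by a different $\mathcal A^{0,0}$-pair in $\gamma$; (b) an $\mathcal A^{u_1,u_2}$-pair is replaced by another with the same $\{u_1,u_2\}$.

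For type (a), writing $\be=\mathcal A^{0,0}_{p,-p}\be_0$ and $\gamma=\mathcal A^{0,0}_{p',-p'}\be_0$, identity \eqref{properf1} yields $f_\al(\gamma)^*f_\al(\be)=w_pw_{p'}\be_0(0)(\be_0(0)-1)|\Lambda|^{-2}|f_\al(\be_0)|^2$, and summing the four orderings of $(k_1,k_2,k_3,k_4)$ compatible with the swap gives $\langle\gamma|H_{HH}|\be\rangle=(V_{p-p'}+V_{p+p'})|\Lambda|^{-1}$. Dividing by the $4$-fold overcount from the two labelings of each unordered pair, and applying the Parseval identities $\sum_{p,p'}V_{p\pm p'}w_pw_{p'}=|\Lambda|^2\int Vw^2\,dx$ (using evenness of $w$), the (a)-contribution evaluates to
\beq
|\Lambda|^{-1}\|\tfrac12Vw^2\|_1\sum_{\be_0}\be_0(0)(\be_0(0)-1)|f_\al(\be_0)|^2\approx\al(0)^2|\Lambda|^{-1}\|\tfrac12Vw^2\|_1,
\eeq
the approximation using the condensate-depletion estimates of Section 6 and $\sum|f_\al|^2=1$. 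An analogous analysis for type (b) via \eqref{properf3}, combined with an $8$-fold overcount from the three commuting involutions $(u_1\leftrightarrow u_2,\,k\leftrightarrow u_1+u_2-k,\,k'\leftrightarrow u_1+u_2-k')$ and the approximation $w_{k'}\approx w_{u_1+u_2-k'}$, yields $(N_\al-\al(0)^2)|\Lambda|^{-1}\|\tfrac12Vw^2\|_1$: here the coefficient $2$ in \eqref{properf3} squares to $4$, the $1/8$ overcount, and the ordered-sum convention in \eqref{defNal} combine exactly. Summing (a) and (b) reconstructs $N_\al|\Lambda|^{-1}\|\tfrac12Vw^2\|_1$.

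The main obstacle is controlling all errors at $o(\rho^2|\Lambda|)$. The principal sources are: (i) condensate depletion, the difference $\be_0(0)(\be_0(0)-1)-\al(0)^2$, absorbed by the Section 6 bounds on $\sum_{\be\in M_\al}|f_\al(\be)|^2(\al(0)-\be(0))$; (ii) the discrepancy between $w_{k'}$ and $w_{u_1+u_2-k'}$ for $u_i\in P_{\wL}$, $k'\in P_H$, bounded via \eqref{derwp} by $O(\eps_L\rho^{1/3}\eps_H^{-3})$; (iii) extending $\sum_{p\in P_H}$ to $\sum_{p\in\Lambda^*}$ in the Fourier identity, justified by \eqref{boundwp} together with the choice $\eps_H,\eta_H=\rho^\eta$; (iv) configurations with $\be(-p)\neq 0$ where only the weaker bound \eqref{properf32} applies, the total weight of such $\be$ being small by HH-particle-number concentration from Section 6; and (v) would-be contributions from $(\be,\gamma)$ differing in four HH momenta in non-aligned pairs, excluded by the small-box decomposition $B_H$ with $\kappa_H\geq 1/9$. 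Each error is $o(\rho^2|\Lambda|)$ after combining $|w_p|\leq 4\pi a|p|^{-2}$ with the polynomial slack from the parameter choices.
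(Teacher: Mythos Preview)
Your identification and computation of the main term---types (a) and (b), i.e.\ pairs $(\be,\gamma)$ that differ by swapping a single created pair $\mathcal A^{u_1,u_2}_{k_1,k_2}\tilde\al\leftrightarrow\mathcal A^{u_1,u_2}_{k_3,k_4}\tilde\al$---is essentially the content of the paper's Lemma~\ref{HHHmain}, and your treatment of the approximations (i)--(iv) is in the right spirit.

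The genuine gap is point (v). The configurations you call ``non-aligned pairs''---where, say, $\be$ carries pairs $\{k_1,p\}$ and $\{k_2,q\}$ produced by two distinct $\mathcal A$-operations while $\gamma$ carries $\{k_3,p\}$ and $\{k_4,q\}$---are \emph{not} excluded by the small-box decomposition $B_H$. The boxes only guarantee that within a single state $\be$ at most one high momentum per box is occupied; they impose no constraint whatsoever on how the pairing in $\be$ relates to the pairing in $\gamma$. Such cross-linked configurations do contribute, and they are not a priori small: there can be arbitrarily many pairs involved (up to order $\rho^{1-4\eta-3\kappa_L}$), and the combinatorics of how $\{q_i\}$ and $\{\tilde q_i\}$ interlock is rich.

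This is precisely the error term the paper isolates in Lemma~\ref{HHHerror}, and its control is the hardest part of the $H_{HH}$ estimate. The paper organizes all such $(\be,\gamma)$ by writing both states as products of $\mathcal A$-operators acting on a common $\tilde\al$, then builds a graph on the high momenta whose edges are the $\be$-pairs and $\gamma$-pairs. The graph decomposes into two chains (connecting $k_1,k_2$ to $k_3,k_4$) and loops of length $\ge 4$; the counting bound of Lemma~\ref{lem11.1}, namely $|N(\al,s,\{v_1,\dots,v_t\})|\le t!\,t^{3t/4}|\Lambda|^{(s+t)/4+1}\rho^{-\eta(t+s)}$, is what makes the sum over all such chain/loop structures $o(\rho^2|\Lambda|)$. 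Your proposal contains no substitute for this argument, and without it the error term is uncontrolled.
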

On the other hand, by definition of $w$  in \eqref{zess} and \eqref{defa}
, we have
\beq
 \|\nabla w\|_2^2-\|\frac12Vw\|_1+\|\frac12Vw^2\|_1=0, \;  
\frac12V_0-\|\frac12Vw\|_1=4\pi a
\eeq
Together with  \eqref{decomHal} and \eqref{proofthem1-1}-\eqref{proofthem1-5}, we arrive at the desired result \eqref{Maindesired}.
\end{proof}

\section{Estimates on the Numbers of Particles}
As in \cite {YY}, the first step to prove the Lemma \ref{lemkinetic} to Lemma \ref{lemHHH} is to estimate the particle number of $\Psi_\al$ in the condensate, $P_{L}, P_{I}$, and $P_{H}$. This is the main task of this section and 
we start with  the following  notations.

\begin{mydef}\label{1}
Suppose $u_i\in P=P_0\cup P_I\cup P_L\cup P_H$ for $i = 1, \ldots s$.  The expectation of the  product of  particle numbers with momenta  $u_1$, $\cdots$ $u_s$: 
	\beq\label{defQal}Q_\al  \left(u_1,u_2,\cdots,u_s\right)
\equiv\left\langle\prod_{i=1}^sa^\dagger_{u_i} a_{u_i}\right\rangle_{\Psi_\al}
=\sum_{\be\in M_\al} \prod_{i=1}^s{\be(u_i)|f_\al(\be)|^2}
\eeq

\end{mydef}
\begin{mydef}\label{MaluMBalu} The definition of $M_\al(u)$ and $M^B_\al(u)$
\par We denote by $M_\al(u)$  the set of $\be\in M_\al$'s satisfying $\be(u)=\al(u)$, i.e.
\beq\label{Malu}
M_\al(u)\equiv \{\be\in  M_\al: \be(u)=\al(u)\}
\eeq
Furthermore, with the definition of $B_L(u)$(when $u\in P_L$) and $B_H(u)$(when $u\in P_H$), we define $M^B_\al(u)\subset M_\al(u)$ as the intersection of $M_\al(v)$'s of all $v\in B_L(u)$ (when $u\in P_L$) or $B_H(u)$(when $u\in P_H$) , i.e.,  
\beq\label{MBalu}
M^B_\al(u)\equiv\cap_{v\in B_{L(H)}(u)}M_\al(v)
\eeq
We can see 
\beq
\beta\in M^B_\al(u)\;\Leftrightarrow \;\beta(v)=\al(v)\rmfor \forall v\in B_{L(H)}(u)
\eeq
\end{mydef}
The coefficient function $f_\al$ is supported on $M_\al\subset\widetilde M_\al$. Using  \eqref{bealkpm1}, if $\be\in M_\al$ and $u\in P_L$, either $\be(u)=\al(u)$, i.e., $\be\in M_\al(u)$ or $\be(u)=\al(u)-1$, i.e., $\be\notin M_\al(u)$. Therefore the average  number of the particles with momentum $u$,  for $u\in P_L$, can be written as follows
\beq
Q_\al(u)=\langle a^\dagger_{u} a_{u}\rangle_{\Psi_\al}=\al(u)-\sum_{\be\notin M_\al(u)}|f_\al(\be)|^2.
\eeq
 For any $k\in P_H$, we have 
 \beq
Q_\al(k)=\sum_{\be\notin M_\al(u)}|f_\al(\be)|^2.
\eeq 
The following theorem provides the main estimates on $Q_\al(u)$ and $Q_\al(k)$. 

\begin{lem}\label{roughbound} 
 For small enough $\rho$,  $Q_\al(u)$ and $Q_\al(k)$  can be estimated as follows ($u,u_1,u_2\in P_L$ and $k\in P_H$)
\beqa\label{boundspsiuroughPH}
 Q_\al(k)=\sum_{\be\notin M_\al(k)}|f_\al(\be)|^2&&\leq \const\rho^{2-4\eta},\!\!\rmfor k\in P_H\\
\label{boundspsiuroughPL}
0\leq \al(u)-Q_\al(u)=\sum_{\be\notin M_\al(u)}|f_\al(\be)|^2&&\leq \const\rho^{1-4\eta}, \rmfor u\in P_L
\eeqa
Furthermore, the probabilities of the combined cases are bounded as follows: ($u,u_1,u_2\in P_L$ and $k\in P_H$)
\beqa\label{boundspsiuvroughPL}
\sum_{\be\notin M_\al(u_1)\cup M_\al(u_2)}|f_\al(\be)|^2&&\leq \const \rho^{2-8\eta}\;\;{\rm when}\;u_1\neq u_2\\
\label{boundspsiukrough}
\sum_{\be\notin M_\al(u)\cup M_\al(k)}|f_\al(\be)|^2&&\leq\const \rho^{3-7\eta}|w_k|
\eeqa
\end{lem}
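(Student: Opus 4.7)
\textbf{Proof plan for Lemma \ref{roughbound}.} The strategy is to exploit the explicit coefficient formula \eqref{deffbe} together with the recursive identities of Lemma \ref{f}. Since any $\be\in M_\al$ satisfies $\be(u)\in\{\al(u),\al(u)-1\}$ for $u\in P_L$ and $\be(k)\in\{0,1\}$ for $k\in P_H$ by Definition \ref{defwMal}, the condition $\be\notin M_\al(u)$ is equivalent to $u\in P_L(\be,\al)$, and $\be\notin M_\al(k)$ is equivalent to $\be(k)=1$. Lemma \ref{BpMal} guarantees that every such $\be$ admits at least one factorization $\be=\mathcal A^{u,v}_{p,q}\be'$ or $\be=\mathcal A^{0,0}_{k,-k}\be'$ with $\be'\in M_\al$ in which the excitation responsible for $\be\notin M_\al(u)$ (resp. $\be\notin M_\al(k)$) sits on top, so that Lemma \ref{f} directly expresses $|f_\al(\be)|^2$ in terms of $|f_\al(\be')|^2$ times an explicit factor in the $w$-coefficients and occupation numbers.

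For \eqref{boundspsiuroughPH} I would split $\{\be\in M_\al:\be(k)=1\}$ by $\be(-k)\in\{0,1\}$. When $\be(-k)=1$, item 1 of Lemma \ref{f} gives the exact identity $|f_\al(\be)|^2=|w_k|^2\be'(0)(\be'(0)-1)|\Lambda|^{-2}|f_\al(\be')|^2$ with $\be'\in M_\al$ uniquely determined by $\be$; combining $\be'(0)\leq N$, the scattering-length bound $|w_k|\leq C\rho^{-2\eta}$ from \eqref{boundwp} and $|k|\geq\eps_H=\rho^\eta$, and $\sum_{\be'}|f_\al(\be')|^2=1$ yields $C|w_k|^2\rho^2\leq C\rho^{2-4\eta}$. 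When $\be(-k)=0$, item 3 of Lemma \ref{f} yields $|f_\al(\be)|^2\leq 4|w_k||w_{k_2}|\be'(u)\be'(v)|\Lambda|^{-2}|f_\al(\be')|^2$ for any factorization $\be=\mathcal A^{u,v}_{k,k_2}\be'$; summing over all admissible tuples $(u,v,k_2,\be')$ (which only inflates the bound, since possible overcounting of $\be$ is harmless), using momentum conservation $k_2=u+v-k$ with $|u+v|\leq 2\eps_L^{-1}\rho^{1/3}\ll|k|$ to deduce $|w_{k_2}|\leq C|w_k|$, and bounding $\sum_{u,v\in P_{\wL}}\be'(u)\be'(v)\leq N^2$, again produces $C|w_k|^2\rho^2\leq C\rho^{2-4\eta}$.

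Estimate \eqref{boundspsiuroughPL} follows analogously: every $\be$ with $\be(u)=\al(u)-1$ factors as $\mathcal A^{u,v}_{k_1,k_2}\be'$ with $\be'(u)=\al(u)$, giving $|f_\al(\be)|^2\leq 4|w_{k_1}||w_{k_2}|\al(u)\be'(v)|\Lambda|^{-2}|f_\al(\be')|^2$. This time $k_1\in P_H$ is summed freely and $k_2=u+v-k_1$; with $|w_{k_2}|\leq C|w_{k_1}|$ and the discrete estimate
\beq
\sum_{k_1\in P_H}|w_{k_1}|^2\leq \frac{|\Lambda|}{(2\pi)^3}\int_{|k|\geq\eps_H}\frac{C\,dk}{|k|^4}\leq C|\Lambda|\rho^{-\eta}
\eeq
(from $|w_k|\leq C|k|^{-2}$ and lattice spacing $2\pi/L$), together with $\sum_v\be'(v)\leq N$ and $\al(u)\leq m_c=\rho^{-3\eta}$, one finds a bound of order $\al(u)N|\Lambda|^{-1}\rho^{-\eta}=\al(u)\rho^{1-\eta}\leq C\rho^{1-4\eta}$. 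For the joint bounds \eqref{boundspsiuvroughPL} and \eqref{boundspsiukrough} I would distinguish a \emph{linked} case, in which a single $\mathcal A$-operator carries both marked labels, from an \emph{unlinked} case requiring two distinct $\mathcal A$-operators; iterating the single-label estimates in the unlinked case multiplies the bounds, producing $C\rho^{2-8\eta}$ and $C|w_k|\rho^{3-7\eta}$ respectively, while the linked cases are smaller by an additional factor $|\Lambda|^{-1}$ from momentum conservation collapsing one of the sums.

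The main technical obstacle is the non-uniqueness of the factorization in Lemma \ref{BpMal}: a $\be$ with several excitations admits multiple representations as $\prod\mathcal A^{u,v}_{p,q}\al$, differing in how high-momentum labels are paired with low-momentum labels. I would handle this simply by overcounting, summing the Lemma \ref{f} bounds over all admissible factorizations; since each bound is positive the direction of the inequality is preserved, and the overcount is absorbed by the falling factor $|w|^2/|\Lambda|$ attached to each extra excitation. A secondary technical input is the discrete $|w|^2$-sum above, which is where the pointwise bound $|w_k|\leq C|k|^{-2}$ from \eqref{boundwp} and the lattice spacing $2\pi L^{-1}=2\pi\rho^{41/60}$ of $\Lambda^*$ enter decisively.
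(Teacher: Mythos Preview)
Your approach is essentially the paper's: peel off one $\mathcal A$-operator carrying the marked label via Lemma~\ref{BpMal}, apply Lemma~\ref{f} to express $|f_\al(\be)|^2$ in terms of $|f_\al(\be')|^2$, overcount by summing over all admissible factorizations, and for the joint estimates split into what you call linked/unlinked cases --- exactly the paper's cases 1 and 2 in the proofs of \eqref{boundspsiuvroughPL} and \eqref{boundspsiukrough}. The paper does not separate $\be(-k)\in\{0,1\}$ in the proof of \eqref{boundspsiuroughPH} but treats both uniformly through the single inequality $|f_\al(\be)|^2\leq 4\gamma(u)\gamma(v)|\Lambda|^{-2}|w_kw_p||f_\al(\gamma)|^2$; your split is a harmless reorganization.

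One genuine slip: in the $\be(-k)=0$ branch you write ``$|u+v|\ll|k|$ to deduce $|w_{k_2}|\leq C|w_k|$.'' This is not justified --- you only have the \emph{upper} bound $|w_k|\leq 4\pi a|k|^{-2}$ from \eqref{boundwp}, no lower bound, so $|k_2|\approx|k|$ does not force $|w_{k_2}|\lesssim|w_k|$. The paper instead bounds $|w_kw_p|\leq|w_k|\cdot\max_{p\in P_H}|w_p|\leq C\rho^{-2\eta}|w_k|$, arriving at the intermediate estimate $Q_\al(k)\leq C\rho^{2-2\eta}|w_k|$ (its (6.14)) rather than your claimed $C\rho^2|w_k|^2$. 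This weaker form still yields \eqref{boundspsiuroughPH} after one more application of $|w_k|\leq C\rho^{-2\eta}$, and it is precisely the form needed when you iterate for \eqref{boundspsiukrough} to retain the explicit $|w_k|$ factor. With that correction your plan goes through.
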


\begin{proof}{Proof of Lemma \ref{roughbound}}
\par First, we prove \eqref{boundspsiuroughPH} concerning $k\in P_H$. With Lemma \ref{BpMal}(\eqref{Malform1}-\eqref{Malform3}), when $\be(k)>0$,
there exist some $\gamma\in M_\al$ and $u,v\in P_L\cup P_0$, $p\in P_H$ such that  
\beq\label{auvkpgb}
\mathcal A^{u,v}_{k,p} \gamma=\be\;\rmand\;p=u+v-k
\eeq
With the properties of $f_\al$ in Lemma \ref{f}(\eqref{properf1}-\eqref{properf32}), $f_\al(\be)$ is  bounded as 
\beq\label{roughboundineq1}
|f_\al(\be)|^2\leq 4\gamma(u)\gamma(v)\Lambda^{-2}\left|w_{k}w_{p}\right||f_\al(\gamma)|^2.
\eeq
Then sum up $\be\notin M_\al(k)$, i.e.,  $\be(k)>0$, by summing up $u$, $v$ and $\gamma$, we obtain:
\beqa\nonumber
\sum_{\be\notin M_\al(k)}|f_\al(\be)|^2&\leq& 4\sum_{u,v\in P_L\cup P_0}\sum_{\gamma\in M_\al} \gamma(u)\gamma(v)\Lambda^{-2}\left|w_{k}w_{u+v-k}\right||f_\al(\gamma)|^2\\\label{temp6.13}
&\leq& 4\rho^2|w_{k}|\max_{p\in P_H}\{|w_p|\}
\eeqa
The upper bound of $|w_p|$ is derived  in \eqref{boundwp}: $\left|w_p\right|\leq 4\pi a |p|^{-2}$, therefore
\beq\label{Qalk}
Q_\al(k)=\sum_{\be\notin M_\al(k)}|f_\al(\be)|^2\leq \const \rho^{2-2\eta}|w_k| ,\; k\in P_H
\eeq
Using \eqref{boundwp} again, we obtain \eqref{boundspsiuroughPH}.
\par Then, we prove \eqref{boundspsiuroughPL} concerning $u\in P_L$. 
Similarly, with Lemma \ref{BpMal}, for any $\be\notin M_\al(u)$, i.e., $\be(u)=\al(u)-1$, there exist some $\gamma\in M_\al$ and $v\in P_L\cup P_0$, $p,k\in P_H$ such that  \eqref{auvkpgb} holds. This implies \eqref{roughboundineq1}. Using \eqref{boundwp} and $|k+p|=|u+v|\ll |k|$, we have
\beq\label{boundwpwk}
|w_pw_k|\leq \const|k|^{-4}, \,\,\,\,\,\,\,{\rm when} \;p,k\in P_H\;\rmand |p+k|\ll |k|
\eeq
Inserting \eqref{boundwpwk} and   the bounds $\gamma(u)\leq \al(u)\leq m_c=\rho^{-3\eta}$ into \eqref{roughboundineq1},  we obtain:
\beq
|f_\al(\be)|^2\leq \const\rho^{-3\eta}|k|^{-4}\gamma(v)\Lambda^{-2}|f_\al(\gamma)|^2
\eeq
Again, summing up $\be$(by summing up $\gamma$, $v$, $p$ and $k$), with $\sum_v\gamma(v)\leq N$, we obtain \eqref{boundspsiuroughPL} as follows
\beq\label{roughboundineq2}
\sum_{ \be\notin M_\al(u)}|f_\al(\be)|^2\leq \sum_{v\in P_L\cup P_0}^{ k\in P_H}\sum_{\gamma\in M_\al}\const\rho^{-3\eta}|k|^{-4}\gamma(v)\Lambda^{-2}|f_\al(\gamma)|^2\leq \rho^{1-4\eta}
\eeq 

\par Next, we prove \eqref{boundspsiuvroughPL} concerning $u_1,u_2\in P_L$. For any $\be\notin M_\al(u_1)\cup M_\al(u_2)$, i.e., 
\beq
\be(u_1)=\al(u_1)-1,\,\,\,\,\be(u_2)=\al(u_2)-1
\eeq
 using Lemma \ref{BpMal}, we can see that there are only two cases:
\begin{enumerate}
	\item there exist one $\gamma\in M_\al$, $p_1,p_2\in P_H$ and $\mathcal A^{u_1,u_2}_{p_1,p_2} \gamma=\be$ 
\item there exist one $\gamma\notin M_\al(u_2)$, $v\in P_L\cup P_0$, $v\neq u_2$, $p_1, p_2\in P_H$ and $\mathcal A^{u_1,v}_{p_1,p_2} \gamma=\be$ 
\end{enumerate}
As before, with the properties of $f_\al$ in Lemma \ref{f} 
, the bounds on $\al(u)$'s ($u\in P_L$) and \eqref{boundwpwk}, we have 
\beqa
\sum_{ \be\notin M_\al(u_1)\cup M_\al(u_2)}|f_\al(\be)|^2\leq 
&&\const\sum_{\gamma\in M_\al}\rho^{-7\eta}|\Lambda|^{-1}|f_\al(\gamma)|^2\\\nonumber
+&&\const\!\!\!\!\!\!\sum_{v\in P_L\cup P_0,\gamma\notin M_\al(u_2)}\rho^{-4\eta}\gamma(v)|\Lambda|^{-1}|f_\al(\gamma)|^2
\eeqa
Using $\sum_v\gamma(v)\leq N$ and \eqref{boundspsiuroughPL}, we obtain  \eqref{boundspsiuvroughPL}.
\par At last, we prove \eqref{boundspsiukrough} concerning $u\in P_L$ and $k\in P_k$. For any $\be\notin M_\al(u)\cup M_\al(k)$, Using Lemma \ref{BpMal}, we can see that there are only two cases:
\begin{enumerate}
	\item there exist $\gamma\in M_\al$, $v\in P_L\cup P_0$, $p\in P_H$ and $\mathcal A^{u,v}_{p,k} \gamma=\be$ 
\item there exist $\gamma\notin M_\al(u)$, $v_1,v_2\in P_L\cup P_0$, $p\in P_H$ and $\mathcal A^{v_1,v_2}_{p,k} \gamma=\be$ 
\end{enumerate}
Summing up $v$, $p$ or $v_1$, $v_2$, $p$, we obtain
\beqa\nonumber
\sum_{\be\notin M_\al(k)\cup M_\al(u)}|f_\al(\be)|^2
\leq&&\!\!\!\!\!\!\!\!\const\!\!\!\! \sum_{v\in P_L\cup P_0}\sum_{\gamma} \gamma(u)\gamma(v)\Lambda^{-2}|w_{k}w_{u+v-k}||f_\al(\gamma)|^2\\
+&& \!\!\!\! \sum_{\gamma\notin M_\al(u)}4\rho^2|w_{k}|\max_{p\in P_H}\{|w_p|\}|f_\al(\gamma)|^2
\eeqa
With the result in \eqref{boundwp}: $\left|w_p\right|\leq 4\pi a |p|^{-2}$ and   $\sum_v\gamma(v)\leq N$,  we have:
\beq
\sum_{\be\notin M_\al(k)\cup M_\al(u)}|f_\al(\be)|^2\leq  \const \gamma(u)\rho^{1-2\eta}\Lambda^{-1}|w_{k}|+\sum_{\gamma\notin M_\al(u)}4\rho^{2-2\eta}|w_{k}||f_\al(\gamma)|^2
\eeq
At last using \eqref{boundspsiuroughPL} and the fact $\gamma(u)\leq \al(u)\leq \rho^{-3\eta}$ and $\Lambda=\rho^{-41/20}$, we obtain the desired result  \eqref{boundspsiukrough}
\end{proof}
Moreover $Q_\al(k)$($k\in P_H$), has a more precise upper bound as follows.
\begin{lem}\label{boundPH}
For $k\in P_H$,  and $Q_\al(k)$ is bounded above by:
\beq\label{resultboundPH}
Q_\al(k)\leq N_\al \Lambda^{-2}|w_k|^2+\rho^{7/3-7\eta}
\eeq 
\end{lem}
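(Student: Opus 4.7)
The plan is to sharpen the estimate used in the proof of Lemma~\ref{roughbound} by (i) tracking the multiplicities in the parametrization of $\be\in M_\al$ with $\be(k)=1$ precisely, instead of using the crude factor-of-four over-count of ordered pairs, and (ii) replacing $|w_{u+v-k}|$ by $|w_k|$ to leading order by exploiting the smoothness of $w$ on $P_H$.

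First, by Lemma~\ref{BpMal}, every $\be\in M_\al$ with $\be(k)=1$ admits a representation $\be=\mathcal A^{u,v}_{k,p}\gamma$ with $\gamma\in M_\al$ satisfying $\gamma(k)=\gamma(p)=0$, $\{u,v\}\subset P_L\cup P_0$ an unordered pair, and $p=u+v-k$; and each such $\be$ arises from a unique such triple $(\gamma,\{u,v\})$. There are two cases: $u=v=0$, and $u\neq\pm v$ (the case $u=\pm v\in P_L$ gives $f_\al(\be)=0$ by Lemma~\ref{f}(2)). In the second case the constraint $\be\in M_\al$ forces $\gamma(u)=\al(u)$ whenever $u\in P_L$ (and similarly for $v$). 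Applying Lemma~\ref{f} parts (1) and (3), converting the unordered-pair sum to an ordered one by a factor $\tfrac12$, and using $\sum_\gamma|f_\al(\gamma)|^2\leq 1$ together with $\gamma(u)\gamma(v)\leq\al(u)\al(v)$, I obtain
\begin{equation*}
Q_\al(k) \leq \frac{|w_k|^2\al(0)^2}{\Lambda^2} + \frac{2}{\Lambda^2}\sum_{\substack{u,v\in P_L\cup P_0\\ u\neq\pm v}}|w_k\,w_{u+v-k}|\,\al(u)\al(v).
\end{equation*}

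Next, I replace $w_{u+v-k}$ by $w_k$ using smoothness. Since $|u+v|\leq 2\rho^{1/3-\eta}$ while $|k|\geq\rho^{\eta}$ for $k\in P_H$, the derivative bound \eqref{derwp} gives $|w_{u+v-k}-w_k|\leq\const\,|u+v|(|k|^{-3}+|k|^{-2})$. Splitting $|w_k w_p|\leq |w_k|^2+|w_k|\,|w_p-w_k|$, the main term combined with the case-$(0,0)$ contribution sums to
\begin{equation*}
\frac{|w_k|^2}{\Lambda^2}\Bigl[\al(0)^2+\!\!\!\sum_{(u,v),\,u\neq\pm v}\!\!\!2\al(u)\al(v)\Bigr]=\frac{N_\al|w_k|^2}{\Lambda^2}
\end{equation*}
by the definition \eqref{defNal} of $N_\al$, which is precisely the leading term in the claim.

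Finally, the remainder is bounded using $|w_k|\leq\const\,|k|^{-2}$ from \eqref{boundwp}, the trivial estimate $\sum_{u,v\in P_L\cup P_0}\al(u)\al(v)\leq N^2=\rho^2\Lambda^2$, and $|k|\geq\rho^{\eta}$:
\begin{equation*}
\text{remainder}\leq\const\,\rho^2\cdot|k|^{-2}(|k|^{-3}+|k|^{-2})\cdot\rho^{1/3-\eta}\leq\const\,\rho^{7/3-6\eta}\leq\rho^{7/3-7\eta}
\end{equation*}
for $\rho$ small enough. The main obstacle is the combinatorial step of parametrizing $\be$ by a single unordered pair $\{u,v\}$ so that the conversion to ordered pairs introduces the correct factor $\tfrac12$ rather than $1$; once that is in place, the rest is a mechanical consequence of Lemmas~\ref{f}, \ref{BpMal} and the derivative estimate \eqref{derwp}.
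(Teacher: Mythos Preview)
Your proof is correct and follows essentially the same route as the paper: parametrize $\be\notin M_\al(k)$ via Lemma~\ref{BpMal} into the cases $u=v=0$ and $u\neq\pm v$, apply Lemma~\ref{f} to reach $Q_\al(k)\leq\al(0)^2\Lambda^{-2}w_k^2+\sum_{u\neq\pm v}2\al(u)\al(v)\Lambda^{-2}|w_kw_{u+v-k}|$, and then replace $|w_{u+v-k}|$ by $|w_k|$ using \eqref{derwp}. One remark on what you flag as the ``main obstacle'': the uniqueness of the triple $(\gamma,\{u,v\})$ you assert can actually fail (different pairings of $P_L(\be,\al)$ with the high momenta may yield the same $\be$), but this is harmless---only the upper bound is needed, so over-counting representations when summing over $(\gamma,\{u,v\})$ is allowed, and the factor $\tfrac12$ in passing from unordered to ordered pairs is just a rewriting of that sum, exactly as in the paper.
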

\begin{proof} First using  Lemma \ref{BpMal}, we have that, for any $\be\notin M_\al(k)$, there are two cases:
\begin{enumerate}
	\item there exists $\gamma\in M_\al$, such that, $\mathcal A^{0,0}_{-k,k} \gamma=\be$ 
\item there exist $\gamma\in M_\al$, $u\neq \pm v\in P_L\cup P_0$,  $p\in P_H$, s.t., $\mathcal A^{u,v}_{p,k} \gamma=\be$.
\end{enumerate} 
Then with the identities and bound of  $f_\al$ in Lemma \ref{f} \eqref{properf1}, \eqref{properf3} and  \eqref{properf32} , $Q_\al(k)$ is bounded above by
\beq\label{temp6.17}
Q_\al(k)=\sum_{\be\notin M_\al(k)}|f_\al(\be)|^2\leq \al(0)^2\Lambda^{-2}w_k^2+\sum_{u,v\in P_L\cup P_0, u\neq \pm v}2\al(u)\al(v)\Lambda^{-2}|w_kw_{p}|
\eeq
where $p=u+v-k$. 
Since $w_p=w_{-p}$ and $|p+k|\leq 2(\rho^{1/3-\eta})$, with \eqref{derwp}, we have 
\beq
\left||w_k|-|w_p|\right|\leq \const \rho^{1/3-4\eta}
\eeq 
Inserting this into \eqref{temp6.17}, we obtain
\beq
Q_\al(k)\leq  N_\al\Lambda^{-2}w_k^2+\rho^{7/4-4\eta}|w_k|
\eeq 
Then using  $|w_k|\leq \const \rho^{-2\eta}$, we obtain the desired result \eqref{resultboundPH}.
\end{proof}
\bigskip

\par  
At last, with Lemma \ref{roughbound}, \ref{boundPH} and the definition of $M_\al$, one can easily obtain the following  inequalities on $f_\al$.
\begin{lem}\label{boundspsiBox}  Recall the definition of $M_\al^B(k)$ or $M_\al^B(u)$ in Def. \ref{MaluMBalu} \eqref{MBalu}, the upper bounds on $f_\al$ in  \eqref{boundspsiuroughPL} and \eqref{boundspsiuroughPH} imply:
\beq\label{boundspsiuroughPHB}
\sum_{\be\notin M_\al^B(k)}|f_\al(\be)|^2\leq  \rho^{2-4\eta}\Lambda\rho^{3\kappa_H}\leq \rho^{1/6}\rmfor k\in P_H
\eeq
and
\beq\label{boundspsiuroughPLB}
\sum_{\be\notin M_\al^B(u)}|f_\al(\be)|^2\leq  \rho^{1-4\eta}\Lambda\rho^{3\kappa_L}\leq \rho^{1/6}\rmfor u\in P_L
\eeq 
Recall $B_L$ and $B_H$ in Definition \ref{definition3}. Suppose $u_1$, $u_2\in P_L\cup P_0$, $k_1$, $k_2\in P_H$,  $u_1+u_2=k_1+k_2$, $u_1+u_2\neq 0$ and $u_1\notin B_L(u_2)$. Then using \eqref{boundspsiuroughPL}, \eqref{boundspsiuvroughPL} and the definition of $M_\al$, we have 
\beq\label{boundmaluv}
\sum_{\be\in M_\al,\,\,\, \mathcal A^{u_1,u_2}_{k_1,k_2}\be\notin M_\al}|f(\be)|^2
\leq\rho^{1/2}
\eeq
At last, with \eqref{boundspsiuroughPH} and the fact 
$$0\leq \al(0)-\be(0)\leq \sum_{k\in P_H}\be(k),$$ we have $Q_\al(0)$ and $Q_\al(0,0)$ bounded as follows 
\beq\label{boundq0}
\al(0)\geq Q_\al(0)\geq \al(0)-\rho^{5/6}N
\eeq
and
\beq\label{boundq00}
\big[\al(0)\big]^2\geq Q_\al(0,0)\geq \big[\al(0)\big]^2-N^2\rho^{5/6}
\eeq
\end{lem}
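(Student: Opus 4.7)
The lemma collects five inequalities; I address them in three independent groups, each a short post-processing of Lemma \ref{roughbound} or Lemma \ref{boundPH}. For the two box bounds \eqref{boundspsiuroughPHB} and \eqref{boundspsiuroughPLB}, I use the identity $M_\al^B(k)=\cap_{v\in B_H(k)}M_\al(v)$ (respectively $M_\al^B(u)=\cap_{v\in B_L(u)}M_\al(v)$), so the complementary event is a union of the $\{\be\notin M_\al(v)\}$ and a union bound gives
\[
\sum_{\be\notin M_\al^B(k)}|f_\al(\be)|^2 \leq \sum_{v\in B_H(k)}\sum_{\be\notin M_\al(v)}|f_\al(\be)|^2 \leq |B_H(k)|\cdot \const\rho^{2-4\eta},
\]
where \eqref{boundspsiuroughPH} supplies the inner bound. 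Since $|B_H(k)|$ is the number of lattice points of $\Lambda^*$ in a box of side $\rho^{\kappa_H}$, it is of order $\Lambda\rho^{3\kappa_H}$, giving the first inequality in \eqref{boundspsiuroughPHB} immediately; the comparison $\leq\rho^{1/6}$ is then a direct exponent count using $\Lambda=\rho^{-41/20}$, $\eta=1/200$, and $\kappa_H\geq 1/9$. The bound \eqref{boundspsiuroughPLB} is identical in structure, using \eqref{boundspsiuroughPL} in place of \eqref{boundspsiuroughPH} and $\kappa_L\geq 4/9$.

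For the mixed bound \eqref{boundmaluv}, the plan is a case split on why $\gamma:=\mathcal A^{u_1,u_2}_{k_1,k_2}\be$ fails to lie in $M_\al$ given $\be\in M_\al$. Reading off Def.\ \ref{defmal} and Def.\ \ref{defwMal}, the obstructions are: (i) $\gamma\notin\widetilde M_\al$, which splits into a rule-3 violation at $u_i$ (forcing $\be\notin M_\al(u_i)$) or at $k_j$ (forcing $\be(k_j)\neq 0$), or a rule-2 violation in a box $B_L(u_i)$ or $B_H(k_j)$ (forcing $\be\notin M_\al^B(u_i)$ or $\be\notin M_\al^B(k_j)$); (ii) $P_L(\gamma,\al)$ fails to be non-trivial, which under the hypotheses $u_1+u_2\neq 0$ and $u_1\notin B_L(u_2)$ can only occur via a specific momentum $v\in P_L(\be,\al)$ witnessing a forbidden relation such as $u_1+u_2=v$, $u_i+v=u_{3-i}$, or $u_1+v_1=u_2+v_2$, each of which pins $\be$ to be $\al$-deficient at a constant-size set of momenta; (iii) $\be(-k_1)\neq 0$ or $\be(-k_2)\neq 0$, bounded by $Q_\al(-k_1)+Q_\al(-k_2)$. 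Cases (i) and (iii) are controlled by \eqref{boundspsiuroughPL}–\eqref{boundspsiuroughPH}, and case (ii) by \eqref{boundspsiuvroughPL} after its constant-size bad-witness decomposition; combining yields the claimed $\rho^{1/2}$.

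Finally, the two condensate bounds \eqref{boundq0} and \eqref{boundq00} use particle conservation: since $|\be\rangle$ has total particle number $N$ for every $\be\in M_\al$, we have $\al(0)-\be(0)\leq\sum_{k\in P_H}\be(k)$, hence
\[
\al(0)-Q_\al(0)\leq\sum_{k\in P_H}Q_\al(k).
\]
Inserting the sharp estimate $Q_\al(k)\leq N_\al\Lambda^{-2}w_k^2+\rho^{7/3-7\eta}$ from Lemma \ref{boundPH} and the continuum approximation $\sum_{k\in P_H}w_k^2\lesssim\Lambda\rho^{-\eta}$ (which follows from $|w_k|\leq 4\pi a|k|^{-2}$ and the radial integral $\int_{|k|\geq\eps_H}|k|^{-4}d^3k\sim\eps_H^{-1}=\rho^{-\eta}$), together with $N_\al\leq N^2$ and $|P_H|\lesssim\Lambda$, both contributions are much smaller than $\rho^{5/6}N$ for the chosen parameters, giving \eqref{boundq0}. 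The bound \eqref{boundq00} follows by telescoping: $\al(0)^2-\be(0)^2\leq 2\al(0)(\al(0)-\be(0))\leq 2N(\al(0)-\be(0))$ reduces it to \eqref{boundq0}. The main obstacle is the case split in the second group: one must confirm that $u_1+u_2\neq 0$ and $u_1\notin B_L(u_2)$ really do eliminate the non-triviality failures that would otherwise require summing over a whole box and thereby inflate the estimate by the box-volume factor $\Lambda\rho^{3\kappa_L}$.
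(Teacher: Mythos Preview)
Your approach matches the paper's almost exactly --- the paper does not give a separate proof of this lemma but embeds the argument in the statement itself, pointing to \eqref{boundspsiuroughPL}, \eqref{boundspsiuroughPH}, \eqref{boundspsiuvroughPL} and the definition of $M_\al$, precisely the ingredients you invoke. Your union-bound derivations of \eqref{boundspsiuroughPHB}, \eqref{boundspsiuroughPLB} and the particle-count argument for \eqref{boundq0}, \eqref{boundq00} are correct (for the last two the paper's hint cites only the cruder \eqref{boundspsiuroughPH} rather than Lemma~\ref{boundPH}, but either works).

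There is one genuine slip in your treatment of \eqref{boundmaluv}. Your case (i) correctly lists the rule-2 box violations, forcing $\be\notin M_\al^B(u_i)$ or $\be\notin M_\al^B(k_j)$, but you then say ``Cases (i) and (iii) are controlled by \eqref{boundspsiuroughPL}--\eqref{boundspsiuroughPH}''. These single-momentum bounds do not control the box events; you need the just-proved \eqref{boundspsiuroughPLB}, \eqref{boundspsiuroughPHB}, and those give only $\rho^{1/6}$, not $\rho^{1/2}$. Concretely, with $\kappa_L=4/9$ the $B_L$-violation contributes $\Lambda\rho^{3\kappa_L}\cdot\rho^{1-4\eta}\approx\rho^{0.26}$, which does \emph{not} beat $\rho^{1/2}$. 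Your closing remark that the hypotheses $u_1+u_2\neq 0$, $u_1\notin B_L(u_2)$ ``eliminate'' the box-volume inflation is therefore not right: they rule out the trivial same-box clash of $u_1$ with $u_2$, but not the clash of $u_i$ with some third $v\in B_L(u_i)\cap P_L(\be,\al)$. This appears to be a typo in the paper's stated exponent; every downstream use (see \eqref{temp9.13}, \eqref{temp9.14}, \eqref{temp10.12}) records the error as $O(\rho^{1/6})$, which is exactly what your argument delivers.
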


\section{Proof of Lemma \ref{lemkinetic}}
In this section, with the bounds on $Q_\al(u) (u\in P_L)$ and $Q_\al(k) (k\in P_H)$, we estimate the kinetic energy of $\Psi_\al$ by proving Lemma \ref{lemkinetic}. 
\begin{proof}
\par By the definition, 
\beq
\left\langle\sum_{i=1}^N-\Delta_i\right\rangle_{\Psi_\al}=\sum_{u\in P_L
\cup P_I\cup P_H} u^2 Q_\al(u)\rmand \left\langle\sum_{i=1}^N-\Delta_i\right\rangle_{\al}=\sum_{u\in P_L
\cup P_I} u^2 \al(u)
\eeq
 With the definition of $M_\al$ and $\widetilde M_\al$, we have  $Q_\al(u)\leq \al(u)$, for $u\in P_I\cup P_L$. Then the l.h.s of \eqref{proofthem1-1} bounded above by
 \beqa\label{temp7.2}
&&\left\langle\sum_{i=1}^N-\Delta_i\right\rangle_{\Psi_\al}-\left\langle\sum_{i=1}^N-\Delta_i\right\rangle_{\al}-\|\nabla w\|^2_2 N_\al|\Lambda|^{-1}\\\nonumber
\leq&& \sum_{k\in P_H} k^2Q_\al(k) -\|\nabla w\|^2_2 N_\al|\Lambda|^{-1}
\eeqa
With the upper bound on $Q_\al(k)$ in  \eqref{resultboundPH}, we have 
\beq
\eqref{temp7.2}\leq N_\al|\Lambda|^{-1}\left|\|\nabla w\|^2_2 -\sum_{k\in P_H}|\Lambda|^{-1}k^2|w_k|^2\right|+\rho^{13/6}\Lambda
\eeq
Together with  $\lim_{\rho\to0}\left|\|\nabla w\|^2_2 -\sum_{k\in P_H}|\Lambda|^{-1}k^2|w_k|^2\right|=0$, we complete the proof of Lemma \ref{lemkinetic}.
\end{proof}

\section{Proof of Lemma \ref{lemHabab}}
\begin{proof}
First we rewrite the expectation value of  $H_{abab}$ as
\beqa
&&\langle H_{abab}\rangle_{\Psi_\al}\\\nonumber
=&&|2\Lambda|^{-1}\!\!\!\sum_{\be \in M_\al}\!\!\!\left(V_0 \sum_{u}\left(\be(u)^2-\be(u)\right)+\sum_{u\neq v} (V_0+V_{u-v})\be(u)\be(v)\right)\!\!\!|f_\al(\be)|^2\\\nonumber
=&&|2\Lambda|^{-1}\sum_{\be \in M_\al}\left(V_0(N^2-N)+\sum_{u\neq v}V_{u-v}\be(u)\be(v)\right)|f_\al(\be)|^2
\eeqa
On the other hand, 
\beq
\langle H_{abab}\rangle_{\al}=|2\Lambda|^{-1}\left(V_0(N^2-N)+\sum_{u\neq v}V_{u-v}\al(u)\al(v)\right)
\eeq
By the assumptions, $V_v$ is positive when $|v|\ll 1$. For any $\be\in M_\al$,  $\be(u)\leq \al(u)$ for $u\in P_0\cup P_I\cup P_L$, therefore we have 
\beq
V_{u-v}\be(u)\be(v)\leq V_{u-v}\al(u)\al(v),\;{\rm when}\;u,v\in P_0\cup P_I\cup P_L
\eeq
Using this inequality and the fact $\al(k)=0$ for $k\in P_H$,  we have
\beqa\nonumber
&&\langle H_{abab}\rangle_{\Psi_\al}-\langle H_{abab}\rangle_{\al}\\\nonumber
&\leq& |2\Lambda|^{-1}\left(\sum_{u\notin P_H, v\in P_H } 2V_{u-v}Q_\al(u,v)+\sum_{u, v\in P_H } V_{u-v}Q_\al(u,v)\right)
\eeqa
For any $u\in P$,  $|V_u|$ is no more than $ |V_0|$, with \eqref{boundspsiuroughPH},  we obtain: 
\beqa
\langle H_{abab}\rangle_{\Psi_\al}-\langle H_{abab}\rangle_{\al}&\leq & V_0\rho\sum_{v\in P_H}Q_\al(v)\leq\rho^{11/4}\Lambda
\eeqa
\end{proof}
\bigskip
\section{Proof of Lemma \ref{lemHwLwL}}
As in \cite{YY}, to calculate $\langle a^\dagger_{u_1}a^\dagger_{u_2}a_{u_3} a_{u_4}\rangle_{\Psi_\al}$, we start  with  the following identity. 
\begin{lem}\label{id4a}
For any fixed momenta $u_{1,2,3,4}$ and $\be\in M_\al$,  define  $T(\be)$ to be the state 
\beq
|T(\be)\rangle\equiv Ca^\dagger_{u_1}a^\dagger_{u_2}a_{u_3} a_{u_4}|\be\rangle, 
\eeq 
where $C$ is the positive normalization constant when  $| T(\be)\rangle\not =0$. Then we have
\beq\label{resultid4a}
\langle a^\dagger_{u_1}a^\dagger_{u_2}a_{u_3} a_{u_4}\rangle_{\Psi_\al}=\sum_{\be\in M_\al}f_\al(\be)\overline{f_\al(T(\be))}
\sqrt{\langle\be|a^\dagger_{u_4}a^\dagger_{u_3}a_{u_2}a_{u_1} |a^\dagger_{u_1}a^\dagger_{u_2}a_{u_3} a_{u_4}|\be\rangle}
\eeq
\end{lem}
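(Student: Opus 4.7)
The plan is to expand $|\Psi_\al\rangle$ in the Fock basis and exploit the fact that a product of creation and annihilation operators maps each number state $|\be\rangle$ either to the zero vector or to a positive multiple of a single basis state. First I would write
\begin{equation*}
\langle a^\dagger_{u_1}a^\dagger_{u_2}a_{u_3} a_{u_4}\rangle_{\Psi_\al}
= \sum_{\be,\gamma\in M_\al}\overline{f_\al(\gamma)}\,f_\al(\be)\,\langle\gamma|a^\dagger_{u_1}a^\dagger_{u_2}a_{u_3} a_{u_4}|\be\rangle.
\end{equation*}

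For each fixed $\be\in M_\al$, I would observe that $a^\dagger_{u_1}a^\dagger_{u_2}a_{u_3} a_{u_4}|\be\rangle$, when nonzero, is equal to the normalized state $|T(\be)\rangle$ times the positive scalar
\begin{equation*}
\lambda_\be \equiv \bigl\| a^\dagger_{u_1}a^\dagger_{u_2}a_{u_3} a_{u_4}|\be\rangle \bigr\|
= \sqrt{\langle\be|a^\dagger_{u_4}a^\dagger_{u_3}a_{u_2}a_{u_1}\,a^\dagger_{u_1}a^\dagger_{u_2}a_{u_3} a_{u_4}|\be\rangle},
\end{equation*}
by the definition of $T(\be)$ together with the identity $\|X|\be\rangle\|^2 = \langle\be|X^*X|\be\rangle$ applied to $X = a^\dagger_{u_1}a^\dagger_{u_2}a_{u_3} a_{u_4}$. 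Orthonormality of the Fock basis then collapses the inner sum over $\gamma$ to the single term $\gamma = T(\be)$, which yields the claimed identity.

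The one point that needs a line of care is when $T(\be)$ lies outside $M_\al$ (in particular, when $T(\be)={\bf 0}$). On the left-hand side, since $|\Psi_\al\rangle$ is supported on $M_\al$, no such $\gamma$ appears in the bra and the term drops out automatically; on the right-hand side, the matching term vanishes by the convention $f_\al(\cdot)=0$ outside $M_\al$ (recorded immediately after the definition of $f_\al$). So the two sides agree in this degenerate case as well. I do not anticipate any real obstacle: the lemma is essentially a bookkeeping identity about how creation and annihilation operators act on number-basis states combined with orthonormality, so a clean two-step verification along the lines above should suffice.
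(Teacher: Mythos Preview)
Your proof is correct and follows essentially the same approach as the paper: both expand the expectation value as a double sum over $\be,\gamma\in M_\al$, observe that $a^\dagger_{u_1}a^\dagger_{u_2}a_{u_3}a_{u_4}|\be\rangle$ is either zero or a positive multiple of the single basis state $|T(\be)\rangle$, and then use orthonormality to collapse the $\gamma$-sum. You are simply more explicit about the norm computation and the degenerate case $T(\be)\notin M_\al$, which the paper leaves implicit via the convention $f_\al\equiv 0$ off $M_\al$.
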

The map $T$ depends on $u_{1,2,3,4}$ and in principle it has to carry them as subscripts. We omit these subscripts since it will be clear
from the context  what they are. 

\begin{proof}
For any fixed $u_{1,2,3,4}$, by the definition of $\Psi_\al$, we have
\beq\label{proofid4a1}
\langle\Psi_\al|a^\dagger_{u_1}a^\dagger_{u_2}a_{u_3} a_{u_4}|\Psi_\al\rangle=\sum_{\gamma,\be\in M}f_\al(\be) {\overline{f_\al(\gamma)}}\langle\gamma|a^\dagger_{u_1}a^\dagger_{u_2}a_{u_3} a_{u_4}|\be\rangle
\eeq
By definition of $M_\al$, one can see 
\beq\label{proofid4a2}
\langle\gamma|a^\dagger_{u_1}a^\dagger_{u_2}a_{u_3} a_{u_4}|\be\rangle\neq 0\Rightarrow \gamma=T(\be)
\eeq
Since  $|T(\be)\rangle$ is normalized, the identity in Lemma \ref{id4a} is obvious. 
\end{proof}
\subsection{Proof of Lemma \ref{lemHwLwL}}
\begin{proof}
Using the fact $|V_u|\leq V_0$ for any $u\in \R^3$, we can see 
\beq
\left|\left\langle H_{\wL\wL}\right\rangle_{\Psi_\al}\right|\leq V_0|2\Lambda|^{-1}\sum_{u_i\in 	P_\wL,\, u_1\neq u_3,u_4}\left|\left\langle a^\dagger_{u_1} a^\dagger_{u_2} a_{u_3} a_{u_4}\right\rangle_{\Psi_\al}\right|,
\eeq
We are going to prove:
\beqa\label{00ll}
\sum_{u\in P_L}&&\left|\left\langle a^\dagger_0 a^\dagger_{0} a_{u} a_{-u}\right\rangle_{\Psi_\al}\right|=0 
\\\label{0lll}
\sum_{u_2, u_3,u_4\in P_L}&&\left|\left\langle a^\dagger_0 a^\dagger_{u_2} a_{u_3} a_{u_4}\right\rangle_{\Psi_\al}\right|\leq \Lambda^2\rho^{3-5\eta}
 \\\nonumber
\sum_{u_i\in P_L\rmand u_1\neq u_3,u_4}&&\left|\left\langle a^\dagger_{u_1} a^\dagger_{u_2} a_{u_3} a_{u_4}\right\rangle_{\Psi_\al}\right|\leq\Lambda^3\rho^{5-9\eta}  \\\label{llll}
\eeqa
\par First we note \eqref{00ll} is trivial. Because if $\be\in M_\al$, then $P_L(\be,\al)$ is non-trivial subset of $P_L$, which tells if $\be(u)<\al(u)$ then  $\be(-u)=\al(-u)$.
\par  Then we prove \eqref{0lll} concerning $u_{2,3,4}\in P_L$. 
By definition of $M_\al$, $$\langle \be|a^\dagger_0 a^\dagger_{u_2} a_{u_3} a_{u_4}|\gamma\rangle\neq 0$$
implies  $u_3\neq u_4$ and 
$\gamma\notin M_\al(u_2)$, i.e., $ \gamma(u_2)<\al(u_2)$. Furthermore, with the definition of $f_\al$ \eqref{deff}, we have 
\beq
f_\al(\be)=\sqrt{\frac{\al(u_3)\al(u_4)}{\be(0)\al(u_2)}}f_\al(\gamma)
\eeq
Combining with Lemma \ref{id4a}, we obtain 
\beq\label{0lllineq1}
\left|\left\langle a^\dagger_0 a^\dagger_{u_2} a_{u_3} a_{u_4}\right\rangle_{\Psi_\al}\right|\leq \al(u_3)\al(u_4)\sum_{\gamma\notin M_\al(u_2)}|f_\al(\gamma)^2|
\eeq
Using \eqref{boundspsiuroughPL} in Lemma \ref{roughbound}, we obtain 
\beq
\left|\left\langle a^\dagger_0 a^\dagger_{u_2} a_{u_3} a_{u_4}\right\rangle_{\Psi_\al}\right|\leq \const\al(u_3)\al(u_4)\rho^{1-4\eta},
\eeq
which implies \eqref{0lll}.
\par Next, we prove  \eqref{llll}. 
Similarly, we have 
\beq\label{llllineq1}
\left|\left\langle a^\dagger_{u_1} a^\dagger_{u_2} a_{u_3} a_{u_4}\right\rangle_{\Psi_\al}\right|\leq \al(u_3)\al(u_4)\sum_{\gamma\notin M_\al(u_1)\cup M_\al(u_2)}|f_\al(\gamma)^2|
\eeq
Again, using Lemma \ref{roughbound}, we obtain 
\beq
\left|\left\langle a^\dagger_{u_1} a^\dagger_{u_2} a_{u_3} a_{u_4}\right\rangle_{\Psi_\al}\right|\leq \const\al(u_3)\al(u_4)\rho^{2-8\eta},
\eeq
which implies \eqref{llll}. 
At last, combine \eqref{00ll}-\eqref{llll} and we obtain 
\beq
|\left\langle H_{\wL\wL}\right\rangle_{\Psi_\al}|\leq \rho^{11/4}\Lambda
\eeq
\end{proof}
\section{Proof of Lemma \ref{lemHwLH}}
We start the proof  with estimating $\langle a^\dagger_{u_1} a^\dagger_{u_2} a_{k_1} a_{k_2}\rangle_{\Psi_\al}$ in the special case: $u_1=\pm u_2\in P_L$. By the definition of $M_\al$, if $\be\in M_\al$, $u\in P_L$ and $\be(u)<\al(u)$, then $\be(u)=\al(u)-1$ and $\be(-u)=\al(-u)$. Since $f_\al$ is supported on $M_\al$, we have:
\beq\label{temp10.1}
\langle a^\dagger_{u_1} a^\dagger_{u_2} a_{k_1} a_{k_2}\rangle_{\Psi_\al}=0,\; \rmfor \forall k_1,\;k_2\in P_H, \;u_1=\pm u_2\in P_L
\eeq
\par For the other cases, we leave the bounds in the following lemma. As explained before, with the $f_\al$ we chose, the approximation \eqref{temp4.10} should hold for most $u,v\in P_L\cup P_0$, $p,q\in P_H$. In the proof of Lemma \ref{lemHwLoc}, one can see that the approximation \eqref{temp4.10} implies the main results \eqref{lemHwLHuk} and \eqref{lemHwLHuukk}.
\begin{lem}\label{lemHwLoc} Recall $P_{\widetilde L}=P_0\cup P_L$. For $u,u_1,u_2\in P_{\widetilde L}$ and $k,k_1,k_2\in P_H$, we have
	\beq\label{lemHwLHuk}
	 \left|\sum
	  V_{u-k}\langle a^\dagger_{u} a^\dagger_{-u} a_{k} a_{-k}\rangle_{\Psi_\al}+\al(0)^2 \|Vw\|_1\right|\leq \eps_4 N^2
	\eeq
	\beqa\label{lemHwLHuukk}
	\left|\sum_{
	u_1\neq \pm u_2}
	V_{u_1-k_1}\langle a^\dagger_{u_1} a^\dagger_{u_2} a_{k_1} a_{k_2}\rangle_{\Psi_\al}\!\!\!+\!\!\!\!\!\!\sum_{
	u_1\neq \pm u_2}\!\!2\al(u_1)\al(u_2) \|Vw\|_1\right|\leq \eps_5 N^2
\eeqa
and \beqa	\label{lemHwLHukuk}
	\sum_{
	u_1\neq u_2} \left|\langle a^\dagger_{u_1} a^\dagger_{k_1} a_{u_2} a_{k_2}\rangle_{\Psi_\al}\right|\leq \eps_6 N^2
	\eeqa
	where we omitted $u,u_1,u_2\in P_{\widetilde L}$, $k,k_1,k_2\in P_H$ and momentum conservation equality in $\sum$. The small numbers $\eps_4, \eps_5, \eps_6$ are independent of $\al$ and $\lim_{\rho\to 0}\eps_i=0$ for $i=4,5,6$.
\end{lem}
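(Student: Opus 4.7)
The overall strategy is to apply Lemma~\ref{id4a} to rewrite each expectation $\langle a^\dagger_{u_1}a^\dagger_{u_2}a_{k_1}a_{k_2}\rangle_{\Psi_\al}$ as $\sum_{\be\in M_\al}f_\al(\be)\overline{f_\al(T\be)}\cdot C_\be$ with $T\be=\mathcal A^{u_1,u_2}_{k_1,k_2}\be$ and combinatorial factor $C_\be$, and then to use Lemma~\ref{f} to express $f_\al(T\be)$ as an explicit multiple of $f_\al(\be)$ involving $w_{k_1},w_{k_2}$ and $\sqrt{\al(u_i)/|\Lambda|}$. The main identities then reduce to the Parseval/convolution relation
\[|\Lambda|^{-1}\sum_k V_{u-k}w_{k-u'}=\|Vw\|_1+o(1),\qquad u,u'\in P_{\widetilde L},\]
which uses that $V$ is even, $w$ is symmetric, and the gradient bound~\eqref{derwp} to absorb the shift in $w$ (plus elementary estimates to show the tails $|k|<\rho^\eta$ and $|k|>\rho^{-\eta}$ give vanishing contributions).

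For~\eqref{lemHwLHuk}: when $u\in P_L$ with $u\neq 0$, the non-trivial subset condition of Def.~\ref{defmal}, rule~(1), prohibits $\{u,-u\}$ from belonging simultaneously to $P_L(\gamma,\al)$, which forces $\langle a^\dagger_u a^\dagger_{-u}a_k a_{-k}\rangle_{\Psi_\al}=0$ in analogy with~\eqref{temp10.1}. Only $u=0$ survives, and there Lemma~\ref{f}~\eqref{properf1} with $T\be=\mathcal A^{0,0}_{k,-k}\be$, together with~\eqref{boundq00} to replace $Q_\al(0,0)$ by $\al(0)^2$ and the Parseval identity $|\Lambda|^{-1}\sum_k V_{-k}w_k=\|Vw\|_1$, produces the main term $-\al(0)^2\|Vw\|_1$. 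The exceptional $\be\in M_\al$ with $\mathcal A^{0,0}_{k,-k}\be\notin M_\al$ (which by Lemma~\ref{BpMal} requires $\be(\pm k)\neq 0$) contribute at most $V_0\sum_k Q_\al(k)\lesssim V_0|\Lambda|\rho^{2-3\eta}$ via~\eqref{Qalk}, and this is $o(N^2)$.

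For~\eqref{lemHwLHuukk}: applying~\eqref{properf3} with $T\be=\mathcal A^{u_1,u_2}_{k_1,k_2}\be$ yields the factor $2\sqrt{-w_{k_1}}\sqrt{-w_{k_2}}\sqrt{\al(u_1)\al(u_2)}|\Lambda|^{-1}f_\al(\be)$, valid whenever $\be(-k_1)=\be(-k_2)=0$ and $T\be\in M_\al$. Since $|u_1+u_2|\lesssim\rho^{1/3-\eta}\ll|k_1|$, \eqref{derwp} gives $\sqrt{-w_{k_1}}\sqrt{-w_{k_2}}=-w_{k_1}+o(1)$, and the convolution identity converts the $k_1$-sum to $\|Vw\|_1$, producing the main term $-\sum_{u_1\ne\pm u_2}2\al(u_1)\al(u_2)\|Vw\|_1$. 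Three error channels must be collected: (i) $T\be\notin M_\al$, bounded by~\eqref{boundmaluv}; (ii) $\be(-k_i)\ne 0$, where only the weaker estimate~\eqref{properf32} applies, bounded by~\eqref{boundspsiuroughPH}; and (iii) the replacement $\sqrt{\be(u_i)}\to\sqrt{\al(u_i)}$, controlled by~\eqref{boundspsiuroughPL} and~\eqref{boundspsiuvroughPL}. Each error, summed against $|V|\le V_0$ and over $u_1\ne\pm u_2$, contributes $o(N^2)$. For~\eqref{lemHwLHukuk}, the operator preserves the number of $P_H$-particles, so Cauchy--Schwarz yields $|\langle\cdot\rangle|\le\sqrt{\al(u_1)\al(u_2)Q_\al(k_1)Q_\al(k_2)}$; combining Lemma~\ref{boundPH} with $\sum_{k\in P_H}w_k^2\lesssim|\Lambda|\rho^{-\eta}$ and summing over $k_1$ (with $k_2$ determined by conservation) and over $u_1\ne u_2$ gives a total of $o(N^2)$.

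The main obstacle is the bookkeeping in Part~2: one must simultaneously track three independent error channels plus the $w_{k_2}\to w_{k_1}$ approximation, uniformly in $\al\in M$, and verify that each contributes $o(\rho^{-21/10})$. This is feasible because $\eta=1/200$ is small enough that every polynomial error of the form $\rho^{c-\mathrm{const}\cdot\eta}$ arising from Lemmas~\ref{roughbound} and~\ref{boundPH} still has strictly positive exponent after all the summations.
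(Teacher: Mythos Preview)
Your treatment of \eqref{lemHwLHuk} and \eqref{lemHwLHuukk} follows the paper's argument essentially line for line: restrict to $u=0$ via the non-triviality condition, invoke Lemma~\ref{f}, approximate $\sqrt{-w_{k_1}}\sqrt{-w_{k_2}}$ by $-w_{k_1}$ using \eqref{derwp}, and collect the three error channels. That part is fine.

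The Cauchy--Schwarz argument for \eqref{lemHwLHukuk}, however, does not close. Your bound
\[
\big|\langle a^\dagger_{u_1} a^\dagger_{k_1} a_{u_2} a_{k_2}\rangle_{\Psi_\al}\big|\le\sqrt{\al(u_1)\al(u_2)\,Q_\al(k_1)Q_\al(k_2)}
\]
is correct, but summing it is too costly. With Lemma~\ref{boundPH} and $|k_1-k_2|\ll|k_1|$ one gets $\sum_{k_1}\sqrt{Q_\al(k_1)Q_\al(k_2)}\lesssim \rho^{2-\eta}|\Lambda|$, and then
\[
\sum_{u_1\ne u_2}\sqrt{\al(u_1)\al(u_2)}\;\cdot\;\rho^{2-\eta}|\Lambda|.
\]
The first factor can be as large as $N^2$ uniformly in $\al\in M$ (take $\al(u)=1$ on $N$ distinct sites in $P_L$, which is allowed since $|P_L|\sim\rho^{-3\eta}N$). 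The total is therefore $N^2\rho^{2-\eta}|\Lambda|=N^2\rho^{-1/20-\eta}$, which diverges as $\rho\to0$; so you cannot extract an $\eps_6\to0$.

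The missing ingredient is the correlation between the $P_L$--defect and the $P_H$--occupation that Cauchy--Schwarz throws away. The paper exploits the explicit structure of $f_\al$: if $\langle\be|a^\dagger_{u_1}a^\dagger_{k_1}a_{u_2}a_{k_2}|\gamma\rangle\ne0$ with $\be,\gamma\in M_\al$, then from \eqref{deffbe} one reads off
\[
|f_\al(\gamma)|\le\const\,\Big|\tfrac{\al(u_1)}{\al(u_2)}\Big|^{1/2}\Big|\tfrac{w_{k_2}}{w_{k_1}}\Big|^{1/2}|f_\al(\be)|,
\]
and crucially $\be\notin M_\al(u_2)\cup M_\al(k_1)$. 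Summing $|f_\al(\be)|^2$ over this \emph{joint} constraint via \eqref{boundspsiukrough} gives the extra factor $\rho^{3-7\eta}|w_{k_1}|$ needed to beat the $|\Lambda|$ from the $k$--sum. Replace your Cauchy--Schwarz step by this ratio computation and the argument goes through.
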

\begin{proof}{Proof of Lemma \ref{lemHwLH}}
\par Combine the bounds in \eqref{temp10.1}, \eqref{lemHwLHuk}, \eqref{lemHwLHuukk} and \eqref{lemHwLHukuk}.
\end{proof}
\subsection{Proof of Lemma \ref{lemHwLoc}}
\begin{proof} 
First we prove \eqref{lemHwLHuk} concerning $u\in P_\wL$ and $k\in P_H$. By \eqref{temp10.1}, if $\langle a^\dagger_{u} a^\dagger_{-u} a_{k} a_{-k}\rangle_{\Psi_\al}\neq 0$, then $u$ must be zero. The property of $f_\al$ in Lemma \ref{f} \eqref{properf1} implies
\beq
\langle \be|a^\dagger_{0} a^\dagger_{-0} a_{k} a_{-k}|\gamma\rangle\neq 0 \Rightarrow
\frac{f_\al(\gamma)}{f_\al(\be)}=-\frac{w_k}{|\Lambda|}\sqrt{\gamma(0)^2-\gamma(0)}
\eeq
Together with Lemma \ref{id4a}, we have 
\beq
\langle a^\dagger_{0} a^\dagger_{0} a_{k} a_{-k}\rangle_{\Psi_\al}=
-w_k\sum_{\be:\,\be\in M_\al,\,\mathcal A^{0,0}_{k,-k}\be\in M_\al}\left(\be(0)^2-\be(0)\right)\Lambda^{-1}|f_\al(\be)|^2,
\eeq
Recall the definitions of $M_\al^B$'s in Def. \ref{defwMal}. One can see if $\be(0)>1$, then $\be\in M_\al$ and $\mathcal A^{0,0}_{k,-k}\be\in M_\al$ is equivalent to $\be\in M^B_\al(k)\cap M^B_\al(-k)$. Therefore, we have the following identity,
\beq\label{sotemp10.7}
\langle a^\dagger_{0} a^\dagger_{0} a_{k} a_{-k}\rangle_{\Psi_\al}=
-w_k\sum_{\be\in M^B_\al(k)\cap M^B_\al(-k)}\left(\be(0)^2-\be(0)\right)\Lambda^{-1}|f_\al(\be)|^2,
\eeq
Using the bound on $\sum_{\be\notin M_\al^B(k)}|f_\al(\be)|^2$ \eqref{boundspsiuroughPHB} and the bounds on $Q_\al(0)$, $Q_\al(0,0)$ in \eqref{boundq0} and \eqref{boundq00}. We obtain that 
\beq\label{sotemp10.8}
\left|\sum_{\be\in M^B_\al(k)\cap M^B_\al(-k)}\left(\be(0)^2-\be(0)\right)|f_\al(\be)|^2-\al(0)^2\right|\leq O(\rho^{1/6}N^2)
\eeq
Insert \eqref{sotemp10.8} into \eqref{sotemp10.7}. Then summing up $k\in P_H$, with $u=0$, we obtain
\beqa
&&\left|\sum
 V_{u-k}\langle a^\dagger_{u} a^\dagger_{-u} a_{k} a_{-k}\rangle_{\Psi_\al}+\al(0)^2 \|Vw\|_1\right|\\\nonumber
 \leq &&\al(0)^2\left|\sum_{k\in P_H}-V_kw_k\Lambda^{-1}+\|Vw\|_1\right|+O(\rho^{1/6-3\eta}N^2) 
\eeqa
Combining  with the fact $\lim_{\rho\to 0}\left|\sum_{k\in P_H}-V_kw_k\Lambda^{-1}+\|Vw\|_1\right|=0$, we obtain the desired result \eqref{lemHwLHuk}.
\par Next, we prove \eqref{lemHwLHuukk} concerning $u_1,u_2\in P_\wL$, $u_1\neq \pm u_2$ and $k_1,k_2\in P_H$. Using the result 2 in Lemma \ref{f}, one can see
\beq
\langle a^\dagger_{u_1} a^\dagger_{u_2} a_{k_1} a_{k_2}\rangle_{\Psi_\al}=0 {\rm \;when\; }u_2\in B_L(u_1)
\eeq
Then from now on, we assume $u_2\notin B_L(u_1)$. The property of $f_\al$ in  Lemma \ref{f} implies, when  $\langle \be|a^\dagger_{u_1} a^\dagger_{u_2} a_{k_1} a_{k_2}|\gamma\rangle\neq 0 $ and $\be,\gamma\in M_\al$, 
\beq
f(\gamma)=C_\beta\sqrt{-w_{k_1}}\sqrt{-w_{k_2}}\sqrt{\be(u_1)\be(u_2)} f(\be)
\eeq
Here $C_\beta$ depends on $\be$ and $|C_\beta|\leq 2$. Especially, when $\be\in M_\al(-k_1)\cap M_\al(-k_2)$, $C_\be=2$. Again  with Lemma \ref{id4a}, for fixed $u_1$, $u_2\notin B_L(u_1)$, $k_1$ and  $k_2$, we have
\beq\label{whtemp1012}
\langle a^\dagger_{u_1} a^\dagger_{u_2} a_{k_1} a_{k_2}\rangle_{\Psi_\al}=\sqrt{-w_{k_1}}\sqrt{-w_{k_2}}\sum_{\be\in M_\al, \;\mathcal A^{u_1,u_2}_{k_1,k_2}\be\in M_\al}C_\be\be(u_1)\be(u_2)|f(\be)|^2,
\eeq
First,  using the facts $\left|k_1+k_2\right|\leq 2\rho^{1/3}\eta_L^{-1}$ and the bound on $dw_p/dp$ \eqref{derwp}, we obtain $|w_{k_1}-w_{k_2}|\leq \rho^{1/4} $, 
therefore
\beq\label{sqrtk12k1}
\left|\left(\sqrt{-w_{k_1}}\sqrt{-w_{k_2}}\right)+w_{k_1}\right|\leq \rho^{1/4}
\eeq
Insert \eqref{sqrtk12k1} into \eqref{whtemp1012}, we have 
\beq\label{temp10.14}
\langle a^\dagger_{u_1} a^\dagger_{u_2} a_{k_1} a_{k_2}\rangle_{\Psi_\al}=(-w_{k_1}+O(\rho^{1/4}))\sum_{\be\in M_\al, \;\mathcal A^{u_1,u_2}_{k_1,k_2}\be\in M_\al}\!\!\!\!\!C_\be\be(u_1)\be(u_2)|f_\al(\be)|^2.
\eeq
Now we bound $$\sum_{\be\in M_\al, \;\mathcal A^{u_1,u_2}_{k_1,k_2}\be\in M_\al}\!\!\!\!\!C_\be\be(u_1)\be(u_2)|f_\al(\be)|^2.$$ In the case $\be\notin M_\al(-k_1)\cap M_\al(-k_2)$, using the result in \eqref{boundspsiuroughPH} and $|C_\beta|\leq 2$, we have  
\beq\label{temp10.15}
\left|\sum_{\be\notin M_\al(k_1)\cap M_\al(k_2)}C_\be\be(u_1)\be(u_2)|f_\al(\be)|^2\right|\leq \rho\,\al(u_1)\al(u_2)
\eeq
In the case $\be\in M_\al(-k_1)\cap M_\al(-k_2)$, we have $C_\be=2$. 
Using the results in Lemma \ref{roughbound} and Lemma \ref{boundspsiBox}(\eqref{boundspsiuroughPH}, \eqref{boundspsiuroughPHB}, \eqref{boundspsiuroughPLB}, \eqref{boundmaluv} and $\al(u)\leq m_c=\rho^{-3\eta}$ for $u\in P_L$,  we obtain that if $u_1, u_2\in P_L$
\beq\label{temp9.13}
\left|\sum^{\be\in M_\al(-k_1)\cap M_\al(-k_2)}_{A^{u_1,u_2}_{k_1,k_2}\be\in M_\al}\be(u_1)\be(u_2)|f_\al(\be)|^2-\al(u_1)\al(u_2)\right|\leq O(\rho^{1/6-6\eta})
\eeq
and if $u_1=0, u_2\in P_L$, we have
\beq\label{temp9.14}
\left|\sum^{\be\in M_\al(-k_1)\cap M_\al(-k_2)}_{A^{u_1,u_2}_{k_1,k_2}\be\in M_\al}\be(u_1)\be(u_2)|f_\al(\be)|^2-\al(u_1)\al(u_2)\right|\leq O(\rho^{1/6-3\eta}N)
\eeq
Inserting \eqref{temp10.15}, \eqref{temp9.13} and \eqref{temp9.14} into \eqref{temp10.14}, with the fact $|w_p|\leq 4\pi a |p|^{-2}$,  we obtain that  
for $u_1, u_2\in P_L$:
\beq
\left|\langle a^\dagger_{u_1} a^\dagger_{u_2} a_{k_1} a_{k_2}\rangle_{\Psi_\al}+2w_{k_1}\al(u_1)\al(u_2)\right|\leq O(\rho^{1/6-8\eta})
\eeq
and for $u_1=0, u_2\in P_L$, 
\beq
\left|\langle a^\dagger_{u_1} a^\dagger_{u_2} a_{k_1} a_{k_2}\rangle_{\Psi_\al}+2w_{k_1}\al(u_1)\al(u_2)\right|\leq O(\rho^{1/6-5\eta}N).
\eeq
Furthermore,  the  smoothness and symmetry of $V$ implies $$|V_{u_1-k_1}-V_{k_1}|\leq\rho^{1/4}.$$ 
Then summing up $u_1,u_2: u_2\notin B_L(u_1)$ and  $k_1$, $k_2$, we obtain
\beqa\nonumber
&&\left|\sum_{u_1\neq\pm u_2
}V_{u_1-k_1} \langle a^\dagger_{u_1} a^\dagger_{u_2} a_{k_1} a_{k_2}\rangle_{\Psi_\al}
+2\sum_{
u_1\neq \pm u_2}\al(u_1)\al(u_2) \|Vw\|_1\right|\\\nonumber
 \leq &&2\sum_{
 u_1\neq \pm u_2}\left(\al(u_1)\al(u_2)\left|\sum_{
 }|V_{k_1}w_{k_1}|\Lambda^{-1}- \|Vw\|_1\right|\right)+O(\rho^{1/6-17\eta}N^2)\\
 +&& \sum_{\{u_1,u_2:\, u_2\in B_L(u_1)\}}2\al(u_1)\al(u_2)\|Vw\|_1
\eeqa
One can see the first line of the r.h.s  is less than $\eps_5N^2/2$. Here $\eps_5$ is independent of $\al$ and $\lim_{\rho\to0}\eps_5=0$. With the bound $\al(u)\leq m_c$ for $u\in P_L$, we can obtain that the second line of the right side is also $o(N^2)$. Therefore we arrive at the desired result \eqref{lemHwLHuukk}.
\par At last, we prove \eqref{lemHwLHukuk} concerning $u_{1,2}\in P_L$, $u_1\neq u_2$ and $k_{1,2}\in P_H$. The definitions of $M_\al$ and $f_\al$ imply that, when $\langle\be| a^\dagger_{u_1} a^\dagger_{k_1} a_{u_2} a_{k_2}|\gamma\rangle\neq 0$ and $\be,\gamma\in M_\al$,
 $$\gamma\notin M_\al(u_1)\cup M_\al(k_2)\,\,\,\be\notin M_\al(u_2)\cup M_\al(k_1)$$
  and 
\beq
|f_\al(\gamma)|\leq \const \left|\sqrt{\frac{\al(u_1)}{\al(u_2)}}\sqrt{\frac{w_{k_2}}{w_{k_1}}}\right||f_\al(\be)|
\eeq
This implies
\beq
\left|f_\al(\be)f_\al(\gamma)\langle\be| a^\dagger_{u_1} a^\dagger_{k_1} a_{u_2} a_{k_2}|\gamma\rangle\right|
\leq \const\al(u_1)\left|\sqrt{\frac{w_{k_2}}{w_{k_1}}}\right||f_\al(\be)|^2
\eeq
Summing up $\be\notin M_\al(u_2)\cup M_\al(k_1)$, with the upper  bound on $\sum_{\be}|f_\al(\be)|^2$ \eqref{boundspsiukrough}, we have
\beqa\nonumber
\left|\langle a^\dagger_{u_1} a^\dagger_{k_1} a_{u_2} a_{k_2}\rangle_{\Psi_\al}\right|\leq &&\const \al(u_1)\left|\sqrt{\frac{w_{k_2}}{w_{k_1}}}\right|\sum_{\be\notin M_\al(u_2)\cup  M_\al(k_1)}|f_\al(\be)|^2\\
\leq &&\al(u_1)|\sqrt{w_{k_1}w_{k_2}}|\rho^{3-8\eta}
\eeqa
At last, using $|w_p|\leq 4\pi a |p|^{-2}$ and $|k_1|\sim|k_2|$, we have 
\beqa\nonumber
	\sum_{u_1\neq u_2
	} \left|\langle a^\dagger_{u_1} a^\dagger_{k_1} a_{u_2} a_{k_2}\rangle_{\Psi_\al}\right|\leq&&\sum_{u_1,u_2,k_1,k_2} \al(u_1)\rho^{3-10\eta}\\
	\leq&&\Lambda^3\rho^{5-13\eta}=o(\Lambda^2\rho^{5/2})
	\eeqa
\end{proof}

\section{Proof of Lemma \ref{lemHHH}}
In this section, we will prove Lemma \ref{lemHHH} involving interaction energy between particles with momenta in $P_{H}$. 
We will show that the only contribution to the accuracy we need comes from four high momentum particles, 
to  be computed in Lemma \ref{HHHmain} \eqref{resultHHHmain}. We start with separating $\langle H_{HH}\rangle_{\Psi_\al}$ into the main terms and the error terms. 
\par Define $M_\al(k_1,k_2,k_3,k_4,u_1,u_2)\subset M_\al\otimes M_
\al$ as the set of $(\be,\gamma)$'s where $\be$ and $\gamma$ can be created from the same $\tilde \al\in M_\al$ as follows, 
\beqa\label{defMalkkuu}
&&M_\al(k_1,k_2,k_3,k_4,u_1,u_2)\\\nonumber
\equiv &&\{(\be, \gamma)\in M_\al\otimes M_\al: \exists \,\tilde \al\in M_\al {\,\,\,\rm s.t.\,\,\,} \mathcal A^{u_1,u_2}_{k_1,k_2}\tilde\al=\be \rmand A^{u_1,u_2}_{k_3,k_4}\tilde\al=\gamma\},
\eeqa
where $k_1,k_2, k_3,k_4\in P_H$ and $u_1,u_2\in P_\wL$. 
We define $A_{u_1,u_2, k_1,k_2, k_3,k_4}$ as
\beq\label{Au1u2k1k2k3k4}
A_{u_1,u_2, k_1,k_2, k_3,k_4}\equiv \sum_{(\be,\gamma)\in M_\al(k_1,k_2,k_3,k_4,u_1,u_2)}\overline{f_\al(\be)}f_\al(\gamma)\left\langle \be |a^\dagger_{k_1}a^\dagger_{k_2}a_{k_3}a_{k_4}|\gamma\right\rangle
\eeq
We note: 
\beq
 \left\langle  a^\dagger_{k_1}a^\dagger_{k_2}a_{k_3}a_{k_4}\right\rangle_{\Psi_\al}=\sum_{\be,\gamma\in M_\al}\overline{f_\al(\be)}f_\al(\gamma)\left\langle \be |a^\dagger_{k_1}a^\dagger_{k_2}a_{k_3}a_{k_4}|\gamma\right\rangle
\eeq
With \eqref{Au1u2k1k2k3k4}, we can separate the expectation value of $H_{HH}$ into two parts, main term (Lemma \ref{HHHmain}) and error term (Lemma \ref{HHHerror}). 
\begin{lem}\label{HHHmain}
Summing up $k_1, k_2, k_3, k_4\in P_H$,  $k_i\neq k_j$ for $i\neq j$, $u_1, u_2\in P_\wL$, we have
\beq\label{resultHHHmain}
\left|\sum_{u_i,k_i} V_{k_1-k_3}\Lambda^{-1}A_{u_1,u_2, k_1,k_2,k_3,k_4}-N_\al|\Lambda|^{-1} \|Vw^2\|_1\right|
\leq \frac{\eps_3}{2}\rho^2\Lambda,
\eeq
where $\eps_3$ is independent of $\al$ and $\lim_{\rho\to 0}\eps_3=0$.
\end{lem}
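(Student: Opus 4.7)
\medskip
\textbf{Proof plan for Lemma \ref{HHHmain}.}
The plan is to evaluate $A_{u_1,u_2,k_1,k_2,k_3,k_4}$ by parametrizing each pair $(\be,\gamma)\in M_\al(k_1,k_2,k_3,k_4,u_1,u_2)$ through its common ``ancestor'' $\tilde\al\in M_\al$, applying the explicit identities in Lemma \ref{f} to reduce $\overline{f_\al(\be)}f_\al(\gamma)$ to a product of $w$'s and particle counts of $\tilde\al$, and finally performing a Fourier summation to produce $\|Vw^2\|_1$.

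\emph{Step 1 (parametrization and matrix element).} For fixed momenta, each element of $M_\al(k_1,k_2,k_3,k_4,u_1,u_2)$ corresponds to a unique $\tilde\al\in M_\al$ with $\be=\mathcal A^{u_1,u_2}_{k_1,k_2}\tilde\al$ and $\gamma=\mathcal A^{u_1,u_2}_{k_3,k_4}\tilde\al$. Assuming the $k_i$ are distinct and $\tilde\al(k_i)=0$ (automatic since $\tilde\al\in M$), a direct computation gives $\langle\be|a^\dagger_{k_1}a^\dagger_{k_2}a_{k_3}a_{k_4}|\gamma\rangle=1$.

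\emph{Step 2 (leading coefficient).} Provided $\tilde\al(-k_i)=0$ for $i=1,\dots,4$, Lemma \ref{f} yields
\[
\overline{f_\al(\be)}\,f_\al(\gamma)\;=\;C(u_1,u_2)\,\sqrt{w_{k_1}w_{k_2}w_{k_3}w_{k_4}}\;\tilde\al(u_1)\tilde\al(u_2)\,\Lambda^{-2}\,|f_\al(\tilde\al)|^2,
\]
where $C(u_1,u_2)=1$ if $u_1=u_2=0$ (from \eqref{properf1}) and $C(u_1,u_2)=4$ if $u_1\neq\pm u_2$ (from \eqref{properf3}); other cases either vanish ($u_1,u_2\in P_L$ with $u_1=\pm u_2$ or in the same box $B_L$, by part 2 of Lemma \ref{f}) or are shown to contribute negligibly via \eqref{properf32} and the bounds of Lemma \ref{roughbound}.

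\emph{Step 3 (Fourier summation).} Because $k_1+k_2=k_3+k_4=u_1+u_2$ has magnitude at most $2\eta_L^{-1}\rho^{1/3}$, the derivative bound \eqref{derwp} gives $w_{k_2}=w_{k_1}+O(\rho^{1/3-\eta})$ and $w_{k_4}=w_{k_3}+O(\rho^{1/3-\eta})$, so
\[
\sqrt{w_{k_1}w_{k_2}w_{k_3}w_{k_4}}\;=\;w_{k_1}w_{k_3}\;+\;\text{(small error)}.
\]
Parameterizing the remaining momenta by $k_1,k_3\in P_H$ (with $k_2,k_4$ determined), I use Parseval together with the symmetry $w_{-k}=w_k$ to obtain
\[
\sum_{k_1,k_3\in P_H}V_{k_1-k_3}\,w_{k_1}w_{k_3}\;=\;\Lambda^2\|Vw^2\|_1\;+\;o(\Lambda^2),
\]
the error coming from the replacement of $P_H$ by all of $\Lambda^*$ and the smoothness of $V$.

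\emph{Step 4 (assembly and closure via $N_\al$).} Combining Steps 1--3, the total main contribution becomes
\[
\Lambda^{-1}\|Vw^2\|_1\sum_{\tilde\al\in M_\al}\Big[\tilde\al(0)(\tilde\al(0)-1)+\!\!\!\!\sum_{\substack{u_1,u_2\in P_{\wL}\\ u_1\neq\pm u_2}}\!\!\!\!\tilde\al(u_1)\tilde\al(u_2)\Big]|f_\al(\tilde\al)|^2\;+\;o(\rho^2\Lambda),
\]
up to the factor-of-ordering bookkeeping that I expect to match the definition \eqref{defNal} of $N_\al$ (one must cross-check the $(k_1,k_2)\leftrightarrow(k_3,k_4)$ and $(u_1,u_2)$ ordering conventions baked into $A_{u_1,u_2,k_1,k_2,k_3,k_4}$). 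The bracket is then approximated by $N_\al$ itself, using the elementary estimate
\[
\Big|\sum_{\tilde\al\in M_\al}\tilde\al(u_1)\tilde\al(u_2)|f_\al(\tilde\al)|^2-\al(u_1)\al(u_2)\Big|\leq \eps_\rho\,\al(u_1)\al(u_2),
\]
which follows from Lemma \ref{roughbound} (via \eqref{boundspsiuroughPL}, \eqref{boundspsiuvroughPL}) and, for $u=0$, from the $Q_\al(0)$, $Q_\al(0,0)$ bounds \eqref{boundq0}--\eqref{boundq00}.

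\emph{Error terms and main obstacle.} Errors from: (i) the cases $\tilde\al(-k_i)\neq0$, which are controlled by the $|w_k|^2$ tail estimate \eqref{resultboundPH} together with \eqref{properf32}; (ii) the $w_{k_2}-w_{k_1}$ replacement, controlled by \eqref{derwp} and the total mass $\sum_k|w_k|^2$; and (iii) restricting the $k$-sum from $\Lambda^*$ to $P_H$, absorbed by the smoothness of $V$ in position space. I expect the main obstacle to be Step 4: getting the ordering/combinatorial factors coming from the sum over ordered $(u_1,u_2,k_1,k_2,k_3,k_4)$ to line up exactly with the coefficient $\al(0)^2+\sum 2\al(u)\al(v)$ in $N_\al$; once this matching is settled, the analytic estimates are routine consequences of the preceding lemmas.
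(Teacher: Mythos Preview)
Your proposal is correct and follows essentially the same route as the paper, which packages Steps 1--2 and the replacement $\prod_i\sqrt{-w_{k_i}}\approx w_{k_1}w_{k_3}$ into a pointwise estimate $A_{u_1,u_2,k_1,k_2,k_3,k_4}\approx F_a(u_1,u_2)^2\al(u_1)\al(u_2)w_{k_1}w_{k_3}\Lambda^{-2}$ (Lemma \ref{boundA}) and then sums over $u_i,k_i$. One small addition for Step 4: to show that the set of admissible $\tilde\al$ (those with $\mathcal A^{u_1,u_2}_{k_i,k_j}\tilde\al\in M_\al$ for both index pairs) has nearly full $|f_\al|^2$-mass, you need the box estimates \eqref{boundspsiuroughPHB}--\eqref{boundmaluv} of Lemma \ref{boundspsiBox} in addition to Lemma \ref{roughbound}; also note that for $u_1,u_2\in P_L$ one actually has $\tilde\al(u_i)=\al(u_i)$ exactly on this set, which slightly simplifies the closure.
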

\begin{lem}\label{HHHerror}
 Let $M_\al(k_1,k_2,k_3,k_4)$ be the union of $M_\al(k_1,k_2,k_3,k_4,u_1,u_2)$, i.e., 
 \beq\label{defmalkkkk}
 M_\al(k_1,k_2,k_3,k_4)\equiv \cup_{u_1,u_2\in P_\wL}M_\al(k_1,k_2,k_3,k_4,u_1,u_2).
 \eeq
 Then we have
\beq\label{HHHerror1}
\sum_{k_i\in P_H}\sum_{(\be,\gamma)\notin M_\al(k_1,k_2,k_3,k_4)}V_0\Lambda^{-1}\left|\overline{f_\al(\be)}f_\al(\gamma)\left\langle \be |a^\dagger_{k_1}a^\dagger_{k_2}a_{k_3}a_{k_4}|\gamma\right\rangle\right|\leq\frac{\eps_3}{2}\rho^2\Lambda
\eeq
Here $k_i\neq k_j$ for $i\neq j$ and $\eps_3$ is independent of $\al$, $\lim_{\rho\to 0}\eps_3=0$.
\end{lem}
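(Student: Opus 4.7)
The plan is to identify the structural defect that characterizes $(\be,\gamma)\notin M_\al(k_1,k_2,k_3,k_4)$, show such configurations are rare, and then bound the error sum using Cauchy--Schwarz and the rough estimates from Lemma \ref{roughbound}. First I would reduce the matrix element: since $\be,\gamma\in M_\al$ have $P_H$-occupations in $\{0,1\}$ and the $k_i$'s are pairwise distinct, $|\langle\be|a^\dagger_{k_1}a^\dagger_{k_2}a_{k_3}a_{k_4}|\gamma\rangle|$ equals $1$ when $\be(k_1)=\be(k_2)=1$, $\gamma(k_3)=\gamma(k_4)=1$, and $\be,\gamma$ agree on every other momentum, and vanishes otherwise.

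The key structural observation is that, by Lemma \ref{BpMal}, if $(k_1,k_2)$ arises as a single ``clean'' pair in $\be$'s decomposition -- i.e., some $A^{v_1,v_2}_{k_1,k_2}$ appears explicitly in the product giving $\be$ -- then $\tilde\al$ obtained from $\be$ by undoing this action lies in $M_\al$ (because $P_L(\tilde\al,\al)\subset P_L(\be,\al)$ and subsets of non-trivial sets are non-trivial), and the occupation-number identities $A^{v_1,v_2}_{k_1,k_2}\tilde\al=\be$ and $A^{v_1,v_2}_{k_3,k_4}\tilde\al=\gamma$ both hold (the second using that $\be$ and $\gamma$ agree off $\{k_1,k_2,k_3,k_4\}$). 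Hence $(\be,\gamma)\in M_\al(k_1,k_2,k_3,k_4,v_1,v_2)$. The bad set is therefore contained in the event that $(k_1,k_2)$ is \emph{cross-paired} in $\be$, meaning $k_1$ and $k_2$ lie in different high-momentum pairs -- forcing $\be$ to carry at least four high-momentum particles $\{k_1,p_1,k_2,p_2\}$ with $p_1,p_2\in P_H\setminus\{k_1,k_2\}$ -- or the symmetric statement for $\gamma$. The archetypal worst case is $\be = A^{0,0}_{k_1,-k_1}A^{0,0}_{k_2,-k_2}\al$ with $k_1+k_2\neq 0$: any candidate source $(u_1,u_2)$ with $u_1+u_2=k_1+k_2$ forces $\tilde\al(u_j)=\al(u_j)+1$ for some $j$ (since $\be(u)=\al(u)$ on $P_L$ here), violating $\tilde\al\in\widetilde M_\al$.

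To bound the error, I would apply Cauchy--Schwarz $|\overline{f_\al(\be)}f_\al(\gamma)|\leq \frac12(|f_\al(\be)|^2+|f_\al(\gamma)|^2)$ and estimate the cross-paired weight
\[ W(k_1,k_2)\equiv\sum_{\be\in M_\al:\,\be\text{ cross-paired at }(k_1,k_2)}|f_\al(\be)|^2 \]
by iterating the argument of Lemma \ref{boundPH}: applying \eqref{properf3} once for each of the two cross-paired pairs in $\be$ gives $|f_\al(\be)|^2$ bounded by a product of four $|w|$-factors times four source factors $\al(u)/|\Lambda|$, so summing over the cross-partners $p_1,p_2\in P_H$ and sources in $P_\wL$ under momentum-conservation constraints yields $W(k_1,k_2)\leq \const\,\rho^{4-c\eta}|w_{k_1}||w_{k_2}|$. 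The remaining sum over $(k_3,k_4)$ is constrained by $k_3+k_4=k_1+k_2$, contributing $|P_H|\sim \Lambda\rho^{-3\eta}$ choices, and the sum over $(k_1,k_2)$ converges using $|w_k|\leq 4\pi a|k|^{-2}$ as in \eqref{boundwp}. Combined with the prefactor $V_0|\Lambda|^{-1}$ and the values $|\Lambda|=\rho^{-41/20}$, $\eta=1/200$, the total is $o(\rho^2|\Lambda|)$ with room to spare.

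The main obstacle will be the careful enumeration of all failure modes beyond the cross-paired case: one must also rule out obstructions from the ``at most one deviation per box'' rule of $\widetilde M_\al$, handle the degenerate momentum configuration $k_1+k_2=0$ (where only $(0,0)$ and $(v,-v)$ sources are admissible), and treat accidental coincidences where a non-source pair happens to satisfy $u+u'=k_1+k_2$. Tracking the combinatorial factors in $f_\al$ (especially the $\sqrt{2}$ factors when $\be(k)>\be(-k)$) without double counting is also delicate, but this bookkeeping is analogous to the case analysis already carried out in the proofs of Lemmas \ref{lemHwLH} and \ref{boundPH}.
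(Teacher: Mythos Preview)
Your structural observation is right: bad pairs force $k_1,k_2$ to lie in different $A$-pairs of $\be$'s decomposition. But the Cauchy--Schwarz route fails the power count. After $|\overline{f_\al(\be)}f_\al(\gamma)|\le\tfrac12(|f_\al(\be)|^2+|f_\al(\gamma)|^2)$ you discard the constraint that $\gamma\in M_\al$, and bound the number of admissible $(k_3,k_4)$ by $|P_H|\sim\Lambda\rho^{-3\eta}$. Combined with $W(k_1,k_2)\le\const\,\rho^{4-c\eta}|w_{k_1}||w_{k_2}|$ and $\sum_{k\in P_H}|w_k|\sim\Lambda\rho^{-\eta}$ (the sum is \emph{not} convergent; $|k|^{-2}$ gives a linear UV divergence cut off at $\rho^{-\eta}$), the total is
\[
V_0\Lambda^{-1}\cdot\Lambda\rho^{-3\eta}\cdot\rho^{4-c\eta}\cdot\Lambda^2\rho^{-2\eta}\;=\;O\bigl(\Lambda^2\rho^{4-c'\eta}\bigr),
\]
and with $\Lambda=\rho^{-41/20}$ this is $\rho^{-1/20-c'\eta}\cdot\rho^2\Lambda$, which diverges rather than being $o(\rho^2\Lambda)$. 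The decoupling loses exactly one factor of $\Lambda^{1/4}$ per high-momentum particle, and that is fatal here.

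The paper does \emph{not} decouple. It bounds $|f_\al(\be)f_\al(\gamma)|$ directly in terms of a common ancestor $\tilde\al$ and then counts the pairs $(\be,\gamma)$ jointly via a graph on the high momenta: $\be$-pairs and $\gamma$-pairs form alternating edges whose connected components are two chains (ending at $k_1,\dots,k_4$) and loops of length $\ge 4$. The key combinatorial fact is that the number of components satisfies $l\le (s+t)/4+1$, so the free-momentum count for the \emph{pair} $(\be,\gamma)$ is only $|\Lambda|^{(s+t)/4+1}$ rather than the $|\Lambda|^{(s+t)/2}$ one would get for $\be$ alone. This is precisely the gain that makes the sum converge; it cannot be recovered after Cauchy--Schwarz. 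Incidentally, your ``archetypal worst case'' $\be=A^{0,0}_{k_1,-k_1}A^{0,0}_{k_2,-k_2}\al$ is not bad at all: with all sources zero there is \emph{no} $\gamma\in M_\al$ matching it (the paper's $t\ge 1$ in \eqref{stgeq2}), so the genuine error terms always involve at least one $P_L$ source.
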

\subsection{Proof of Lemma \ref{lemHHH}}
\begin{proof}
\par  Definition of $M_\al$ implies that when $k\in P_H$ and $\be\in M_\al$, 
$$\be(k)\in \{0, 1\}$$  
Then the expectation value of $a^\dagger_{k_1}a^\dagger_{k_2}a_{k_3}a_{k_4}$ must be zero when $k_1=k_2$ or $k_3=k_4$. Together with the definition of $H_{HH}$,  we can rewrite $\langle H_{HH}\rangle_{\Psi_\al}$ as 
\beq
\langle H_{HH}\rangle_{\Psi_\al}=\sum_{k_i\in P_H}^{k_i\neq k_j}\sum_{\be,\gamma\in M_\al}\frac12V_{k_1-k_3}\Lambda^{-1}\overline{f_\al(\be)}f_\al(\gamma)\left\langle \be |a^\dagger_{k_1}a^\dagger_{k_2}a_{k_3}a_{k_4}|\gamma\right\rangle
\eeq
On the other hand,  if $\be, \gamma\in M_\al$ and  
$\langle \be|a^\dagger_{k_1}a^\dagger_{k_2}a_{k_3}a_{k_4}|\gamma\rangle\neq 0$ for some $k_{1,2,3,4}\in P_H$, then by the fact $P_L(\be,\al)=P_L(\gamma,\al)$ is non-trivial subset of $P_L$(Def. \ref{defmal}),  there exists \textbf{at most} one pair of $\{u_1,u_2\}$ such that
\beq\label{temp11.6}
(\be,\gamma)\in M_\al(k_1,k_2,k_3,k_4,u_1,u_2)
\eeq
Therefore combining \eqref{resultHHHmain} and \eqref{HHHerror1}, with $|V_{k_1-k_3}|\leq V_0$, we obtain the desired result \eqref{proofthem1-5}. 

\end{proof}
\subsection{Proof of Lemma \ref{HHHmain}}
\begin{proof}
We start with bounding  $A_{u_1,u_2, k_1,k_2, k_3,k_4}$. 
\begin{lem}\label{boundA}
When $u_1, u_2\in P_L$ and $u_1=\pm u_2$ or $u_2\in B_L(u_1)$, for any $k_i\in P_H$, we have
\beq\label{boundauukkkk0}
A_{u_1,u_2, k_1,k_2, k_3,k_4}=0
\eeq
In other cases,  $A_{u_1,u_2, k_1,k_2, k_3,k_4}$ is bounded by (Recall $P_0=\{0\}$)
\beqa\label{boundauukkkk}
&&\left|A_{u_1,u_2, k_1,k_2,k_3,k_4}
-\al(u_1)\al(u_2)F_a(u_1,u_2)^2w_{k_1}w_{k_3}\Lambda^{-2}\right|
\\\nonumber
\leq && \rho^{1/8}\Lambda^{-2}\times\left\{\begin{array}{ll}
 \al(u_1)\al(u_2),&u_1,u_2\in P_L\\  N\al(u_2),& u_1\in P_0,u_2\in P_L\\  N^2,& u_1=u_2\in P_0,
\end{array}\right.
\eeqa
where $F_a(u_1,u_2)=1$ when $u_1=u_2=0$, otherwise $F_a(u_1,u_2)=2$. 
\end{lem}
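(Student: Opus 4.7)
The plan is to first dispose of the degenerate case \eqref{boundauukkkk0}, then reduce the matrix element to $1$, and finally apply Lemma \ref{f} together with the particle-number bounds of Section 6 to extract the main term $\al(u_1)\al(u_2)F_a^2 w_{k_1}w_{k_3}\Lambda^{-2}$.

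For \eqref{boundauukkkk0}, Lemma \ref{f}(2) states that if $u_1,u_2\in P_L$ satisfy $u_1=\pm u_2$ or $u_2\in B_L(u_1)$, then $\mathcal A^{u_1,u_2}_{k_1,k_2}\tilde\al\notin M_\al$ for every $\tilde\al\in M_\al$; hence the index set $M_\al(k_1,k_2,k_3,k_4,u_1,u_2)$ of \eqref{defMalkkuu} is empty and $A=0$. In the remaining cases, every $\be,\gamma\in M_\al$ has $\be(k),\gamma(k)\in\{0,1\}$ for each $k\in P_H$, so the canonical commutation relations give $\langle\be|a^\dagger_{k_1}a^\dagger_{k_2}a_{k_3}a_{k_4}|\gamma\rangle=1$ exactly when $\be=\mathcal A^{u_1,u_2}_{k_1,k_2}\tilde\al$ and $\gamma=\mathcal A^{u_1,u_2}_{k_3,k_4}\tilde\al$ for a common $\tilde\al$, and vanishes otherwise. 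Therefore $A$ reduces to $\sum_{\tilde\al}\overline{f_\al(\be)}f_\al(\gamma)$ summed over exactly these $\tilde\al\in M_\al$.

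Next I would split this sum according to whether $\tilde\al(-k_j)=0$ for all four $j$. On this ``good'' portion I apply Lemma \ref{f}: identity \eqref{properf1} (when $u_1=u_2=0$, so $k_2=-k_1$, $k_4=-k_3$, and $F_a=1$) or identity \eqref{properf3} (when $u_1\neq\pm u_2$, $F_a=2$) expresses each of $f_\al(\be),f_\al(\gamma)$ as $f_\al(\tilde\al)$ times an explicit product of square roots. Multiplying and conjugating yields
\begin{equation*}
\overline{f_\al(\be)}f_\al(\gamma)=F_a^2\,\frac{\tilde\al(u_1)\tilde\al(u_2)}{\Lambda^2}\,\overline{\sqrt{-w_{k_1}}\sqrt{-w_{k_2}}}\,\sqrt{-w_{k_3}}\sqrt{-w_{k_4}}\,|f_\al(\tilde\al)|^2.
\end{equation*}
Momentum conservation forces $|k_1+k_2|=|k_3+k_4|=|u_1+u_2|\le 2\rho^{1/3-\eta}$, so combining with the derivative bound \eqref{derwp} and $w_{-k}=w_k$ gives $\sqrt{-w_{k_1}}\sqrt{-w_{k_2}}=-w_{k_1}+o(1)$ and similarly for the other pair. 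Since $(\sqrt{-w_{k_1}})^2=-w_{k_1}$ is real, the conjugate also reduces to $-w_{k_1}$, so the full $w$-product becomes $w_{k_1}w_{k_3}$ up to an error much smaller than $\rho^{1/8}$. Summing over $\tilde\al$ and using Lemma \ref{roughbound} together with \eqref{boundq0} and \eqref{boundq00} to replace $\tilde\al(u_i)$ by $\al(u_i)$ then produces the claimed main term.

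The error analysis is the main technical step and has three sources: (i) the ``bad'' $\tilde\al$ with $\tilde\al(-k_j)\ne 0$ for some $j$, controlled by the weaker inequality \eqref{properf32} combined with \eqref{boundspsiuroughPH}, contributing a factor $O(\rho^{2-4\eta})$ smaller than the main term; (ii) the $w$-replacement just discussed, controlled by \eqref{derwp}; and (iii) the $\tilde\al(u_i)\to\al(u_i)$ replacement, which costs $O(\rho^{1-4\eta})\al(u_1)\al(u_2)$ when $u_1,u_2\in P_L$ via \eqref{boundspsiuroughPL}, $O(N^2\rho^{5/6})$ when $u_1=u_2=0$ via \eqref{boundq00}, and an interpolated bound when $u_1=0$, $u_2\in P_L$ via \eqref{boundq0}. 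The main obstacle is orchestrating these three case-dependent prefactors so that each $\eta$-adjusted loss is absorbed by the generous $\rho^{1/8}$ tolerance; this is routine but requires careful tracking of powers of $\rho^\eta$ across the bounds of Section 6.
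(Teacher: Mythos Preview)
Your proposal is correct and follows essentially the same approach as the paper: the same good/bad split on $\tilde\al\in\bigcap_{i}M_\al(-k_i)$, the same use of Lemma~\ref{f} for exact identities on the good part, the same $w$-replacement via \eqref{derwp}/\eqref{sqrtk12k1}, and the same appeal to \eqref{boundq0}--\eqref{boundq00} in the $P_0$ cases. One small correction to your error source (iii): when $u_1,u_2\in P_L$ and $\mathcal A^{u_1,u_2}_{k_1,k_2}\tilde\al\in M_\al$, the definition of $\widetilde M_\al$ forces $\tilde\al(u_i)=\al(u_i)$ \emph{exactly} (cf.\ \eqref{newtemp2} in the paper), so no replacement error arises; the actual error in that case comes instead from the fact that the index set $A=\{\tilde\al: \mathcal A^{u_1,u_2}_{k_1,k_2}\tilde\al,\ \mathcal A^{u_1,u_2}_{k_3,k_4}\tilde\al\in M_\al,\ \tilde\al\in\bigcap_i M_\al(-k_i)\}$ is a proper subset of $M_\al$, and one needs $\sum_{\tilde\al\in A}|f_\al(\tilde\al)|^2\ge 1-O(\rho^{1/6})$, for which the paper invokes Lemma~\ref{boundspsiBox} (in particular \eqref{boundmaluv}, \eqref{boundspsiuroughPHB}, \eqref{boundspsiuroughPLB}) rather than \eqref{boundspsiuroughPL} directly.
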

\begin{proof}{Proof of Lemma \ref{boundA}}
\par First we prove \eqref{boundauukkkk0}.  One can see that it follows  the definition of $A_{u_1,u_2, k_1,k_2, k_3,k_4}$ and the  result 2 in Lemma \ref{f}.
\par Then we prove \eqref{boundauukkkk} when $u_1,u_2\in P_L$. When \eqref{temp11.6} holds, by the definition of $M_\al$ $(k_1,k_2,k_3,k_4,u_1,u_2)$ in \eqref{defMalkkuu},   there exists $\tilde \al\in M_\al$ such that 
\beq\label{newtemp1}
\mathcal A^{u_1,u_2}_{k_1,k_2}\tilde \al=\be,\,\,\,\mathcal A^{u_1,u_2}_{k_3,k_4}\tilde \al=\gamma.
\eeq
 With definition of $f_\al$, when $\widetilde \al\in\cap_{i=1}^4 M_\al(-k_i)$, we have
 \beqa\label{temp10.6}
 f_\al(\be)=-F_a(u_1,u_2)\sqrt{\al(u_1)\al(u_2)}\Lambda^{-1}\sqrt{-w_{k_1}}\sqrt{-w_{k_2}}f_\al(\tilde \al)
\\\nonumber
 f_\al(\gamma)=-F_a(u_1,u_2)\sqrt{\al(u_1)\al(u_2)}\Lambda^{-1}\sqrt{-w_{k_1}}\sqrt{-w_{k_2}}f_\al(\tilde \al).
 \eeqa
And when $\widetilde \al\notin\cap_{i=1}^4 M_\al(-k_i)$, we have the following bound on $\left| f_\al(\be)f_\al(\gamma)\right|$,
\beq\label{temp10.62}
\left| f_\al(\be)f_\al(\gamma)\right|\leq4\al(u_1)\al(u_2)\Lambda^{-2}\prod_{i=1}^4|\sqrt{w_{k_i}}||f_\al(\tilde \al)|^2
\eeq
 \par On the other hand,  if $k_i\in P_H$ for $1\leq i\leq 4$ and 
\beq\label{defnastk1234}
\be, \gamma\in M_\al\rmand \langle \be|a^\dagger_{k_1}a^\dagger_{k_2}a_{k_3}a_{k_4}|\gamma\rangle\neq 0,
\eeq
then by the definition of $M_\al$, we have $\be(k_1)=\be(k_2)=1$ and $\gamma(k_3)=\gamma(k_4)=1$. This implies  
 \beq\label{temp10.7}
 \langle \be|a^\dagger_{k_1}a^\dagger_{k_2}a_{k_3}a_{k_4}|\gamma\rangle=1
 \eeq
 Combining \eqref{temp10.6}, \eqref{temp10.62} and \eqref{temp10.7}, we obtain that when \eqref{newtemp1} holds and  $\widetilde \al\in\cap_{i=1}^4 M_\al(-k_i)$, 
 \beq\label{temp11.14}
f_\al(\be)\overline {f_\al(\gamma)}\langle \be|a^\dagger_{k_1}a^\dagger_{k_2}a_{k_3}a_{k_4}|\gamma\rangle=F_a(u_1,u_2)^2\tilde \al(u_1)\tilde \al(u_2)\Lambda^{-2}\prod_{i=1}^4\sqrt{-w_{k_i}}|f_\al(\tilde\al)|^2
 \eeq
 When $\widetilde \al\notin\cap_{i=1}^4 M_\al(-k_i)$, using \eqref{boundspsiuroughPH}, we have 
 \beq\label{temp11.15}
 \sum_{\widetilde \al\notin\cap_{i=1}^4 M_\al(-k_i)}\left|f_\al(\be)\overline {f_\al(\gamma)}\langle \be|a^\dagger_{k_1}a^\dagger_{k_2}a_{k_3}a_{k_4}|\gamma\rangle\right|\leq \const \rho^{3/2}\al(u_1)\al(u_2)\Lambda^{-2}
 \eeq
Combining  \eqref{temp11.14} and \eqref{temp11.15}, we can see 
\beqa\label{temp10.10}
 &&A_{u_1,u_2, k_1,k_2,k_3,k_4}+O(\rho^{3/2})\al(u_1)\al(u_2)\Lambda^{-2}\\\nonumber
 =&&F_a(u_1,u_2)^2\Lambda^{-2}\prod_{i=1}^4\sqrt{-w_{k_i}}\sum_{\tilde \al\in A}\tilde \al(u_1)\tilde \al(u_2)|f(\tilde\al)|^2
 \eeqa
Where $A$ is defined as the set  
$$A\equiv\{\widetilde \al\in M_\al: \mathcal A^{u_1,u_2}_{k_1,k_2}\tilde \al=\be\in M_\al,\; \mathcal A^{u_1,u_2}_{k_3,k_4}\tilde \al=\gamma\in M_\al, \;\widetilde \al\in\cap_{i=1}^4 M_\al(-k_i)\}$$
Since $u_1,u_2\in P_L$, when $\widetilde \al \in A $, 
\beq\label{newtemp2}\widetilde \al(u_i)=\al(u_i)\,\,\,(i=1,2).
\eeq Furthermore,
 using the results in Lemma \ref{boundspsiBox}, we have that $\sum_{\tilde \al\in A}|f(\tilde\al)|^2$ bounded by 
 \beq\label{temp10.12}
 1\leq \sum_{\tilde \al\in A}|f(\tilde\al)|^2\leq 1-O(\rho^{1/6})
 \eeq
 On the other hand, using \eqref{sqrtk12k1}, with the fact $|k_1+k_2|=|k_3+k_4|\leq \rho^{1/3}\rho^{-\eta} $, one can bound the $\prod_{i=1}^4\sqrt{-w_{k_i}}$ in \eqref{temp10.10} as follows
 \beq\label{temp10.13}
\left|\prod_{i=1}^4\sqrt{-w_{k_i}}-w_{k_1}w_{k_3}\right|\leq O(\rho^{1/4-\eta})
\eeq
Inserting \eqref{newtemp2}, \eqref{temp10.13} and \eqref{temp10.12} into \eqref{temp10.10}, we arrive at the desired result \eqref{boundauukkkk}.  
\par Similarly, using the bounds on $Q_\al(0)$ and $Q_\al(0,0)$ in \eqref{boundq0} and \eqref{boundq00}, one can prove \eqref{boundauukkkk} when one of $u_i$ belongs to $ P_0$ or both of them belong to $P_0$. 
\end{proof}
With \eqref{boundauukkkk}, summing up $k_1,k_3, u_1,u_2$, one can easily obtain the desired result \eqref{resultHHHmain}.
\end{proof}
\subsection{Proof of Lemma \ref{HHHerror}}
\begin{proof} As in \cite{YY},  to estimate the error term of the interaction of particles with high momenta, we need to use a new tool. We start with defining the set $M_\al(\widetilde\al,s,\{v_1,\cdots,v_t\})$.  Let $v_1,\cdots, v_t\in P_L$ and being in different small boxes $B_L$, i.e., 
\beq
B_L(v_i)\neq B_L(v_j),\rmfor \;i\neq j.
\eeq
 For non-negative integers $s,t$ satisfying $s+t\in 2\N$ and $\widetilde \al\in M_\al$, define 
\beq\label{elementform2}
 M(\widetilde\al,s,\{v_1,\cdots,v_t\})\equiv \cup_{m}\left\{\be \in M_\al: \be=\prod_{i=m+1}^{(s+t)/2}\mathcal A^{u_{2i-1}, u_{2i}}_{p_{2i-1}, {p_{2i}}}\prod_{i=1}^{m}\mathcal A^{u_{2i-1}, u_{2i}}_{p_{2i-1}, {p_{2i}}}\,\,\widetilde\al\right\}
\eeq
where the $u_i$'s $\in P_\wL$ and $p_i$'s $\in P_H$ such that
\begin{enumerate}
\item 
$u_i=0$ for $i\leq 2m$.
\item $\{u_i$, $ 2m+1\leq i\leq s+t\}$ is a permutation of $s-2m$ zeros and $\{v_1,\cdots,v_t\}$.
\item for any fixed $2m+1\leq j\leq s+t$, $\widetilde \al\in M_\al(-p_j)$, i.e., $\widetilde \al(-p_j)=\al(-p_j)$. 
\item  $p_j\neq -p_i$ for any $2m+1\leq j\leq s+t$ and $1\leq i\leq s+t$.
\end{enumerate}
\par We note: for any $u_i$'s and $p_i$'s satisfying these four conditions, 
 one can easily  check that
\beq
\prod_{i\in A}\mathcal A^{u_{2i-1}, u_{2i}}_{p_{2i-1}, {p_{2i}}}\,\,\widetilde\al\in M_\al.
\eeq
holds for any $A\subset\{1,\cdots (s+t)/2\}$
\par
By this definition, if \eqref{defnastk1234} holds, then $\be(u)=\gamma(u)$ for any $u\in P_\wL$, then there at least  exists one $M_\al(\widetilde\al, s, \{v_i,1\leq i\leq t\})$ such that 
\beq\label{begaalsv1vt}
\be\rmand \gamma\in M_\al(\widetilde \al, s, \{v_1,\cdots,v_t\})
\eeq
E.g. Using Lemma \ref{BpMal}, we can see \eqref{begaalsv1vt} holds when we choose $\widetilde \al=\al$, $\{v_1,\cdots,v_t\}=P_L(\beta,\al)=P_L(\gamma,\al)$. 
\par Furthermore, with $M_\al(\widetilde \al,s,\{v_1,\cdots,v_t\})$, we define $N(\widetilde \al,s,\{v_1,\cdots,v_t\})$ as the set of the pairs $(\beta,\gamma)$ such that 
\begin{enumerate}
	\item $\be,\,\,\,\gamma\in M_\al(\tilde \al,s,\{v_1,\cdots,v_t\})$
	\item there exist $k_i, 1\leq i\leq 4$ satisfying \eqref{defnastk1234} but
	\beq(\be,\gamma)\notin M_\al(k_1,k_2,k_3,k_4).\eeq
	 Here $M_\al(k_1,k_2,k_3,k_4)$ is defined in \eqref{defmalkkkk}
\item for any other  $\widetilde\al\,',s',\{v'_1,\cdots,v'_{t'}\}$, if  $\be,\,\,\,\gamma\in M_\al(\widetilde\al\,',s',\{v'_1,\cdots,v'_{t'}\})$, then
\beq\label{stst}
s+t\leq s'+t' 
\eeq
\end{enumerate}
We assume \eqref{begaalsv1vt} and \eqref{defnastk1234} holds. Clearly,  $s+t=2$ or $ t=0$ implies that  $(\be,\gamma)\in M_\al(k_1,k_2,k_3,k_4)$. 
Hence if $N(\widetilde\al,s,\{v_1,\cdots,v_t\}) $ is not an empty set then 
\beq\label{stgeq2}
s+t\geq 4, \;  \text { and } \;
t\geq 1
\eeq 

By definition of $N(\widetilde \al,s,\{v_1,\cdots,v_t\})$ and \eqref{temp10.7}, we can bound the left side of \eqref{HHHerror1} as follows ($k_i\neq k_j$ for $i\neq j$)
\beqa\label{temp8.173}
&&\sum_{k_i\in P_H}\sum_{\be,\gamma \notin M_\al(k_1,k_2,k_3,k_4)}\!\!\!\!\!\!\!\!V_0\Lambda^{-1}\left|\overline{f_\al(\be)}f_\al(\gamma)\left\langle \be |a^\dagger_{k_1}a^\dagger_{k_2}a_{k_3}a_{k_4}|\gamma\right\rangle\right|\\\nonumber
\leq &&\sum_{\widetilde \al, s, \{v_1\cdots v_t\}}\!\!\!\!V_0\Lambda^{-1}\left|N(\widetilde \al,s,\{v_1,\cdots,v_t\})\right|\max_{\be,\gamma\in M_\al(\widetilde\al,s,\{v_1,\cdots,v_t\})}{\left|\overline{f_\al(\be)}f_\al(\gamma)\right|}, 
\eeqa
where $\left|N(\widetilde \al,s,\{v_1,\cdots,v_t\})\right|$ is the number of the elements in this set.
 When  \eqref{begaalsv1vt} holds, the definition of $f_\al$ implies, 
\beq\nonumber
|\overline{f_\al(\be)}f_\al(\gamma)|\leq \const^{t+s}\left|\frac{\al(0)}{|\Lambda|}\right|^{ s} \left|\frac {\rho^{-3\eta}}{|\Lambda|}\right|^{t }\max_{k\in P_H}\{|w_k|\}^{s+t}|f_\al(\tilde\al)|^2
\eeq
Here we used $m_c\leq \rho^{-3\eta}$. Again with the facts $|w_p|\leq 4\pi a |p|^{-2}$ and $\al(0)\leq N$, we obtain
\beq
|\overline{f_\al(\be)}f_\al(\gamma)|
\leq \const^{t+s}(\rho^{1-2\eta})^{s}(\rho^{-5\eta})^t|\Lambda|^{-t} |f_\al(\tilde\al)|^2
\eeq
Therefore, the r.h.s of \eqref{temp8.173} is bounded by 
\beq\label{temp8.176}
\eqref{temp8.173}\le  \sum_{\widetilde \al , s, \{v_1\cdots v_t\}}\left|N(\widetilde \al ,s,\{v_1,\cdots,v_t\})\right|\rho^s(\rho^{-6\eta})^{t+s}|\Lambda|^{-t-1} |f(\widetilde \al )|^2
\eeq
Define $N(\widetilde \al ,s,t)$ and $ N(s,t)$ by 
\beq
 N(\widetilde \al , s,t)\equiv \max_{ \{v_1,\cdots,v_t\}}\left\{\left| N(\widetilde \al , s,\{v_1,\cdots,v_t\})\right|\right\}
\eeq
\beq
 N(s,t)\equiv \max_{\widetilde \al }\left\{ N(\widetilde \al , s,t)\right\}
\eeq
With the notations $N(\widetilde \al ,s,t)$ and $ N(s,t)$, we can bound \eqref{temp8.176} by 
\beqa\nonumber
\eqref{temp8.173}\leq \eqref{temp8.176} &&\leq \sum_{\widetilde \al , s,t}|f(\widetilde \al )|^2\sum_{\{v_1\cdots v_t\}} N(\widetilde \al , s,t)\rho^s(\rho^{-6\eta})^{t+s}|\Lambda|^{-t-1}\\
&&\leq \sum_{s,t}\sum_{\{v_1\cdots v_t\}} N( s,t)\rho^s(\rho^{-6\eta})^{t+s}|\Lambda|^{-t-1}
\eeqa
For fixed $t$, the total number of sets $\{v_1\cdots v_t, v_i \in P_L\}$ is bounded by
$$
\sum_{\{v_1\cdots v_t\}}1\leq (\Lambda\rho\,\eta_L^{-3})^t(t!)^{-1}\leq (\rho^{1-3\eta})^{t}|\Lambda|^{t}(t!)^{-1}
$$
On the other hand,   $t$ is bounded above by the total number of $B_L$'s (the sides of $B_L$'s are about $\rho^{3\kappa_L}$ )in $P_L$, i.e.,   
\beq\label{ineqt}
t\leq |P_L|/\max_i\{|B_L^i|\}\leq \const\rho^{1-3\eta-3\kappa_L},
\eeq
 where $|P_L|$ and $|B_L^i|$ are the volumes of $P_L$ and the small box $B_L^i$'s. Together with  \eqref{stgeq2}, we  bound the r.h.s of \eqref{temp8.173} as follows, 
\beq\label{temp8.178}
\eqref{temp8.173}\leq \sum_{t=1}^{\rho^{1-4\eta-3\kappa_L}}\sum_{s: s+t\geq 4} N( s,t)(\rho^{1-9\eta})^{s+t}|\Lambda|^{-1}(t!)^{-1}
\eeq
We claim that $ N( s,t)$ is bounded with the following lemma, which will be proved in next subsection. 
\begin{lem}\label{lem11.1}
For any $N(\al, s,\{v_1,\cdots,v_t\})$, $s+t\geq 4$ and $t\geq 1$, we have 
\beqa\label{0.025}
\left|N(\al, s,\{v_1,\cdots,v_t\})\right|\leq t\,!\,t^{(\frac {3t}4)}|\Lambda|^{\frac{s+t}4+1} (\rho^{-\eta})^{t+s}\\\nonumber
\eeqa
\end{lem}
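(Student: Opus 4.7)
The plan is to parametrize each pair $(\beta,\gamma) \in N(\al, s, \{v_1, \dots, v_t\})$ by its combinatorial data and then enumerate the possibilities. By Lemma~\ref{BpMal}, every $\beta \in M_\al(\al, s, \{v_1, \dots, v_t\})$ admits a decomposition as a product $\beta = \prod_{i=1}^{(s+t)/2} \mathcal{A}^{u_{2i-1}, u_{2i}}_{p_{2i-1}, p_{2i}}\,\al$ in which the multiset $\{u_1, \dots, u_{s+t}\}$ equals $\{0^s, v_1, \dots, v_t\}$. I encode $\beta$ by the pair $(\mu^\beta, P^\beta)$, where $\mu^\beta$ is the induced matching on the $u$-indices and $P^\beta$ records the tuple of assigned high momenta; similarly $\gamma \leftrightarrow (\mu^\gamma, P^\gamma)$.

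First I would establish the structural constraint. A nonzero matrix element $\langle\beta| a^\dagger_{k_1} a^\dagger_{k_2} a_{k_3} a_{k_4} |\gamma\rangle \neq 0$ forces the high-momentum supports of $\beta$ and $\gamma$ to agree except on the swap $\{k_1,k_2\} \leftrightarrow \{k_3,k_4\}$, so they share $s+t-2$ high-momentum particles, with $k_1+k_2 = k_3+k_4$. The crux is that $(\beta,\gamma) \notin M_\al(k_1,k_2,k_3,k_4)$ forces $\mu^\beta \neq \mu^\gamma$: if the matchings agreed, removing the $\mathcal{A}$-operator producing $\{k_1,k_2\}$ in $\beta$ and $\{k_3,k_4\}$ in $\gamma$ would exhibit a common ancestor $\tilde\al' \in M_\al$ with the same $u$-pair, placing $(\beta,\gamma) \in M_\al(k_1,k_2,k_3,k_4,u_1,u_2)$ and contradicting the hypothesis. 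Consequently the symmetric difference $\mu^\beta \triangle \mu^\gamma$ is a nonempty union of even cycles on the $u$-index set, and the minimality property~3 in the definition of $N$ rules out cycle configurations that would admit a strictly shorter joint decomposition.

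Second I would count. The combinatorial prefactor $t! \cdot t^{3t/4}$ bounds the number of admissible matching pairs $(\mu^\beta, \mu^\gamma)$: the $t!$ handles permutations of the labels $v_1, \dots, v_t$ among matching positions, while $t^{3t/4}$ bounds the allocations of the $s$ zero-$u$'s into the matching consistent with a nonempty cycle structure. For the high momenta, momentum conservation $p_{2i-1} + p_{2i} = u_{2i-1} + u_{2i}$ at each pair together with the exchange relation $k_1+k_2 = k_3+k_4$ propagates along each cycle of $\mu^\beta \triangle \mu^\gamma$: around a cycle of length $2\ell$, specifying only $\lceil \ell/2 \rceil$ high momenta determines the remaining ones by the linear relations. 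The $(s+t)/2$ pairs lying outside the cycles are fully shared between $\beta$ and $\gamma$ and contribute no extra freedom. Since each free momentum lies in $P_H$ with $|P_H| \leq C|\Lambda|\rho^{-3\eta}$, summing over the cycle lengths gives at most $(s+t)/4 + 1$ genuinely free high momenta, producing the claimed $|\Lambda|^{(s+t)/4 + 1} (\rho^{-\eta})^{t+s}$ factor.

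The principal obstacle is the constraint-propagation claim: proving rigorously that a cycle of length $2\ell$ in $\mu^\beta \triangle \mu^\gamma$ yields only $\lceil \ell/2 \rceil$ free high momenta rather than $\ell$. This halving does not follow from momentum conservation on individual pairs alone; it emerges from stacking the linear constraints along the cycle and then closing the cycle via the exchange condition, together with the minimality hypothesis ruling out degenerate solutions that would reduce to a shorter joint decomposition. Once this structural bound is in place, summing over all cycle configurations and invoking an elementary count of matchings on $t$ labelled vertices yields the claimed inequality.
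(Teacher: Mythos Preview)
Your overall target---a graph/cycle structure constraining the number of free high momenta to at most $(s+t)/4+1$---is the right one, but the combinatorial object you build is on the wrong vertex set, and the ``halving'' mechanism you flag as the principal obstacle does not actually hold in your framework.

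The paper's graph lives on the \emph{high momenta}, not on the $u$-indices. Since $\langle\beta|a^\dagger_{k_1}a^\dagger_{k_2}a_{k_3}a_{k_4}|\gamma\rangle\neq0$ forces the high-momentum supports of $\beta$ and $\gamma$ to agree except for the swap $\{k_1,k_2\}\leftrightarrow\{k_3,k_4\}$, one takes as vertices $k_1,k_2,k_3,k_4$ together with the $s+t-2$ shared high momenta $p_1,\dots,p_{s+t-2}$, and as edges the creation pairs from both decompositions. Each $k_i$ has degree $1$ and each $p_i$ has degree $2$, so the graph splits into exactly two chains plus some loops, with $\ell$ connected components in total. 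Momentum conservation then propagates along each component: fixing one vertex together with the $u$-labels on all edges determines every other vertex in that component. Hence the number of free high momenta is exactly $\ell$---one per component---not a halved count along cycles. The bound $\ell\le(s+t)/4+1$ follows because the hypothesis $(\beta,\gamma)\notin M_\al(k_1,k_2,k_3,k_4)$ forbids any single $\mathcal A$-operator in the $\beta$-decomposition from producing $\{k_1,k_2\}$ (and similarly for $\gamma$), so the two chains together carry at least $4$ edges, while minimality of $s+t$ and non-triviality of $P_L(\beta,\al)$ force every loop to have at least $4$ edges; with $s+t$ edges total this gives the bound.

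Your construction via $\mu^\beta\triangle\mu^\gamma$ on the $u$-multiset does not capture this. With $s$ indistinguishable zeros the matching symmetric difference is not well-defined; pairs where the $u$-matchings agree can still carry \emph{different} high-momentum assignments (only the pair-sum is fixed by momentum conservation, not the individual momenta); and the global relation $k_1+k_2=k_3+k_4$ does not localize to individual $u$-cycles. In particular your claim that ``the $(s+t)/2$ pairs lying outside the cycles are fully shared between $\beta$ and $\gamma$ and contribute no extra freedom'' is false as stated. The halving $\lceil\ell/2\rceil$ you propose has no mechanism in this setup; the correct accounting is one free momentum per connected component of the high-momentum graph.

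Finally, your explanation of the prefactor $t!\,t^{3t/4}$ is inverted. In the paper, $t!$ counts placements of the $v_j$'s on the $\beta$-edges, and the factor $t^{t-\ell+s/4}$ counts placements on the $\gamma$-edges: at most $s/4$ loops can have all $\gamma$-edge labels equal to zero, so in each of the remaining (at least $\ell-s/4$) loops one $v$-placement is forced by the loop momentum constraint. The product $t^{t-\ell+s/4}|\Lambda|^{\ell}$, maximized over $\ell\le(s+t)/4+1$ (since $|\Lambda|\gg t$), yields the stated $t^{3t/4}|\Lambda|^{(s+t)/4+1}$.
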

Combining this Lemma  with \eqref{temp8.178}, we obtain
\beqa
r.h.s\,\,\,{\rm of }\,\,\, \eqref{temp8.173}&&\leq\sum_{t=1}^{\rho^{1-4\eta-3\kappa_L}}\sum_{s: s+t\geq 4} (\rho^{1-10\eta})^{s+t}t^{(\frac {3t}4)}|\Lambda|^{\frac{s+t}4}\\\nonumber
&&=\sum_{t=1}^{\rho^{1-4\eta-3\kappa_L}}\sum_{s: s+t\geq 4}(\rho^{1-10\eta}\Lambda^{1/4})^s (\rho^{1-10\eta}t^{3/4}\Lambda^{1/4})^{t}
\eeqa
With the $\Lambda$ we chose,  $\rho^{1-10\eta}\Lambda^{1/4}$ is much less than one. Using  the assumption $\kappa_L\leq 1/2$, we have $\rho^{1-10\eta}t^{3/4}\Lambda^{1/4}\ll1$. Therefore, we arrive at the desired result:
\beq
\eqref{temp8.173}\leq O(1)\ll\rho^2\Lambda
\eeq
\end{proof}
\subsection {Proof of Lemma \ref{lem11.1}}
We now prove Lemma \ref{lem11.1}.
\begin{proof}
Since  $(\beta,\gamma)\in N(\widetilde\al, s, \{v_1,\cdots, v_t\})$, we can express them as in the r.h.s. of  \eqref{elementform2},
\beq\label{112}
\be =\prod_{i=1}^{(s+t)/2}\mathcal A^{u_{2i-1}, u_{2i}}_{q_{2i-1}, {q_{2i}}}\,\,\widetilde\al,  \; \gamma=\prod_{i=1}^{(s+t)/2}\mathcal A^{\tilde u_{2i-1}, \tilde u_{2i}}_{\tilde q_{2i-1}, {\tilde q_{2i}}}\,\,\widetilde\al
\eeq
Here $u,\widetilde u\,$'s belong to $ P_\wL$ and $q, \tilde q$'s belong to $P_H$. We note that for any $1\leq i\leq (s+t)/2$, we have
 \beq\label{temp11.38}
 \{q_{2i-1}, {q_{2i}}\}\neq \{k_1,k_2\}\rmand \{\widetilde q_{2i-1}, \widetilde q_{2i}\}\neq \{k_3,k_4\},
 \eeq
otherwise $(\be,\gamma)\in M_\al(k_1,k_2,k_3,k_4)$, which contradicts with the assumption that $(\beta,\gamma)\in N(\widetilde\al, s, \{v_1,\cdots, v_t\})$.
 \par {F}rom \eqref{defnastk1234}, one can see that the sets $\{q_1,\cdots,\,q_{2s+2t}\}$ is very close to  $\{\widetilde q_1,\cdots,\,\widetilde q_{2s+2t}\}$, i.e., 
 \beq
 \{q_1,\cdots,\,q_{2s+2t}\}-\{k_1,k_2\}=\{\widetilde q_1,\cdots,\,\widetilde q_{2s+2t}\}-\{k_3,k_4\}
 \eeq

Denote the common elements in sets $\{q_i\}$ and $\{\widetilde q_i\}$ by  $p_1$, $p_2,$ $\cdots$, $p_{s+t-2}$. Then 
we have  \beq
 \{q_i\}=\{k_1,\,\,\,k_2,\,\,\,p_1,\,\,\,p_2,\,\,\,\cdots, \,\,\,p_{s+t-2}\}
 \eeq
 \beq
 \{\widetilde q_i\}=\{k_3,\,\,\,k_4,\,\,\,p_1,\,\,\,p_2,\,\,\,\cdots, \,\,\,p_{s+t-2}\}
 \eeq
We now construct a graph with vertices  $ \{ k_1, k_2, k_3, k_4, p_i, 1\le i \le s+t -2 \} $.  The edges of the graphs are 
$\beta$ edges $(q_{2i-1},q_{2i}), 1\le i \le  (s+t)/2$ and $\gamma$ edges $(\widetilde q_{2j-1},\widetilde q_{2j}), 1\le i \le (s+t)/2$.  
{F}rom  \eqref{defnastk1234}, we know each $k_i$($1\leq i\leq 4$) touches one edge and each $p_i$($1\leq i\leq s+t-2$) touches two edges. Hence  the graph can be decomposed into two chains and loops. Thus there exist
 $l$, $m_i\in \Z$ and $0<m_1<m_2<......<m_l=s+t$ such that 
%
\beqa\label{temp4.178}
&&{\rm chains}\left\{
\begin{array}{c}
k_1\longleftrightarrow p_1\longleftrightarrow p_2\longleftrightarrow p_3\cdots p_{2m_1-1}\longleftrightarrow k_4(\rmor\,k_2)\\
k_3\longleftrightarrow p_{2m_1}\longleftrightarrow p_{2m_1+1}\cdots p_{2m_2-2}\longleftrightarrow k_2(\rmor k_4)
\end{array}\right.
\\\nonumber
&&{\rm loops}
\left\{
\begin{array}{c}
p_{2m_2-1}\longleftrightarrow p_{2m_2}{\longleftrightarrow}p_{2m_2+1}\cdots p_{2(m_3)-2}\longleftrightarrow p_{2m_2-1}
\\\cdots\cdots\\\cdots\cdots
\\p_{2m_{l-1}-1}\longleftrightarrow p_{2m_{l-1}}\longleftrightarrow p_{2m_{l-1}+1}\cdots p_{2(m_l)-2}\longleftrightarrow   p_{2m_{l-1}-1}
\end{array}\right.\eeqa
Here we have relabeled the indices of $p$ and do not distinguish $\beta$ edges and $\gamma$ edges. We also disregard the obvious 
symmetry $k_1 \to k_2$ and $k_3 \to k_4$.  
Due to the condition  \eqref{stst} and the facts $P_L(\be,\al)=P_L(\gamma,\al)$ is non-trivial(Def. \ref{defmal}), the length of the loop must be $4$ or more, i.e., each loop has at least 4 edges and 4 vertices, i.e, 
\beq\label{mi-1mi}
m_{i-1}+2\leq m_i\;\rmfor \;3\leq i\leq l
\eeq
 The inequality \eqref{temp11.38} implies $m_2\geq2$. Together with $m_l=(s+t)/2$ and \eqref{mi-1mi},  we obtain
\beq\label{uboundl}
l\leq (s+t)/4+1, \quad t \ge 1.
\eeq
Without loss of generality, we assume $m_i-m_{i-1}$ is creasing with $i\geq 3$, i.e.,  for $3\leq i<j\leq l$
\beq
m_i-m_{i-1}\leq m_j-m_{j-1}
\eeq
Denote by $N(\al, s, \{v_1,\cdots,v_t\}, l, \{m_1,\cdots,m_l\})$ the set of all pairs  $(\beta,\,\,\,\gamma)$ having the graph above
and we now estimate the number of elements of this set.

Using the notions $W_i=(w_{2i-1},w_{2i})$ and $\widetilde W_i=(\tilde w_{2i-1},\tilde w_{2i})$,  we can add the information between $k_i$'s and $p_i$'s into the graph as follows
\beqa
&&k_1\stackrel{W_1}{\longleftrightarrow} p_1\stackrel{\widetilde W_1}{\longleftrightarrow} p_2\stackrel{W_2}{\longleftrightarrow} p_3\cdots p_{2m_1-1}\stackrel {\widetilde W_{m_1}}{\longleftrightarrow} k_4(\rmor k_2)\\\nonumber
&&k_3\stackrel{\widetilde W_{m_1+1}}{\longleftrightarrow} p_{2m_1}\stackrel{W_{m_1+1}}{\longleftrightarrow} p_{2m_1+1}\cdots p_{2m_2-2}\stackrel{{ W_{m_2}}}{\longleftrightarrow} k_2(\rmor k_4)
\\\nonumber
&&p_{2m_2-1}\stackrel{W_{m_2+1}}{\longleftrightarrow} p_{2m_2}{\stackrel{\widetilde W_{m_2+1}}{\longleftrightarrow}}p_{2m_2+1}\cdots p_{2(m_3)-2}\stackrel{\widetilde W_{m_3}}{\longleftrightarrow} p_{2m_2-1}
\\\nonumber
&&\cdots\\\nonumber
&&\cdots
\\\nonumber
&&p_{2m_{l-1}-1}\stackrel{W_{m_{l-1}+1}}{\longleftrightarrow} p_{2m_{l-1}}\stackrel{\widetilde W_{m_{l-1}+1}}{\longleftrightarrow} p_{2m_{l-1}+1}\cdots p_{2(m_l)-2}\stackrel{\widetilde W_{m_{l}}}{\longleftrightarrow}   p_{2m_{l-1}-1}\, , 
\eeqa
where  
$w_i$'s are the union of $s$ zero's and $\{v_1,\cdots, v_t\}$, so are $\widetilde w$'s. More specifically, if $A\stackrel{W}{\longleftrightarrow} B$  appears in the graph and $W=(C,D)$, then the operator $\mathcal A^{C,D}_{A,B}$ appears in \eqref{112}.  
Since the momentum is conserved, we have
$$A\stackrel{W_i}{\longleftrightarrow} B\Leftrightarrow A+B=w_{2i-1}+w_{2i}, $$ so as $\widetilde W$'s.
With this relation, we can see that $\beta$ and $\gamma$ is uniquely determined by  the structure of the graph, $w_i$'s, $\widetilde w_i$'s and one $k_i$ or $p_i$ for  each loop or chain. 
\par 

To bound $| N(\widetilde\al, s, \{v_1,\cdots,v_t\}, l, \{m_1,\cdots,m_l\}) |$, we note that the sum of momentum ($p_i$'s) in each loop is zero.  
Thus we can count the number of graphs  as follows. 
\begin{enumerate}
	\item  choose the positions of zeros in $\beta$ edges. The total number of choices is less than $2^{t+s}$. 
	\item   choose the positions of $v_1\cdots v_t$ in $\beta$ edges. The total number of choices is  $t!$.
	\item   choose the positions of zeros in $\gamma$ edges. The total number of choices is less than $2^{t+s}$ again. 
	\item   choose the positions of $v_1\cdots v_t$ in $\gamma$ edges.  We call a loop trivial if all 
	the momenta associated with $\gamma$  edges are zero.   The number of trivial loops is at most $s/4$ since 
	there are at least two $\gamma$ edges(4 zero's) per loop. Hence the number of non-trivial loops is at least $l-s/4$. 
	Thus we only have to fix  $v$ in at most $t-(l-s/4)$ edges and  the number of choices is at most $t^{t-l+s/4}$. 
\end{enumerate}
Thus, with the bound on $\ell $ in   (\ref{uboundl}), we obtain 
\beqa\label{alstvlm}
 &&|N(\al, s, \{v_1,\cdots,v_t\}, l, \{m_1,\cdots,m_l\})|\\\nonumber
  \leq&& (\const)^{t+s}t! t^{(t-l+s/4)}\left(\rho^{-3\eta}\Lambda\right)^{l}\\\nonumber
 \leq &&(\const)^{t+s}t! t^{(3t/4)}\left(\rho^{-3\eta}\Lambda\right)^{t/4+s/4+1}
\eeqa

At last, with   $$| N(\al, s, \{v_1,\cdots,v_t\})|=\sum_{l}\sum_{\{m_1,\cdots,m_l\}} | N(\al, s, \{v_1,\cdots,v_t\}, l, \{m_1,\cdots,m_l\}) | 
$$ and 
\beq
\sum_{l}\sum_{\{m_1,\cdots,m_l\}} 1\leq \const^{s+t},
\eeq 
we complete the proof of \eqref{0.025}.
\end{proof}

\section{Proofs of Lemmas 1, 2, 3}
\subsection{Proof of Lemma \ref{relationDP}}\label{proofrelationDP}
The proof of Lemma \ref{relationDP} is standard and only a sketch will be given.  
We first  construct an isometry between functions with periodic 
boundary condition in $\Lambda=[0, L]^3$ and functions with Dirichlet boundary condition in $ \Lambda^*=[-\ell,\,\,\,L+\ell]^3$, where $L=\rho^{-41/60}$ and $\ell=\rho^{-41/120}$. We note, by the definition of $\rho^*$ in \eqref{relationtilde}, 
\beq\label{lrlr}
|\Lambda| \rho=|\Lambda^*| \rho^*
\eeq
\par 
Denote the coordinates of $\x$ by   $\x = (x^{(1)}, x^{(2)}, x^{(3)})$. 
Let  $h(\x)$ supported on $[-\ell,\,\,\,L+\ell]^3$ be the function $h(\x) = q(x^{(1)}) q(x^{(2)}) q(x^{(3)})$ where 
\beq
q(x)=\left\{ 
\begin{array}{ll}
\cos[(x-\ell)\pi/4\ell], & |x|\leq \ell\\
1, & \ell<x<L-\ell\\
\cos[(x-(L-\ell))\pi/4\ell], & |x-L|\leq \ell\\
0, & \text{otherwise}
\end{array}\right.
\eeq
The function  $q(x)$ is symmetric w.r.t $x=L/2$. Due to the property of cosine, for any  function $\phi$ with 
the period   $L$ we have
\beq\label{proh1}
\int_{\x\in[-\ell,\,L+\ell\,]^3}|h \phi(\x)|^2d\x=\int_{\x\in[0,L]^3}|\phi(\x)|^2d\x
\eeq
Thus the map $ \phi\longrightarrow h \phi$
is an isometry: $$
L^2_{
{\rm Periodic}}\left([0,L]^3\right)\to L^2_{
{\rm Dirichlet}}\left([-\ell,L+\ell]^3\right).$$ 
Let $\chi (\x)$ be the characteristic function of the $\ell$-boundary of $[-\ell,L+\ell]^3$, i.e., 
$\chi (\x)=1$ if   $|x^{(\alpha)} |\le \ell$ for some $\alpha = 1, 2$ or $3$ where $|x^{(\alpha)} |$ 
is  the distance on the torus $[-\ell,L+\ell]^3$. 
Then standard methods yield the following estimate on the kinetic energy of  $h \phi$
\beqa\label{3.51}
&&\int_{\x\in[-\ell,\,L+\ell\,]^3}|\nabla (h  \phi)(\x)|^2\\\nonumber
\leq &&\int_{\x\in[0,\,L]^3}|\nabla \phi(\x)|^2+\const\ell^{-2}\int  \chi (\x) | \phi(\x)|^2
\eeqa

The generalization of this isometry to higher dimensions is straightforward.   Suppose    
$\Psi(\x_1,\cdots,\x_N)$ is a function with period L. Here 
\beq
N=|\Lambda| \rho=|\Lambda^*| \rho^*
\eeq
Then  for any $u \in \R^{3}$, the map 
 \beq\label{defFu}
{\cal F}^u(\Psi): = \Psi (\x_1,\cdots,\x_N) \prod_{i=1}^N h(\x_i+u)
\eeq
is an  isometry from $L^2_{
{\rm Periodic}}\left([0,L]^{3N}\right)$ to $ L^2_{
{\rm Dirichlet}}\left([-\ell-u,L+\ell-u]^{3N}\right)$. Clearly,   ${\cal F}^u$ has the property \eqref{3.51}.

The potential $V$ can be extended to be periodic by defining   $V^P(x-y) = V( [x-y]_{P})$ 
where $[x-y]_{P}$ is the difference of $x$ and $y$ as elements on the torus $[0,L]$. 
%
Since $V$ is nonnegative and has fast decay in the position space,    we have  $V(x-y) \leq  V^P(x-y)$.
{F}rom the definition of    ${\cal F}^u$, we conclude that 
\beq\nonumber
\int_{[-\ell-u,L+\ell-u]^{3N}} \!\!\!\!\!\! \!\!\!\!\!\! \!\!\!\!\!\!|{\cal F}^u(\Psi)|^2V(\x_1-\x_2)\prod_{i=1}^N d\x_i \le 
\int_{[0,\,L]^{3N}} |\Psi|^2 V^P(\x_1-\x_2)\prod_{i=1}^Nd\x_i
\eeq

Therefore,  the total energies of ${\cal F}^u(\Psi)$ and $\Psi$ are related by 
\beq\label{calFuPsi}
\left\langle H_N\right\rangle_{{\cal F}^u(\Psi)}\leq \left\langle H_N\right\rangle_{\Psi}+\const\ell^{-2}\sum_{i=1}^N\left\langle \chi (\x_i+u)\right\rangle_{\Psi}
\eeq
We note ${\cal F}^u$ is operator on \textsl{pure} states. It can be generalized to operator ${\cal G}^u$ on states as follows. For any   state $\Gamma^P$ of $N$ particles  in $[0,L]^3$ with periodic boundary condition, we define 
 \beq
{\cal G}^u(\Gamma^P): = {\cal F}^u\Gamma^P({\cal F}^u)^\dagger
\eeq
So $\Gamma^D={\cal G}^u(\Gamma^P)$ is a state of $N$ particles  in $[-\ell-u,L+\ell-u\,]^3$ with Dirichlet boundary condition. With \eqref{lrlr}, one can see
\beq
\mathcal G^u:\,\, \Gamma^P\left( \rho,\,\, \Lambda,\,\,\beta\right)\rightarrow\Gamma^D\left(\rho^*,\,\,\Lambda^*,\,\,\beta\right)
\eeq
Using \eqref{calFuPsi}, we have:
\beq\label{calGuPsi}
\Tr \,H_N\,{\cal G}^u(\Gamma^P)\leq \Tr H_N\Gamma^P+\const\ell^{-2}\sum_{i=1}^N\Tr \chi (\x_i+u)\Gamma^P
\eeq
Averaging over $u\in[0,L]^3$, we have 
\beq
L^{-3}\int\left(\Tr H_N{\cal G}^u(\Gamma^P)\right)du\leq \Tr H_N\Gamma^P+\const\ell^{-1}L^{-1}N
\eeq
So for any $\Gamma^P$ there exists at least one $u$ such that 
\beq\label{DiffHNgugg}
\Tr H_N{\cal G}^u(\Gamma^P)\leq \Tr H_N\Gamma^P+\const N(\frac{1}{\ell L})
\eeq
On the other hand, the fact ${\cal F}^u$(\eqref{defFu}) is a isometry implies that ${\cal G}^u(\Gamma^P)$ and $\Gamma^P$ have the same von-Neumann entropy, i.e., 
\beq\label{DiffSgugg}
S({\cal G}^u(\Gamma^P))=S(\Gamma^P)
\eeq 
\par Combine \eqref{DiffHNgugg} and \eqref{DiffSgugg}, we obtain $\Delta f$ the free energy difference between ${\cal G}^u(\Gamma^P)$ and $\Gamma^P$ is less than $\const N(\ell L)^{-1}$.
With the choice $L=\rho^{-41/60}$ and $\ell=\rho^{-41/120}$, the error term  is negligible to the accuracy we need in proving  Lemma \ref{relationDP}. This concludes the proof of Lemma \ref{relationDP}.
\subsection{Proof of Lemma \ref{resultPsial}}\label{proofresultPsial}
It is not easy to define (construct) $\Gamma _0$ (the state of $N$ particles) directly. We start with constructing a state $\Gamma_{\cal F} $ in Fock space. Then pick up the useful component of $\Gamma_{\cal F}$ and revise it to $\Gamma_0$.
\par First, let $B_{\cal F}$ be the standard basis of the Fock space ${\cal F}(\Lambda)$ as follows
\beq
B_{\cal F}\equiv \left\{|\al\rangle: |\al\rangle=C_\al \prod_{k\in (\frac{2\pi \Z}{L})^3} (a^\dagger_k)^{\alpha(k)} | 0 \rangle,\,\; \al(k)\in \N\cup\{0\}\right\} , 
\eeq
where $C_\al$ is a positive normalization constant.
We define a  revised 'Bose' statistics, i.e., 
\begin{enumerate}
	\item The number of the particles in single particle state $|k\rangle $ is nonzero only when  $k\in P_I\cup P_ L$.
	\item The number of the particles in single particle state $|k\rangle $, $k\in P_L\cup P_I$, must be no more than $C_k$, which will be chosen later.
\end{enumerate}
\par With the definition of $\mu$ in \eqref{defmu}, we define $\Gamma_{\cal F}$ as the grand-canonical Gibbs state in this revised 'Bose' statistics with the chemical potential $\mu(\widetilde \rho, \beta)\leq 0$ and temperature $T=
\beta^{-1}$, where 
\beq
\widetilde \rho\equiv \rho\,(1-L^{-1/2})=\rho(1-o(\rho^{1/3}))
\eeq
and $C_k$ is chosen as follows (Recall $m_c=\rho^{-3\eta}$)
\beq\label{defCk}
C_k=\left\{\begin{array}{cc}\frac{(m_c)^{1/3}}{\beta E_{k,\mu}}& k\in P_I\\m_c&k\in P_L
\end{array}\right.,
\eeq
where $E_{k,\mu}$ is defined as $k^2-\mu(\widetilde \rho,\beta)$. We note that $\beta=O(\rho^{-2/3})$ implies,
$$\beta E_{k,\mu}C_k\geq O(\rho^{-\eta}).$$

With these notations, we can write $\Gamma_{\cal F}$ as
\beq
\Gamma_{\cal F}=C\sum_{\al\in B_{\cal F}} f_\al|\al\rangle\langle\al|
\eeq
where $C$ is a constant and $f_\al$ is non-zero only when $\al(k)$ is supported on $P_I\cup P_L$ and 
\beq
\al(k)\leq C_k, \,\,\,k\in P_I\cup P_L.
\eeq
If $f_\al$ is non-zero, 
\beq
f_{\al}\equiv \exp\left(-\sum_{k}\left(k^2-\mu(\widetilde \rho,\beta)\right)\beta \al(k)\right)
=\exp\left(-\sum_{k}E_{k,\mu}\beta \al(k)\right)
\eeq

\par We claim that the state $\Gamma_{\cal F}$ in Fock space  has the following properties:
\begin{lem}\label{profockstate}
The free energy per volume of $\Gamma_{\cal F}$ is bounded above by 
\beq\label{profockstatefe}
f(\Gamma_{\cal F})\leq f_0( \rho, \beta)(1-o(\rho^{1/3}))
\eeq
In most cases, the total particle number of $\Gamma_{\cal F}$ is less than $N=\rho\Lambda $, i.e.,
\beq\label{profockstatetn}
\sum_{m=1}^N\Tr_{\mathcal H_m}\Gamma_{\cal F}^m\geq 1-\rho
\eeq
Here $\Gamma_{\cal F}^m$ is the component of $\Gamma_{\cal F}$ on $\mathcal H_m$, i.e.,
\beq
\Gamma_{\cal F}=\sum_{m=0}^{\infty}\oplus\Gamma_{\cal F}^m,\;\;\,\,\,\,\,\,\, \Gamma_{\cal F}^m: {\cal H}_m\to{\cal H}_m
\eeq
	Similarly,  in most cases, the total particle number of $\Gamma_{\cal F}$ is very close to $\min\{\rho, \rho_c\}\Lambda$, i.e., we have
\beq\label{profockstatetn2}
\sum_{\left|m-\min\{\rho, \rho_c\}\Lambda\right|\leq N\rho^{1/3}}\Tr_{\mathcal H_m}\Gamma_{\cal F}^m\geq 1-\rho
\eeq
\end{lem}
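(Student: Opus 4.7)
The state $\Gamma_{\mathcal F}$ is a product measure across momentum modes: for each $k\in P_I\cup P_L$ the occupation $\al(k)$ has the truncated geometric law proportional to $e^{-\be E_{k,\mu}n}$ on $n=0,1,\ldots,C_k$, while $\al(k)=0$ deterministically for all other $k\in\Lambda^*$. The plan is to exploit this product structure throughout, proving each of the three claims by a single-mode calculation followed by an independence/concentration argument; write $E_{k,\mu}=k^2-\mu(\tilde\rho,\be)$, $Z_k=\sum_{n=0}^{C_k}e^{-\be E_{k,\mu}n}$, and $\hat N=\sum_k\cre_k\ann_k$.

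For \eqref{profockstatefe}, the product structure gives $f(\Gamma_{\mathcal F})=\mu\tilde\rho-(|\Lambda|\be)^{-1}\sum_{k\in P_I\cup P_L}\ln Z_k$. I would compare this to the finite-volume ideal-gas expression $f_0(\tilde\rho,\be)=\mu\tilde\rho+(|\Lambda|\be)^{-1}\sum_k\ln(1-e^{-\be E_{k,\mu}})$ and control the difference in three pieces. First, the truncation error per mode: the Gibbs weight removed above $C_k$ is at most $e^{-\be E_{k,\mu}(C_k+1)}\leq e^{-\rho^{-\eta}}$ by the definition of $C_k$, which is super-polynomially small. Second, the modes outside $P_I\cup P_L$ satisfy either $|k|\geq\eta_L^{-1}\rho^{1/3}$ or $|k|\geq\eps_H$, so $\be k^2\to\infty$ as $\rho\to 0$ and each contributes $|\ln(1-e^{-\be E_{k,\mu}})|$ that is exponentially small in a negative power of $\rho$; the total omitted sum is negligible to every polynomial order. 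Third, passing from $\tilde\rho$ to $\rho$ costs only $O(\tilde\rho-\rho)=O(\rho L^{-1/2})=o(\rho^{4/3})$ in the free energy by continuity of $\mu$ and $f_0$ in the density, which is absorbed into the claimed $o(\rho^{1/3})$ factor.

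For \eqref{profockstatetn} and \eqref{profockstatetn2}, the total particle number $\hat N$ under $\Gamma_{\mathcal F}$ is a sum of independent bounded random variables. Its mean equals $\sum_{k\in P_I\cup P_L}\langle\al(k)\rangle$, and each single-mode expectation differs from the untruncated geometric mean $(e^{\be E_{k,\mu}}-1)^{-1}$ by an exponentially small amount. Summing and Riemann-approximating by the integral \eqref{defmu} yields $\langle\hat N\rangle_{\Gamma_{\mathcal F}}=\tilde\rho|\Lambda|+o(|\Lambda|\rho^{4/3})$ in the subcritical regime $\tilde\rho<\rhoc$, and $\langle\hat N\rangle_{\Gamma_{\mathcal F}}=\rhoc|\Lambda|+o(|\Lambda|\rho^{4/3})$ in the supercritical regime $\tilde\rho\geq\rhoc$, where $\mu=0$ and the would-be condensate at $p=0$ is excluded from $P_I\cup P_L$. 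The gap $N-\tilde\rho|\Lambda|=N L^{-1/2}$ is many standard deviations above the mean, so Chebyshev applied to the single-mode variance bound $\mathrm{Var}(\al(k))\leq\min\{C_k^2,(\be E_{k,\mu})^{-2}\}$ summed over $P_I\cup P_L$ yields \eqref{profockstatetn} with probability at least $1-\rho$; the same variance bound combined with the mean calculation produces the window of width $N\rho^{1/3}$ in \eqref{profockstatetn2}.

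The main difficulty is the supercritical regime: without the cutoff $C_k$ on $P_I$-modes the grand-canonical variance of $\hat N$ would be anomalously large, because the near-zero modes have occupation $(\be k^2)^{-1}$ and variance $(\be k^2)^{-2}$ which dominate the sum. The specific scaling $C_k=m_c^{1/3}/(\be E_{k,\mu})$ was designed so that simultaneously (a) the weight lost above $C_k$ is of order $e^{-m_c^{1/3}}=e^{-\rho^{-\eta}}$, negligible in every estimate above, and (b) the truncated single-mode variance stays small enough that $\mathrm{Var}(\hat N)\ll (NL^{-1/2})^2$. The delicate bookkeeping to verify this dual role of $C_k$ with the explicit choices $\eta=1/200$ and $m_c=\rho^{-3/200}$, and to check that the $P_I$-modes also do not contribute to the mean beyond the $o(|\Lambda|\rho^{4/3})$ budget, is the technical heart of the proof.
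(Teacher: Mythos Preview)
Your overall plan matches the paper's proof closely: exploit the product structure, compute the single-mode partition function to get the free energy, compute the mean particle number, and then apply a concentration inequality. The free-energy argument and the mean computation are essentially what the paper does.

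The one genuine gap is the concentration step. You invoke Chebyshev's inequality to conclude $\sum_{m\le N}\Tr\Gamma_{\mathcal F}^m\ge 1-\rho$, but Chebyshev is not sharp enough here. Using your own variance bound $\mathrm{Var}(\al(k))\le C_k^2$ and the paper's parameters ($L=\rho^{-41/60}$, $m_c=\rho^{-3\eta}$, $\eta=1/200$), one computes
\[
\sum_{k\in P_I\cup P_L}C_k^2 \;=\; O\!\left(m_c^{2/3}\rho^{4/3}L^4\right),
\qquad
\bigl(N-\langle\hat N\rangle\bigr)^2 \;\ge\; c\,\rho^2 L^5,
\]
so the Chebyshev ratio is $O(m_c^{2/3}\rho^{-2/3}L^{-1})=O(\rho^{1/60-2\eta})=O(\rho^{1/150})$. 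This tends to zero, but it is far larger than $\rho$, so you do not obtain the stated bound $\ge 1-\rho$; the same problem occurs for \eqref{profockstatetn2} with the window $N\rho^{1/3}$.

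The paper fixes this by using Hoeffding's inequality instead of Chebyshev: since the occupations are independent and bounded by $C_k$, Hoeffding gives
\[
P\!\left(\Big|\sum_k n(k)-\sum_k\overline{n(k)}\Big|>t\right)
\;\le\; 2\exp\!\left(-\,\frac{2t^2}{\sum_k C_k^2}\right),
\]
which with the same numerics yields a bound of order $\exp(-c\,\rho^{-1/150})$, super-polynomially small and in particular $\le\rho$. Everything else in your outline (the role of the cutoff $C_k$ on the $P_I$ modes, the mean calculation in the sub- and super-critical regimes, the comparison of $\tilde\rho$ with $\rho$) is on target and matches the paper.
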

\begin{proof}{proof of Lem.\ref{profockstate}}
\par First, we prove \eqref{profockstatefe}, by the definition,  the free energy of $\Gamma_{\cal F}$ is  
\beqa\label{fegcalf}
&&\frac{-1}{\beta}\left[\sum_{k\in P_L\cup P_I}\!\!\!\!\!\log\left(\frac{e^{\beta E_{k,\mu}}-e^{-\beta E_{k,\mu}C_k}}{e^{\beta E_{k,\mu}}-1}\right)\right]
\\\nonumber
+&&\sum_{k\in P_L\cup P_I}\!\!\!\!\!\mu(\widetilde \rho, \beta)\left(\frac{1}{e^{\beta E_{k,\mu}}-1}-\sum_{k\in P_L\cup P_I}\frac{1+C_k}{e^{ \beta E_{k,\mu}(C_k+1)}-1}\right),
\eeqa
 With the definition of $P_I$ and $P_L$, adding the $k\notin P_I \cup P_L$ terms and bounding the $C_k$ terms, one can easily  check that \eqref{fegcalf} is equal to 
\beq\label{fegcalf2}
\left(\frac{-1}{\beta}\!\!\!\!\!\sum_{k\in (\frac{2\pi \Z}{L})^3, k\neq 0}\!\!\!\!\!\log\left(\frac{e^{\beta E_{k,\mu}}}{e^{\beta E_{k,\mu}}-1}\right)+\!\!\!\!\!\sum_{k\in (\frac{2\pi \Z}{L})^3, k\neq 0}\mu(\widetilde \rho, \beta)\frac{1}{e^{\beta E_{k,\mu}}-1}\right)(1+o(\rho^{1/3})),
\eeq

Then with the choice $L= \rho^{-41/60}$ and the definition of free energy $f_0$ in \eqref{deff01} and \eqref{deff02}, we have 
\beq
\eqref{fegcalf2}=f_0(\widetilde \rho, \beta)\Lambda(1+o(\rho^{1/3}))
\eeq Combining this  with   $\widetilde \rho=\rho(1+o(\rho^{1/3}))$, we obtain the desired result \eqref{profockstatefe}.
\par 
Then we prove \eqref{profockstatetn}. Let $n(k)$ denote  the number of the particles in one-particle-state $|k\rangle$.  Then $\overline{n(k)}$ the average of $n(k)$ is equal to $ \Tr \cre_k\ann_k\Gamma_{\cal F}$. By the definition, the average total number of particles of $\Gamma_{\cal F}$ is equal to 
\beq\label{tncalf}
\sum_{ k\in P_I\cup P_L }\overline{n(k)}=\sum_{k\in P_I\cup P_L}\frac{1}{e^{\beta E_{k,\mu}}-1}-\sum_{k\in P_L\cup P_I}\frac{1+C_k}{e^{ \beta E_{k,\mu}(C_k+1)}-1}
\eeq
Similarly, with $L= \rho^{-41/60}$ and $\beta E_{k,\mu}C_k\gg|\log\rho|$, one can easily prove: 
\beqa\label{temp11.20}
\eqref{tncalf}&=&\min\{\widetilde\rho, \rho_c(\beta)\}\Lambda(1+O(\rho^{-1/3}L^{-1}\log \rho))
\\\nonumber &=&\min\{\widetilde\rho, \rho_c\}\Lambda+o(N\rho^{41/120})
\eeqa
On the other hand, we are going to use Hoeffding's inequality  to estimate $\sum_{k}n(k)$. Hoeffding's inequality said, for independent $X_i$'s, if they are bounded as
\beq
a_i\leq X_i-\mathbb E(X_i)\leq b_i
\eeq
where $\mathbb E(X_i)$ is  the expected value of $X_i$, then 
\beq
P\left(\left|\sum_i  X_i-\mathbb E(\sum_i X_i)\right|>t\right)
\leq 2\exp\left(-\frac{2t^2}{\sum_i (b_i-a_i)^2}\right)
\eeq
Since $n(k)$'s are independent random variables for different $k$'s and they are bounded in \eqref{defCk}, we can use Hoeffding's inequality \cite{H} to estimate the distribution of the total particle number of $\Gamma_{\cal F}$. With $n(k)\leq C_k$ and Hoeffding's inequality \cite{H}, we obtain that  the probability of finding more than $N$ particles in $\Gamma_{\cal F}$ is bounded above by
\beq\label{Hoeff}
P\left(\sum_k n(k)>N\right)\leq 2 \exp\left\{-\frac{2\left[N-\sum_{ k }\overline{n(k)}\right]^2}{\sum_{k\in P_I\cup P_L}C_k^2}\right\}
\eeq
By the definition of $C_k$ \eqref{defCk}, the denominator of the r.h.s of \eqref{Hoeff} is bounded as :
\beq\label{temp11.21}
\sum_{k\in P_I\cup P_L}C_k^2= O(\rho^{4/3}\Lambda Lm_c^{2/3})
\eeq
On the other hand, with the fact $\rho-\widetilde\rho=\rho L^{-1/2}$ and \eqref{temp11.20},  the numerator  of the r.h.s of \eqref{Hoeff} is bounded below by
\beq\label{temp11.22}
\left[N-\sum_{ k }\overline{n(k)}\right]^2\geq O(\rho^2 L^5)
\eeq
Inserting  $L= \rho^{-41/60}$, \eqref{temp11.21} and \eqref{temp11.22} into \eqref{Hoeff}, we obtain the desired result \eqref{profockstatetn}. And \eqref{profockstatetn2} can proved similarly with \eqref{temp11.20} and \eqref{temp11.21}.
\end{proof}
\par By Lemma \ref{profockstate}, there exists $m_0\leq N$ such that 
\beq\label{relamN}
m_0\leq N, \,\,\, |m_0-\min\{\rho, \rho_c\}\Lambda|\leq \rho^{1/3}N
\eeq
and the free energy of $\Gamma_{\cal F}^{m_0}$ is less than $f_0(\rho,\beta)\Lambda(1-o(\rho^{1/3}))$. 
\par Then adding $N-m_0$ ($N=\rho\Lambda$) particles with momentum zero into the system described by $\Gamma_{\cal F}^{m_0}$, we obtain a new state $\Gamma_0$ of $N$  particles. The state $\Gamma_0$  always has $N-m_0$ particles with momentum zero.  The free energy of $\Gamma_{0}$ is also less than $f_0(\rho,\beta)\Lambda(1-o(\rho^{1/3}))$, i.e.,
\beq\label{nonintfepr}
\left|\Tr (-\Delta  \Gamma_0)+\frac{1}{\beta}S( \Gamma_0)-f_0(\rho,\beta)\right|\Lambda^{-1}\leq o(\rho^2) 
\eeq
Furthermore, by the definition of $\Gamma_{\cal F}$, $\Gamma_0$ has the form:
\beq
\Gamma_0=\sum_{\al\in M} g_\al(\rho,\beta)|\al\rangle\langle\al|,\;\; \al(0)=N-m_0\rmand \sum_{\al\in M}g_\al=1
\eeq
We note: if $\al(k)>C_k$ for some $k\in P_I\cup P_L$, then $g_\al(\rho,\beta)=0$. This property implies that  the total number of the particles with momentum in $P_I$ is $o(N)$. So we have
\beq
\sum_{\al\in M}\sum_{k\in P_I}g_\al(\rho,\beta)\al(k)\ll N.
\eeq
Together with the facts $\al(0)=N-m_0$, \eqref{relamN} and $\al(k)\leq m_c$ for $\al\in P_L$, we obtain \eqref{resultPsialequ2}. 
\par At last we prove \eqref{prowgamma1}. First with the structure of $\Gamma_0$, we have
\beqa
\Tr_{\mathcal H_{N,\Lambda}} \frac12 V\,\Gamma_0
=&&\sum_{\al\in M}g_\al(\rho,\beta)\langle\al|\frac12V|\al\rangle\\\nonumber
=&&\sum_{\al\in M}g_\al(\rho,\beta) \bigg(\sum_{k\in P_0\cup P_I\cup P_L}\frac12V_0\Lambda^{-1}(\al(k)^2-\al(k))\\\nonumber
&&+\sum_{k,k'\in P_0\cup P_I\cup P_L}^{ k\neq k'}(V_0+V_{k-k'})\Lambda^{-1}\al(k)\al(k')\bigg)
\eeqa
Using the smoothness of  $V$ and $|k|, |k'|\ll1$, we can replace $V_{k-k'}$ with $V_0$ without changing the leading term. Then with the cutoff $C_k$'s, the fact  $\al(0)=N-m$ and \eqref{relamN}, we have
\beq
\lim_{\rho\to 0}|\Tr \frac12V\Gamma_0|\rho^{-2}\Lambda^{-1}=\frac12V_0(2-[1-R[\beta]\,]^2_+)
\eeq
Combine  with \eqref{nonintfepr},  we obtain \eqref{prowgamma1}.

\subsection{Proof of Lemma \ref{entropy}}\label{proofentropy}
\begin{proof}
Since the states  $|\al\rangle$'s  $\in M$ are orthonormal, we can rewrite the entropy of  $\Gamma_0$ in lemma \ref{resultPsial} as 
\beq
S(\Gamma_0)=-\sum_{\al\in M}g_\al\log g_\al
\eeq
For $S(\Gamma)$, we define $A_\infty$ as $$A_\infty\equiv\left\|\sum_{\al\in M}|\Psi_\al\rangle\langle\Psi_\al|\right\|_{\infty}$$ and rewrite $\Gamma$ as 
\beq
\Gamma=A_\infty\sum_{\al\in M}g_\al\frac{|\Psi_\al\rangle}{\sqrt{A_\infty}}\,\frac{\langle \Psi_\al|}{\sqrt{A_\infty}}
\eeq
With the fact $\Tr \Gamma=1$, i.e., $\sum g_\al=1$, we have
\beq
S(\Gamma)=-\log A_\infty-A_\infty \Tr\left[\sum_{\al\in M}g_\al\frac{|\Psi_\al\rangle}{\sqrt{A_\infty}}\,\frac{\langle \Psi_\al|}{\sqrt{A_\infty}}\log( \sum_{\al\in M}g_\al\frac{|\Psi_\al\rangle}{\sqrt{A_\infty}}\,\frac{\langle \Psi_\al|}{\sqrt{A_\infty}})\right]
\eeq
With the concavity of the logarithm, one can easily obtain 
\beq\label{temp11.34}
S(\Gamma)\geq -\log A_\infty-\sum_{\al\in M}g_\al\log g_\al=-\log A_\infty+S(\Gamma_0)
\eeq
\par  We claim the following lemma
\begin{lem}\label{lastlemma}
\beq\label{resultlastlemma}
\lim_{\rho\to0}\left(\log\left\|\sum_{\al\in M}|\Psi_\al\rangle\langle\Psi_\al|\right\|_{\infty}\right)\frac{1}{N\rho^{1/3}}=0
\eeq
\end{lem}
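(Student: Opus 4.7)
The plan is to dominate $\sum_{\al\in M}|\Psi_\al\rangle\langle\Psi_\al|$ by a diagonal operator whose norm is a purely combinatorial count, and then verify that the logarithm of that count is $o(N\rho^{1/3})$.

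For each $\al\in M$, the state $|\Psi_\al\rangle=\sum_{\be\in M_\al}f_\al(\be)|\be\rangle$ lives in the span of the orthonormal family $\{|\be\rangle:\be\in M_\al\}$. Hence the rank-one projection $|\Psi_\al\rangle\langle\Psi_\al|$ is dominated, as a positive operator, by the orthogonal projection onto that span. Summing over $\al$,
\[
\sum_{\al\in M}|\Psi_\al\rangle\langle\Psi_\al|\;\le\;\sum_{\al\in M}\sum_{\be\in M_\al}|\be\rangle\langle\be|\;=\;\sum_\be n(\be)\,|\be\rangle\langle\be|,\qquad n(\be):=\#\{\al\in M:\be\in M_\al\}.
\]
The right-hand side is diagonal in the orthonormal basis $\{|\be\rangle\}$, so its operator norm is $\max_\be n(\be)$; it therefore suffices to bound $\log\max_\be n(\be)$.

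To count $n(\be)$ I use $M_\al\subset\widetilde M_\al$ and work with the weaker condition $\be\in\widetilde M_\al$. By Definition \ref{defwMal}, any such $\al$ must agree with $\be$ on $P_I$, must satisfy $\al(u)-\be(u)\in\{0,1\}$ on $P_L$, and must have $\al(k)=0$ on $P_H$. Setting $T:=\sum_{k\in P_H}\be(k)$ and $S:=\{u\in P_L:\al(u)>\be(u)\}$, particle-number conservation gives $\al(0)=\be(0)+T-|S|$; moreover, each $\mathcal A^{u,v}_{p,q}$ appearing in Lemma \ref{BpMal} removes exactly two particles from $P_0\cup P_L$, which forces $|S|\le T$. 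Thus $\al$ is uniquely determined by the subset $S$, and
\[
n(\be)\;\le\;\sum_{s=0}^{T}\binom{|P_L|}{s}\;\le\;(T+1)\binom{|P_L|}{T}.
\]

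Finally, the $B_H$-box condition in Definition \ref{defwMal} bounds $T$ by the number of $B_H$-boxes, namely $O(\rho^{-3\kappa_H-3\eta})$; taking $\kappa_H=2/9$, which is permitted by both Theorem \ref{mainlemma} and Lemma \ref{entropy}, gives $T\le C\rho^{-2/3-3\eta}$. Together with $|P_L|\le C\rho^{-21/20-3\eta}$ this yields
\[
\log n(\be)\;\le\; T\log(e|P_L|/T)+\log(T+1)\;=\;O\!\left(\rho^{-2/3-3\eta}\,|\log\rho|\right).
\]
Since $N\rho^{1/3}=\rho^{-43/60}$ and $-(2/3+3\eta)+43/60=1/20-3\eta>0$ for $\eta=1/200$, the ratio is $O(\rho^{1/20-3\eta}|\log\rho|)\to 0$, which is \eqref{resultlastlemma}. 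The only delicate point is the counting of compatible $\al$'s in the second step: the estimate will fail if one has overlooked some source of multiplicity, so the main task is to track through Definition \ref{defwMal} and Lemma \ref{BpMal} to confirm that $\al$ is indeed specified by the choice of $S\subset P_L$ alone, and that passing from $M_\al$ to $\widetilde M_\al$ only enlarges the count.
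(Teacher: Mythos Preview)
Your proof is correct and takes a genuinely different route from the paper's.

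The paper bounds the operator norm via the Schur-type estimate $\|M\|_\infty\le\max_i\sum_j|M_{ij}|$, which forces a factorisation into \emph{two} counting problems: for fixed $\be$, the number of $\al$ with $\langle\be|\Psi_\al\rangle\neq 0$; and for fixed $\al$, the number of $\gamma$ with $\langle\Psi_\al|\gamma\rangle\neq 0$. Each count is then handled by a crude per-box argument (``choose one modification per $B_L$-box''), and the resulting bounds depend on both $\kappa_L$ and $\kappa_H$.

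Your reduction is sharper: the operator inequality $|\Psi_\al\rangle\langle\Psi_\al|\le P_{M_\al}$ exploits positivity directly and collapses the problem to a \emph{single} diagonal count $\max_\be n(\be)$. Your counting is also tighter than the paper's per-box argument, because you use particle-number conservation to force $|S|\le T$; in fact the constraint $|S|\le T$ follows already from condition~1 of Definition~\ref{defwMal} ($\al(0)\ge\be(0)$) together with $\sum_k\al(k)=\sum_k\be(k)=N$, so the appeal to Lemma~\ref{BpMal} is unnecessary. A pleasant by-product is that your bound depends only on $\kappa_H$, whereas the paper needs the hypothesis on $\kappa_L$ as well. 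Both arguments ultimately hinge on the same structural fact---that $\al$ is determined by $\be$ once the set $S\subset P_L$ of ``incremented'' sites is chosen---but your organisation isolates this more cleanly.
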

Insert this lemma into \eqref{temp11.34}, we arrive at the desired result \eqref{resultentropy}.
\end{proof}
\subsubsection{Proof of Lemma \ref{lastlemma}}
\begin{proof}
With the fact: for any hermitian matrix $M=M_{ij}$,
$$\|M\|_\infty\leq \max_{i} \left\{\sum_{j}|M_{ij}|\right\},$$
we can bound $\|\sum_{\al\in M}|\Psi_\al\rangle\langle\Psi_\al|\|_\infty$ as follows (Recall $\widetilde M$ in Def. 2.)
\beqa\label{proofentropy1}
\|\sum_{\al\in M}|\Psi_\al\rangle\langle\Psi_\al|\|_\infty\leq &&\max_{\be\in\widetilde M}
\left\{\sum_{\al\in M}\sum_{\gamma\in\widetilde M}\left|\langle\be|\Psi_\al\rangle\langle\Psi_\al|\gamma\rangle\right|\right\}\\\nonumber
\leq &&\max_{\be\in\widetilde M}\left\{\sum_{\al\in M}|\langle\be|\Psi_\al\rangle|\right\}\cdot\max_{\al\in M }\left\{\sum_{\gamma\in\widetilde M}|\langle\gamma|\Psi_\al\rangle|\right\},
\eeqa
With the fact $\Psi_\al$ is the linear combination of states in $M_\al\subset\widetilde M_\al$ and $|\be\rangle$, $|\Psi_\al\rangle$  are normalized, we claim 
\beqa\label{maxbewtM}
&&\log\left( \max_{\be\in\widetilde M}\left\{\sum_{\al\in M}|\langle\be|\Psi_\al\rangle|\right\}\right)\leq \rho^{1-4\eta-3\kappa_L} \\\label{maxalM}
&&\log \left(\max_{\al\in M }\left\{\sum_{\gamma\in\widetilde M}|\langle\gamma|\Psi_\al\rangle|\right\}\right)\leq \rho^{1-4\eta-3\kappa_L}+\rho^{-4\eta-3\kappa_H} 
\eeqa
First, we prove \eqref{maxbewtM}. For any $\al\in M$ and $\be\in \widetilde M_\al$,  $|\langle\be|\Psi_\al\rangle|\neq 0$ implies $|\langle\be|\Psi_\al\rangle|\leq1$.   Then with the definition of $M$ and $\widetilde M_\al$, if $\al\in M$, $\be\in \widetilde M_\al$, we have 
\beqa\label{temp12.43}
&&\be(u)=\al(u)\rmfor u\in P_I\\\nonumber
&&\be(u)\leq \al(u) \rmfor u\in P_L\\\nonumber
&&\al(u)=0\rmfor u\in P_H
\eeqa
and for any fixed small box $B_L^{\,i}(i=1,2,\dots)$ in $P_L$, $\be(u)$ is very close to $\al(u)$, i.e.,
\beq\label{temp12.44}
\sum_{u\in B_L^{\,i}} \left|\be(u)-\al(u)\right|\leq 1
\eeq
Now let's count, for fixed $\be$, how many $\al\in M$ satisfy $\be\in \widetilde M_\al$. This number must be less than the $\al$'s satisfying \eqref{temp12.43} and \eqref{temp12.44}. By the definition of $B_L$'s, the total number of $B_L$'s is less than  $\const \rho^{1-3\eta-3\kappa_L}$. And for any $B_L^i$, $|B_L^i|$ the number of the elements in $B_L^i$ is less than $\const \rho^{3\kappa_L}\Lambda$. Therefore, for fix $\be \in \widetilde M$, the total number of $\al\in M$ satisfying $\be\in \widetilde M_\al$ is less than 
\beq
\left(\const \rho^{3\kappa_L}\Lambda\right)^{\const \rho^{1-3\eta-3\kappa_L}}
\eeq
Together with the fact $|\langle\be|\Psi_\al\rangle|\leq 1$, we proved \eqref{maxbewtM}.
\par Then we prove  \eqref{maxalM}. Similarly, using the rule 2 of Def. 3, we can count, for  fix $\al \in  M$, the total number of $\gamma\in \widetilde M$, s.t. $|\langle\gamma|\Psi_\al\rangle|\neq 0$ is less than 
\beq
\left(\const \rho^{3\kappa_L}\Lambda\right)^{\const \rho^{1-3\eta-3\kappa_L}}\left(\const \rho^{3\kappa_H}\Lambda\right)^{\const \rho^{-3\eta-3\kappa_H}},
\eeq
which implies \eqref{maxalM}. Inserting \eqref{maxbewtM} and \eqref{maxalM} into \eqref{proofentropy1}, we obtain the desired result \eqref{resultlastlemma}.
\end{proof}

\bigskip 

\section{ Appendix}
\begin{lem}\label{ffg}
For any bound, non-negative, piecewise continous function, spherically symmetric $f$ supported in unit ball, there exist $C^\infty$, non-negative  spherically symmetric function $f_{1}$, $f_2, \ldots$ supported in the ball of radius 2 such that for any $n\geq 1$, 
 \beq
 f_n-f\geq 0\,\,\,and\,\,\, \lim_{n\to \infty}\|f_n-f\|_1= 0
 \eeq
 \end{lem}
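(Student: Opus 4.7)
The plan is to combine a sup-envelope construction with standard mollification, so that the approximant both dominates $f$ and is smooth. First I would reduce to a one-dimensional problem by writing $f(x)=g(|x|)$ with $g:[0,\infty)\to[0,\infty)$ bounded, piecewise continuous, and supported in $[0,1]$, and then extend $g$ to an even function $\tilde g$ on $\R$ by $\tilde g(-r):=g(r)$. The naive convolution $\tilde g * \varphi_{\epsilon}$ with a smooth mollifier generally fails to dominate $\tilde g$ pointwise, so the trick is to replace $\tilde g$ first by a slightly fattened envelope.

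Concretely, for each $n\ge 1$ set
$$
\tilde g_n(r):=\sup_{|s-r|\le 2/n}\tilde g(s),
$$
which is even, bounded by $\|g\|_\infty$, supported in $[-1-2/n,\,1+2/n]$, and satisfies $\tilde g_n\ge \tilde g$ everywhere. Since $\tilde g$ is piecewise continuous and bounded, $\tilde g_n(r)\to\tilde g(r)$ at every continuity point of $\tilde g$, so dominated convergence yields $\tilde g_n\to\tilde g$ in $L^1(\R)$. Next pick a standard non-negative even mollifier $\varphi\in C_0^\infty(\R)$ with support in $[-1,1]$ and $\int\varphi=1$, set $\varphi_{\epsilon_n}(t):=\epsilon_n^{-1}\varphi(t/\epsilon_n)$ with $\epsilon_n:=1/n$, define $h_n:=\tilde g_n*\varphi_{\epsilon_n}$, and finally put $f_n(x):=h_n(|x|)$.

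I would then check the four required properties in turn. Domination: for any $r\ge 0$ and any $|t|\le\epsilon_n\le 2/n$ one has $\tilde g_n(r-t)\ge\tilde g(r)$ by the definition of the sup, so
$$
h_n(r)=\int \tilde g_n(r-t)\,\varphi_{\epsilon_n}(t)\,dt\;\ge\;\tilde g(r)\int\varphi_{\epsilon_n}=g(r),
$$
hence $f_n\ge f$ on $\R^3$. Support: $h_n$ is supported in $[-1-3/n,\,1+3/n]\subset[-2,2]$ for $n\ge 3$, so $f_n$ vanishes outside the ball of radius $2$. $L^1$-convergence: by a triangle inequality and Young's inequality,
$$
\|h_n-\tilde g\|_1\;\le\;\|(\tilde g_n-\tilde g)*\varphi_{\epsilon_n}\|_1+\|\tilde g*\varphi_{\epsilon_n}-\tilde g\|_1\;\le\;\|\tilde g_n-\tilde g\|_1+\|\tilde g*\varphi_{\epsilon_n}-\tilde g\|_1,
$$
and both terms tend to $0$ (the first by the $L^1$ convergence established above, the second by the standard $L^1$ mollifier property applied to $\tilde g\in L^1(\R)$). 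Passing to spherical coordinates gives $\|f_n-f\|_{L^1(\R^3)}\to 0$.

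The step that requires genuine care, rather than being routine, is the verification that $f_n$ is actually $C^\infty$ on all of $\R^3$, including the origin. This is exactly why the even extension of $\tilde g$ was introduced: since $\tilde g_n$ and $\varphi_{\epsilon_n}$ are both even, so is $h_n$, and any even $C^\infty$ function on $\R$ factors as $h_n(r)=H_n(r^2)$ for a suitable $H_n\in C^\infty(\R)$; consequently $f_n(x)=H_n(|x|^2)$ is manifestly smooth on $\R^3$. Once this point is handled, everything else is standard measure theory and mollifier estimates.
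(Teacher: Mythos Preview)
Your proof is correct and takes a genuinely different route from the paper's. The paper works directly in $\R^3$ in two stages: first it approximates $f$ from above by \emph{continuous} spherically symmetric functions $\tilde f_n$ supported in the ball of radius $3/2$; then it mollifies each $\tilde f_n$ by a three-dimensional spherically symmetric bump $g_m$ and restores domination by adding the correction $\frac{1}{g(0)}\,\|\tilde f_n * g_m - \tilde f_n\|_\infty\, g$, where $g$ is a fixed $C^\infty$ spherically symmetric function chosen with $g\equiv g(0)$ on the support of $\tilde f_n$. Your sup-envelope device achieves domination more cleanly, in a single mollification pass; the price is that the one-dimensional reduction forces you to invoke Whitney's theorem on even smooth functions to recover $C^\infty$-regularity of $h_n(|x|)$ at the origin. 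The paper's three-dimensional mollification sidesteps this issue entirely, since the convolution of a spherically symmetric $L^1$ function with a spherically symmetric $C^\infty_c$ mollifier is automatically $C^\infty$ and spherically symmetric on all of $\R^3$. Had you carried out your sup-envelope and mollification directly in $\R^3$ (defining $f_n^{\rm env}(x)=\sup_{|y-x|\le 2/n}f(y)$ and convolving with a radial mollifier), you would have combined the advantages of both arguments.
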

 
 \bigskip
 
 {\it Proof:}
 First, we note, for any bound, non-negative, piecewise continous function $f$ supported in unit ball, there exist non-negative,  countinous functions $\tilde f_1$, $\tilde f_2,\ldots$ supported in the ball of radius 1.5, such that
 \beq
 \tilde f_n\geq f\,\,\,{\rm and} \,\,\,\,\lim_{n\to \infty}\|\tilde f_n-f\|_1\to 0
 \eeq
 Then we  claim that for any $\tilde f_n$, there exist $C^\infty$, non-negative  spherically symmetric function $\tilde f_{nm}$ ($m=1,2,\ldots$) supported in the ball of radius 2, such that 
 \beq\label{13.3}
 \tilde f_{nm}\geq \tilde f_{n}\,\,\,{\rm and} \,\,\,\,\lim_{n\to \infty}\|\tilde f_n-\tilde f_{nm}\|_1\to 0
 \eeq
 To prove lemma \ref{ffg}, we can  choose $f_n$ as $\tilde f_{nm_n}$, where $m_n$ is defined as 
 \beq
 m_n=\min\left\{m:\,\,\,\|\tilde f_n-\tilde f_{nm}\|_1\leq \|\tilde f_n-f\|_1\right\}
 \eeq  
 It only remains to prove \eqref{13.3}. Let $g$ be a bound $C^\infty$ spherically symmetric function support in the ball of radius 2 such that  
 \beq
g\geq 0,\,\,\, \|g\|_1=1 \,\,\,and \,\,\,g(x)=g(0)>0\,\,\, for \,\,\, |x|\leq 1.5
 \eeq
 And we define $g_m$ as 
 \beq
 g_m(x)=m^3g(mx)
 \eeq
 Then $\|g_m\|_1=1$. Furthermore, for fixed $n$ 
 \beq
 \lim_{m\to \infty}\|\tilde f_n*g_m-\tilde f_n\|_\infty=0\;{\;\;\;\rm and}\;
 \lim_{m\to \infty}\|\tilde f_n*g_m-\tilde f_n\|_1=0
 \eeq
 and $\tilde f_n*g_m$ are non-negative,  $C^\infty$ spherically symmetric functions. Since $\tilde f_n$ is supported on the ball of radius $1.5$,  we can choose $\tilde f_{nm}$ as
 \beq
\tilde f_{nm}\equiv  \tilde f_n*g_m+\frac1{g(0)}\,\|\tilde f_n*g_m-\tilde f_n\|_\infty\, g
 \eeq
 and complete the proof. 
 \qed

\end{document}